\newcommand{\CV}{\mathcal{C}_V} 
\newcommand{\dCV}{\dot{\mathcal{C}}_V} 
\newcommand{\CF}{\mathcal{C}_F} 
\preprint{MIT-CTP/5329}
\title{General Bounds on Holographic Complexity}
\author{Netta Engelhardt}
\author{and \AA{}smund Folkestad}
\emailAdd{engeln@mit.edu}
\emailAdd{afolkest@mit.edu}
\affiliation{Center for Theoretical Physics, Massachusetts Institute of Technology, \\Cambridge, MA 02139, USA}
\abstract{We prove a positive volume theorem for asymptotically AdS spacetimes: the  maximal volume
    slice has nonnegative vacuum-subtracted volume, and the vacuum-subtracted volume vanishes if and
    only if the spacetime is identically pure AdS. Under the Complexity=Volume proposal, this constitutes a
        positive holographic complexity theorem. The result features a number of parallels with the positive energy
        theorem, including the assumption of an energy condition that excludes false vacuum decay (the AdS weak energy condition). Our
        proof is rigorously established in broad generality in four bulk dimensions, and we provide
        strong evidence in favor of a generalization to arbitrary dimensions.  Our techniques also yield
        a holographic proof of Lloyd's bound for a class of bulk spacetimes.  We further establish a
        partial rigidity result for wormholes: wormholes with a given throat size are more complex than
        AdS-Schwarzschild with the same throat size. 
    }
\begin{document}

    \maketitle

    \section{\label{sec:intro}Introduction}
    The interpretation of aspects of spacetime geometry in terms of quantum information theoretic
    quantities -- from (quantum) extremal surfaces \cite{RyuTak06,HubRan07,Wal12,FauLew13,Don13,EngWal14, BroGha19, EngPen21, EngPen21b} to holographic circuit complexity \cite{StaSus14, RobSta14, SusZha14, Ali15,BroRob15, BroRob15b, LehMye16, CouFis16, ChaMar16, CarMye16, CarCha17}  has been a critical aspect of recent progress towards a complete holographic description
    of the black hole interior. The Complexity=Volume
    dictionary entry \cite{StaSus14}, which is our focus in this article, was initially motivated in part by the potential of the maximal volume slice to probe the deep black hole interior. The proposal relates the circuit complexity $\mathcal{C}$ of preparing the CFT
    state $\ket{\psi(\tau)}$ from some reference state $\ket{R}$ and the bulk maximal volume slice $\Sigma_{\tau}$ anchored at $\tau$: \footnote{Here we can either take the perspective where postselection is allowed, i.e.\ non-unitary
        gates are allowed, or we can take this to compute the size of the
    circuit before including any postselection, and correct this by including the python's
    lunch~\cite{BroGha19}.}
    \begin{equation}\label{eq:CVdef}
    \begin{aligned}
        \mathcal{C}(\ket{\psi(\tau)}) = \max_{\Sigma_{\tau}}\frac{
            \vol[\Sigma_{\tau}] }{ G_N \ell} \equiv
        \mathcal{C}_V,
    \end{aligned}
    \end{equation}
    where $\ell$ is a reference length scale which we take to be the AdS radius
    $L$, $\vol[\Sigma_{\tau}]$ is the
    cutoff-regulated volume of the hypersurface $\Sigma_{\tau}$, and the maximum is
    taken over all bulk timeslices anchored at $\tau$ in the dual spacetime $(M, g)$. Here we have given the name $\mathcal{C}_{V}$ to the geometric quantity for convenience. This proposal has led to numerous
    insights on spacetime emergence in the deep interior, from the late-time behavior of wormhole
    volumes \cite{StaSus14, CarCha17}, to the connection between bulk depth and momentum~\cite{
    SusZha20,BarMar19, BarMar20, BarMar20b}, to the late-time breakdown of
    classical GR \cite{Sus20}, among others (e.g.~\cite{BouFef19}).
    
        Several aspects of the CV proposal remain ambiguous: the appropriate
    reference length scale $\ell$, a complexity distance measure,\footnote{That is, a
    set of allowed unitary gates, or in the case of Nielsen geometry \cite{NieDow06}, a cost function.} and a choice of reference
    state $\ket{R}$. It is natural to consider the vacuum $\ket{0}$ as reference state, but since $\mathcal{C}_V(\ket{0})$ is
    non-vanishing this is not feasible. However, we could instead identify
    $\mathcal{C}(\ket{\psi}, \ket{0})$ with the so-called complexity of formation 
    $\mathcal{C}_F(\ket{\psi})$ \cite{BroRob15b, ChaMar16}, which is just the
    vacuum-subtracted version of $\mathcal{C}_V$:
    \begin{equation}
    \begin{aligned}
        \mathcal{C}_F(\ket{\psi(\tau)}) \equiv \mathcal{C}_V(\ket{\psi(\tau)})- \mathcal{C}_V(\ket{0})
        = \max_{\Sigma_{\tau}} \frac{ \vol[\Sigma_{\tau}] - n \vol[\Sigma_{\rm
        AdS}] }{ G_N \ell },
    \end{aligned}
    \end{equation}
    where $n$ is the number of disjoint conformal boundaries on which the CFT lives.
        For sufficiently fast matter falloffs this quantity is finite
        as near-boundary cutoff is removed
        (see Sec.~\ref{sec:lowerbound}). 
    
    It is
    immediately clear that this interpretation of $\mathcal{C}$ requires a highly nontrivial geometric consistency check: 
    \be \label{eq:CVpositivity}
    \CF(\ket{\psi}) \geq 0,
    \ee
    where equality holds if and only if
    $\Sigma_{\tau}$ is identically a maximal volume slice
    of pure AdS. The positivity of $\CF$ has been
    investigated in particular spacetimes~\cite{ChaMar16,FuMal18} and
    perturbations thereof~\cite{FloMie18,BerGal20}. 

    The statement~\eqref{eq:CVpositivity} is reminiscent of the positive energy
    theorem~\cite{SchYau79,SchYau81,Wit81,Wan01, ChrHer01, AndCai07, ChrGal21}, which
    in rough terms states that the mass of a spacetime with a well-behaved and complete initial data set
    satisfying the dominant energy condition is nonnegative; furthermore, it vanishes if and only if the
    data is identically the vacuum.  The positive energy theorem assumes either asymptotic
    flatness \cite{SchYau79,SchYau81,Wit81}, or asymptotic hyperbolicity \cite{Wan01, ChrHer01, AndCai07,ChrGal21}, and in the latter case it is sufficient to assume
    the weak energy condition.

    Is there an analogous equally general theorem that applies to ${\cal C}_{F}$ under the assumptions of regularity of the
    initial data and some energy condition? Here we prove rigorously for a broad class of spacetimes that the answer is yes, and that the relevant energy
    condition is the AdS weak curvature condition (to which we refer as the WCC for short):
    \begin{equation}
        \begin{aligned}\label{eq:WCC}
        \mathcal{E} \equiv t^a t^b \left(R_{ab} - \frac{ 1 }{ 2 }g_{ab}R - \frac{ d(d-1) }{ 2 L^2 }g_{ab} \right) \geq 0, \qquad
        \forall \text{ timelike } t^a,
    \end{aligned}
    \end{equation}
    which simplifies to the usual weak energy condition  $T_{ab}t^a t^b \geq 0$ in Einstein gravity
    (under an appropriate shift of a constant from the matter Lagrangian to the gravitational
    Lagrangian). Our positive (holographic) complexity theorem then states: 

    \vspace{0.4cm}

    \noindent \textit{Let $\Sigma$ be a set of regular and complete asymptotically AdS$_{4}$ initial data on a maximal volume slice
  anchored at a static slice of the conformal boundary,
    and with $n$ asymptotic boundaries. If $\Sigma$ satisfies the WCC, then it has nonnegative vacuum-subtracted volume:}
    $$\vol[\Sigma]-n \vol[\Sigma_{\mathrm{AdS}}]\geq 0, $$
    \textit{provided each outermost minimal surface on $\Sigma$ is connected.}
    \textit{Equality holds if and only if $\Sigma$ is a static slice of pure AdS}. 

    \vspace{0.4cm}

    \noindent Thus, among all asymptotically AdS$_{4}$ spacetimes satisfying the WCC and our assumption on minimal
    surfaces, the vacuum is
    strictly the simplest: like the positive energy theorem, our positive complexity theorem provides a
    new vacuum rigidity result. 
    
    The theorem is proved in Sec.~\ref{sec:proofs} via the application of a powerful mathematical result by Brendle and
    Chodosh~\cite{BreCho14, Cho14} that grants control over volumes of asymptotically hyperbolic Riemannian manifolds. 
    For technical reasons, the result applies for maximal volume slices where each outermost minimal surface (see
    Sec.~\ref{sec:lowerbound}) is connected, effectively restricting to spacetimes where for any connected component
    $\mathscr{I}$ of the conformal boundary, $\partial J^{-}[\mathscr{I}]\cap \Sigma$ is connected on the maximal volume slice $\Sigma$.

    Is the WCC actually necessary? Let us address this by taking inspiration from the positive energy
    theorem. Positivity of the mass and rigidity of the vacuum is indeed violated by false vacuum decay,
    which violates the dominant energy condition. We may expect a similar phenomenon here, as the WCC
     serves the same purpose here as the dominant energy condition: it
    excludes false vacuum decay and negative local energy densities. 
    Indeed, existing violations of positivity of ${\cal C}_{F}$ \cite{ChaGe18,BerGal20} do indeed violate our assumptions; we explore this in more detail in a companion article~\cite{EngFol21b}, where we will demonstrate that matter fields (e.g. scalar
    tachyons) violating the WCC  -- but satisfying the Null Energy Condition $T_{ab}k^{a}k^{b}\geq 0$ -- can indeed violate the positive complexity theorem, just as they can
violate the positive energy theorem \cite{HerHor04b}. Perhaps more
surprisingly, it also turns out that the inclusion of compact dimensions can violate positivity of ${\cal C}_{F}$, even when the WCC holds. Common to all of the violations in
\cite{EngFol21b} is the presence of VEVs for relevant CFT scalar primaries, and
so the results proven in this paper could in more general theories likely be
viewed as constraining how certain subsets of operators affect $\mathcal{C}_V$. 

What about our restriction to four dimensions? Unlike the WCC or absence of
nontrivial compact factors, we expect that this restriction can
be lifted. The generalization to arbitrary dimensions can be proven from a generalization of a
well-known mathematical conjecture by Schoen~\cite{Schoen};
without relying on conjectures, we give a proof under the assumption of spherical or planar
symmetry that $\CF$ is positive (again assuming the WCC). We also highlight an existing
mathematical theorem \cite{HuJi16} proving that sufficiently small (but finite) WCC-preserving deformations of pure AdS
result in positive $\CF$, thus establishing in all
dimensions that pure AdS is a local minimum of $\CF$ within the space of WCC-preserving
asymptotically AdS spacetimes. We consider the combination of our results, Schoen's conjecture and the
small-deformation results \cite{HuJi16} to be strong evidence in favor of the higher-dimensional generalization of the
positive complexity theorem. 

Rigidity of the vacuum implies that the vacuum has the lowest complexity of formation -- it is simplest, under assumption
of $\mathcal{C}\sim \vol[\Sigma]$; there is an analogous expectation that the thermofield double is a fairly simple state that minimizes holographic complexity
within some class. The corresponding AdS dual would be a proof that the
Schwarzschild-AdS geometry minimizes the volume within some class of wormholes
(see related work by~\cite{SheSta14}).    Such a result about ${\cal C}_{F}$ would in fact constitute a parallel with a less famous companion to the positive
energy theorem, the Riemannian Penrose Inequality, a lower bound on the mass given the existence of a minimal area surface \cite{HuiIlm01, Bra99,
Bra07, HusSin17}. More precisely, the area of a minimal surface on a maximal Cauchy slice in a spacetime of a given mass is conjectured to
be identical to that of Schwarzschild with the same mass if and only if the spacetime is identical to Schwarzschild.\footnote{Note that while some results are
established for the Riemannian Penrose Inequality in AdS~\cite{DahGic12,LopGir12,GeWan13,LeeNev13, HusSin17, Mar09}, the
full statement remains conjectural. We bring it up here to draw a parallel.} 

\begin{figure}
\centering
\includegraphics[width=0.7\textwidth]{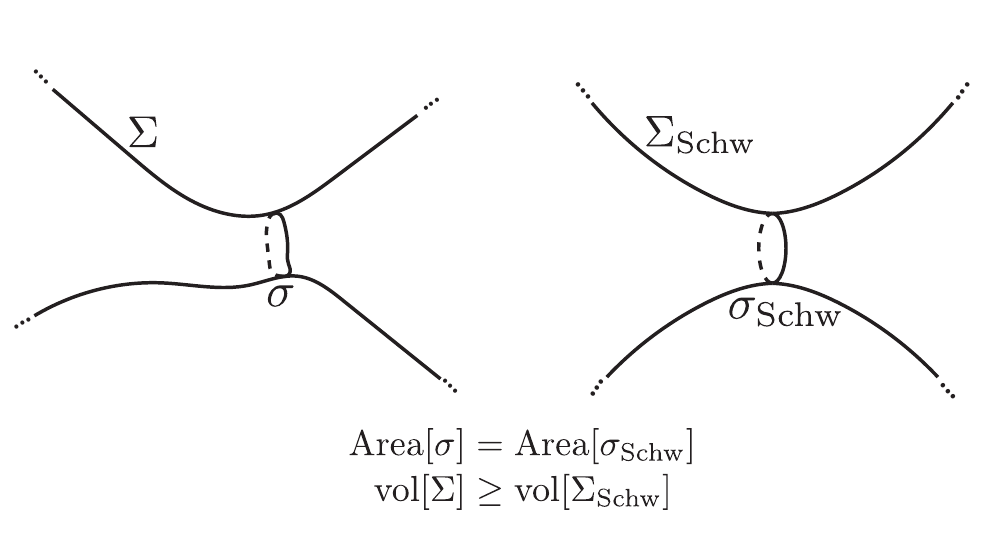}
\caption{Illustration of the volume comparison theorem at fixed throat area.
$\Sigma_{\rm Schw}$ is a static slice of Schwarzschild, while $\Sigma$ is a
maximal volume slice anchored at a static boundary time.}
\label{fig:wormholecomp}
\end{figure}

Can our techniques provide a rigorous backing to this expectation that Scwarzschild-AdS is simplest in some precise
sense? The analogy with the positive energy theorem would suggest that the answer is yes, and indeed it is: the same methodology used to prove ${\cal C}_{F}\geq 0$ yields a holographic complexity parallel of
the Penrose Inequality (not restricted to a moment of time symmetry). 
\vspace{0.4cm}

\noindent \textit{Let $\Sigma$ be a complete maximal volume slice in an asymptotically AdS$_{4}$
    spacetime
with two disconnected asymptotic boundaries, equipped with initial data
satisfying the WCC, and anchored at a static boundary slice. If $A$ is the
area of an outermost minimal-area surface on $\Sigma$, then}
$$\vol[\Sigma]-\vol[\Sigma_{\mathrm{Schw}}]\geq 0, $$
\textit{provided each outermost minimal surface on $\Sigma$ is connected.}
\textit{Equality holds if and only if $\Sigma$ is identical to $\Sigma_{\mathrm{Schw}}$, the static slice of AdS-Schwarzschild with throat area $A$.}

\vspace{0.4cm}

\noindent That is, the area of the throat fixes the lowest possible complexity (as computed
by the volume), which is that of Schwarzschild with the same throat area.
See Fig.~\ref{fig:wormholecomp} for an illustration. In general dimensions we prove this under the restriction of spherical or planar
symmetry. 
Once again, as we show in our companion
paper~\cite{EngFol21b}, the WCC is necessary: it is possible to construct classical wormhole solutions
satisfying the NEC with negative $\CF$.

Finally, we make use of the tools developed in
our proof of the positive
complexity theorem to investigate a closely-related conjecture regarding the growth of complexity:
Lloyd's bound \cite{Llo00}. This bound proposes that the rate of computation of a quantum system is bounded by an expression
proportional to energy, and in the form discussed in \cite{BroRob15, BroRob15b}, it reads
\begin{equation*}
\begin{aligned}
\frac{ \dd \mathcal{C} }{ \dd \tau } \leq c E,
\end{aligned}
\end{equation*}
for some constant $c$. A large portion of the holographic complexity literature is dedicated to the
investigation of this bound in specific spacetimes. In this work, for a class of
solutions, we establish a similar result rigorously. To be precise, in minimally coupled Einstein-Maxwell-Scalar
theory with spherical symmetry and more than three bulk dimensions (assuming the
WCC), the growth of holographic complexity is bounded by the spacetime mass $M$:
\begin{equation}\label{eq:lloydintro}
    \begin{aligned}
        \frac{ 1 }{ G_N L }\Big|\frac{\dd}{\dd \tau}\vol[\Sigma_{\tau}]\Big| \leq \frac{ 8\pi M }{ d-1 }f(M), \qquad 
        f(M) = \begin{cases}
        1 & M \leq \hat{M} \\
        1 + 2 \left(\frac{ M }{ \hat{M} }\right)^{\frac{ 1 }{ d-2 }} & M > \hat{M}
    \end{cases},
\end{aligned}
\end{equation}
where $\hat{M}$ is a mass scale near the Hawking-Page transition (see \eqref{eq:fexplicit} for the specific value), 
and $\Sigma_{\tau}$ a maximal volume slice anchored at round sphere on the boundary -- i.e. at
constant $\tau$ in the boundary metric $ - \dd \tau^2 + L^2\dd \Omega^2$.
The takeaway is that late time holographic complexity growth is at
most linear,\footnote{More precisely, there is a linearly growing upper bound
on holographic complexity.} with
the slope restricted by the mass. Since \cite{Aar16} showed that circuit complexity has a linearly growing upper bound in time (with an appropriate definition of complexity and time), this provides significant evidence for consistency of the CV proposal.

\begin{figure}
\centering
\includegraphics[width=0.5\textwidth]{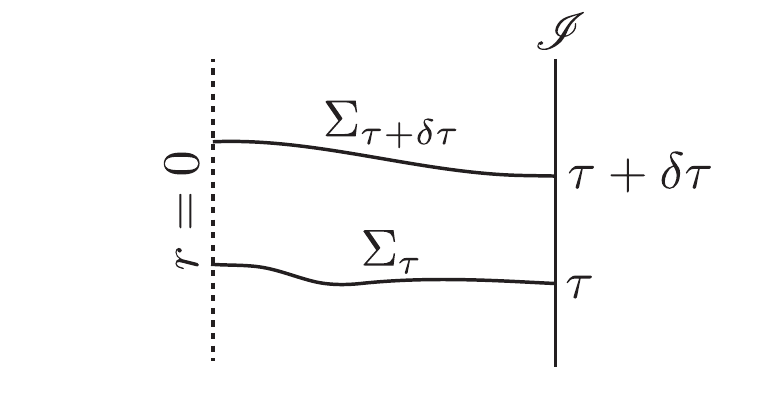}
\caption{Illustration of two static anchored maximal volume slices separated by
boundary static time $\delta \tau$. \eqref{eq:lloydintro} provides a speed limit
on the growth of $\frac{\dd\vol[\Sigma_{\tau}]}{\dd \tau}$.}
\label{fig:sliceperturb}
\end{figure}

The paper is structured as follows. In Sec.~\ref{sec:lowerbound} we state our proven
lower bounds on complexity and proven Penrose inequality. The proofs are then given in a separate subsection, which can be skipped
by readers only interested in the primary results, with the caveats that (1) the technology
developed there reappears in the proof of Lloyd's bound and (2) some of the results
of the proof section are interesting on their own right. 
In Sec.~\ref{sec:lloyd} we discuss upper bounds on complexity growth, with a proof of \eqref{eq:lloydintro}. 
We also give a novel formula for complexity growth in spacetimes with spherical or
planar symmetry and with matter with compact spatial support (in the radial direction). This formula
is valid independent of the WCC and the theory considered, although when assuming the former it gives a novel upper bound on complexity
growth. Finally, in Sec.~\ref{sec:discussion} we discuss implications of our result.
A number of explicit computations are relegated to an appendix.

\section{\label{sec:lowerbound}The Positive Complexity Theorem} 
In this section, we prove the positive volume (complexity) theorem for AdS$_{4}$, state the necessary conjecture for the
proof of the statement in AdS$_{d+1\neq 4}$, and prove the general $d$ case for certain classes of bulk spacetimes. We
also state our Penrose inequality for spatially symmetric spacetimes. The reader only interested in the results may skip
section~\ref{sec:proofs}, which consists of
the proofs.

\subsection{The Vacuum is Simplest}\label{sec:vacsimple}
We work with maximal volume slices $\Sigma$ anchored on static boundary time slices: that is, in the conformal frame where the boundary manifold has coordinates
\begin{equation}
    \dd s^2 = - \dd \tau^2 + L^2 \dd Q^2, 
    \end{equation} 
with $\dd Q^2$ the metric of a maximally symmetric spacelike manifold (i.e. a sphere, plane, or hyperbolic space), 
we take our maximal volume slices to be anchored at constant $\tau$, which we shall refer to as
\textit{static-anchored}; we further take $\Sigma$ to be a manifold without boundaries or
singularities -- a \textit{complete} Cauchy slice.\footnote{$\Sigma$ will of course have a boundary in the conformal completion. The absence of singularities means
that the intrinsic geometry of $\Sigma$ is geodesically complete, which rules out intrinsic
curvature singularities on $\Sigma$ (unless they are distributional). Geodesic
completeness on $\Sigma$ is equivalent to the statement that all closed and bounded sets on $\Sigma$ are compact~\cite{HopRin31}.}
In particular, this means that our results do not apply to spacetimes terminating at
end-of-the-world branes, in which Cauchy slices are manifolds with boundary.

Next, in order to discuss volume-regularization and assumptions on falloffs, it is useful to
recall that for any choice of representative of the boundary conformal structure \cite{FefGra85, GraLee91,
GraWit99}, the 
Fefferman-Graham gauge gives a unique coordinate system in a neighborhood of $\mathscr{I}$:
\begin{equation}
\begin{aligned}
\dd s^2 = \frac{ L^2 }{ z^2 }\left[\dd z^2 + \gamma_{\mu\nu}(z, x) \dd x^{\mu} \dd x^{\nu}\right],
\end{aligned}
\end{equation}
where the conformal boundary is at $z=0$, and where $\gamma_{\mu\nu}(z=0, x)$ is the chosen
representative of the boundary conformal structure. The resulting Fefferman-Graham expansion reads
\begin{equation}
\begin{aligned}
    \gamma_{\mu\nu}(z, x) = \gamma_{\mu\nu}^{(0)}(x) + z^2\gamma_{\mu\nu}^{(2)}(x) + \ldots +
    \gamma_{\mu\nu}^{(d)}z^{d} + z^{d}\log z\bar{\gamma}^{(d)}_{\mu\nu} + \ldots.
\end{aligned}
\end{equation}
We will assume that matter falloffs are sufficiently slow so that the $\gamma_{\mu\nu}^{(n)}$ for $n<d$,
under the equations of motion, only depend on the boundary geometry $\gamma^{(0)}_{\mu\nu}$ and its
derivatives. This is the case for pure Einstein-gravity with a negative cosmological constant, and
also upon the inclusion various standard forms of matter appearing in holography \cite{deHSolSke00}
(provided we only turn on normalizable modes). We will call a spacetime asymptotically AdS (AAdS) whenever it has the above falloffs. This ensures that (1) a finite gravitational mass can be defined from the asymptotic value of
the metric (as in \cite{BalKra99b}), and (2) the divergence structure of maximal volumes slices is universal and coincident with that of a static slice of pure AdS. Once the boundary conformal
frame is fixed, this provides a canonical way to regularize
the volume: cut off the spacetime at $z=\epsilon$ for some small $\epsilon>0$. After taking
the volume difference with the regularized volume of a static slice of pure AdS, the
result remains finite as $\epsilon\rightarrow 0$. 

Finally, before turning to our results, we introduce one additional definition.
We will say that a compact surface $\sigma$ on $\Sigma$ is \textit{outermost minimal} if (1) $\sigma$
divides $\Sigma$ into two connected components (with one of them possibly trivial):
$\Sigma_{\mathrm{in}}$ and $\Sigma_{\mathrm{out}}$, where $\Sigma_{\mathrm{out}}$ intersects the conformal boundary
$\mathscr{I}$ in the conformal completion on a complete Cauchy slice of a single connected component
of $\mathscr{I}$, 
(2) $\sigma$ is a local minimum of the area functional on $\Sigma$, and (3) there is no other
surface satisfying (1) and (2) that is contained in $\Sigma_{\mathrm{out}}$. We note that (1) outermost minimal
surfaces have the property that every surface in $\Sigma_{\mathrm{out}}$ has area greater than $\sigma$ \cite{Bra99} and
(2) such surfaces always exist in wormhole geometries (see proof of Theorem~\eqref{thm:mainthm_d3}).

Our first theorem is a lower bound on maximal volume in four bulk dimensions. 
\begin{thm}\label{thm:mainthm_d3}
Let $\Sigma$ be a static-anchored complete maximal volume slice of an asymptotically globally\footnote{By this we mean that the bulk approaches global AdS$_{4}$ at the asymptotic boundary, so that $Q$ is a sphere.} 
AdS$_{4}$ spacetime $(M,g)$ satisfying the WCC~\eqref{eq:WCC}. If the conformal boundary of $(M, g)$ has $n$ connected
    components $\mathscr{I}_i$, and the outermost minimal surface in $\Sigma$ with respect to each $\mathscr{I}_i$
    is connected (if non-empty), then the volume of $\Sigma$ is bounded from below by the volume of $n$ static slices $\Sigma_{\mathrm{AdS}_{4}}$ of pure AdS$_{4}$:
\be
\vol[\Sigma]-n \vol[\Sigma_{\mathrm{AdS}_{4}}]\geq 0,
\ee 
    with equality if and only if $(M,g)$ is identical to pure AdS$_{4}$ on $\Sigma$. 
\end{thm}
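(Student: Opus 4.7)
The plan is to reduce the four-dimensional Lorentzian problem to a purely Riemannian volume comparison on the induced geometry $(\Sigma,h)$ and then apply the Brendle--Chodosh theorem \cite{BreCho14,Cho14} for asymptotically hyperbolic three-manifolds with scalar curvature bounded below by $-6/L^{2}$. The reduction proceeds as follows. Maximality of $\Sigma$ forces the mean curvature to vanish, $K=h^{ab}K_{ab}=0$, so the Hamiltonian (contracted Gauss) constraint simplifies to
\begin{equation*}
R^{(3)} \;=\; 2\,G_{\mu\nu}n^{\mu}n^{\nu} + K_{ab}K^{ab},
\end{equation*}
where $R^{(3)}$ is the intrinsic scalar curvature of $\Sigma$ and $n^{\mu}$ is the unit timelike normal. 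Applying the WCC~\eqref{eq:WCC} with $t^{\mu}=n^{\mu}$ (and $g_{\mu\nu}n^{\mu}n^{\nu}=-1$) gives $G_{\mu\nu}n^{\mu}n^{\nu}\geq -3/L^{2}$; combined with $K_{ab}K^{ab}\geq 0$, this yields the needed intrinsic bound $R^{(3)}\geq -6/L^{2}$. The AAdS Fefferman--Graham falloffs guarantee, in addition, that $(\Sigma,h)$ is asymptotically hyperbolic of curvature $-1/L^{2}$ with precisely the divergence structure of a static slice of pure AdS$_{4}$, so the vacuum-subtracted volume is finite as $\epsilon\to 0$.

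To handle multiple boundary components I would decompose $\Sigma$ along the outermost minimal surfaces. For each $\mathscr{I}_{i}$, let $\sigma_{i}$ be the outermost minimal surface with respect to $\mathscr{I}_{i}$ (possibly empty only when $n=1$, in which case $\Sigma$ is a single complete asymptotically hyperbolic manifold), and let $\Sigma_{\mathrm{out},i}$ denote the region of $\Sigma$ bounded by $\sigma_{i}$ containing $\mathscr{I}_{i}$. Because each $\sigma_{i}$ separates $\mathscr{I}_{i}$ from every other conformal end, the $\Sigma_{\mathrm{out},i}$ are pairwise disjoint, and
\begin{equation*}
\vol[\Sigma] \;=\; \sum_{i=1}^{n}\vol[\Sigma_{\mathrm{out},i}] + \vol[\Sigma_{\mathrm{center}}],
\end{equation*}
with $\Sigma_{\mathrm{center}}:=\Sigma\setminus \bigsqcup_{i}\Sigma_{\mathrm{out},i}$ having non-negative volume. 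Each $\Sigma_{\mathrm{out},i}$ is asymptotically hyperbolic with connected minimal inner boundary (this is exactly the theorem's connectedness hypothesis on $\sigma_{i}$) and inherits the bound $R^{(3)}\geq -6/L^{2}$ from the previous step.

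Applying Brendle--Chodosh to each $\Sigma_{\mathrm{out},i}$ gives $\vol[\Sigma_{\mathrm{out},i}]\geq \vol[\Sigma_{\mathrm{AdS}_{4}}]$; summing and using $\vol[\Sigma_{\mathrm{center}}]\geq 0$ produces the claimed inequality. For rigidity, saturation forces $\vol[\Sigma_{\mathrm{center}}]=0$, so the central region is trivial; since a nontrivial wormhole topology produces a strictly positive central volume, this forces $n=1$. The rigidity clause of Brendle--Chodosh then identifies $(\Sigma,h)$ isometrically with $\mathbb{H}^{3}$, and chasing equality back through the Hamiltonian constraint gives $K_{ab}K^{ab}=0$, hence $K_{ab}\equiv 0$, together with saturation of the WCC along $n^{\mu}$. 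This is precisely the initial data of pure AdS$_{4}$ on $\Sigma$.

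The main obstacle is verifying compatibility of the Riemannian Brendle--Chodosh result with our holographic setup: its asymptotic-hyperbolicity hypotheses and its volume regularization must be matched to what the Fefferman--Graham falloffs supply, and its treatment of a connected minimal inner boundary must apply cleanly to each $\Sigma_{\mathrm{out},i}$. Secondary difficulties include proving existence of the outermost minimal surfaces in the wormhole case and arguing that the connectedness hypothesis is precisely what is needed to close the decomposition argument. A final subtle step is the rigidity analysis, where one must promote a Riemannian isometry $(\Sigma,h)\cong\mathbb{H}^{3}$ together with $K_{ab}\equiv 0$ to a genuine identification of the data on $\Sigma$ with pure AdS$_{4}$ in any matter sector consistent with the WCC.
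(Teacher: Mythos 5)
Your strategy is the paper's: use maximality, the Gauss--Codazzi relation and the WCC to obtain $R[h]\geq -6/L^{2}$, verify asymptotic hyperbolicity from the Fefferman--Graham falloffs, decompose $\Sigma$ along outermost minimal surfaces, and apply Theorem~\ref{thm:BreCho} to each asymptotic arm. However, three steps you treat as immediate are genuine gaps. First, a single application of Theorem~\ref{thm:BreCho} to $\Sigma_{\mathrm{out},i}$ with nonempty minimal boundary $\sigma_i$ only yields $\vol[\Sigma_{\mathrm{out},i}]\geq \tfrac12 \vol[\Sigma_{\mathrm{Schw}[\sigma_i]}]$, a comparison to half a Schwarzschild-AdS slice with matching throat area, \emph{not} to pure AdS; to conclude $\tfrac12\vol[\Sigma_{\mathrm{Schw}}]\geq \vol[\Sigma_{\mathrm{AdS}_4}]$ one needs a second application of the theorem (equivalently, monotonicity of the renormalized half-Schwarzschild volume in $m$, with $m=0$ as the reference). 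Second, pairwise disjointness of the $\Sigma_{\mathrm{out},i}$ does not follow from the separation property alone: a priori $\sigma_i$ and $\sigma_j$ could intersect and ``weave'' through each other, in which case the outer regions overlap on a set of positive measure and $\vol[\Sigma]\geq\sum_i\vol[\Sigma_{\mathrm{out},i}]$ is unjustified. The paper excludes this with an area-swapping argument: if $U_i\subset\sigma_i$ lies in $\Sigma_{\mathrm{out},j}$ and $U_j\subset\sigma_j$ lies in $\Sigma_{\mathrm{out},i}$, outermost minimality of $\sigma_i$ forces $\Area[U_i]\leq\Area[U_j]$, which makes $(\sigma_j-U_j)\cup U_i$ no larger than $\sigma_j$, contradicting outermost minimality of $\sigma_j$. (Existence of the $\sigma_i$ for $n\geq 2$ also requires the MOTS existence results of \cite{AndMet07,AndEic10}, which you correctly flag but do not supply.)

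Third, your rigidity argument for excluding $n\geq 2$ is incorrect as stated: it is false that a nontrivial wormhole topology produces a strictly positive central volume --- in Schwarzschild-AdS the outermost minimal surfaces of the two ends coincide at the bifurcation surface and the central region is empty. The correct mechanism is that for $n\geq 2$ each $\sigma_i$ is nonempty with positive area, so $\vol[\Sigma_{\mathrm{out},i}]\geq\tfrac12\vol[\Sigma_{\mathrm{Schw}[\sigma_i]}]>\vol[\Sigma_{\mathrm{AdS}_4}]$ strictly, by strict monotonicity of the half-Schwarzschild volume in mass; equality is therefore only attainable for $n=1$ with $\sigma_1=\varnothing$, after which the rigidity clause of Theorem~\ref{thm:BreCho} together with the constraint gives $h\cong h_0$, $K_{ab}=0$ and $\mathcal{E}=0$, as you say. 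All three issues are repairable, but each is a missing argument rather than a routine verification.
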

\noindent As mentioned above, while the volumes are formally infinite, comparison with the vacuum is well defined and finite; the vacuum-subtracted volume is finite and positive as the regulator is removed. 

Without the holographic interpretation of the vacuum-subtracted volume as the complexity of
formation, this is a positive (renormalized) volume theorem. With ${\cal C}\propto \vol[\Sigma]$
with some fixed proportionality constant, however, we have the desired positive complexity result: given a state $\ket{\psi
(\tau)}$ of a holographic CFT$_{3}$ on $n$ copies of the static cylinder, the complexity of
$\ket{\psi(\tau)}$ is bounded from below by the complexity of the vacuum $\ket{0}$ on $n$ disjoint spheres: 
    \begin{equation}
    \begin{aligned}
        \mathcal{C}(\ket{\psi}) \geq \mathcal{C}(\ket{0}^{\otimes n}),
    \end{aligned}
    \end{equation}
    with equality if and only if $\ket{\psi(\tau)}=\ket{0}^{\otimes n}$. The complexity of formation is
    guaranteed to be positive for any bulk spacetime satisfying the WCC, and with a connected outermost minimal
    surface. The latter will be true for a single $n$-sided black hole (i.e.\ a wormhole when $n\geq 1$).

As noted in Sec.~\ref{sec:intro}, the rigidity statement of the above theorem bears a strong parallel to vacuum
rigidity following from the positive energy theorem:  under assumptions of the latter,
Minkowkski space \cite{SchYau79, SchYau81} is the unique asymptotically flat spacetime with
vanishing mass. Analogous positive energy results have been established for pure
AdS \cite{Wan01, ChrHer01, AndCai07,ChrGal21} under the assumption of the WCC.  From the
statement of Theorem~\ref{thm:mainthm_d3}, it is clear that a (WCC-satisfying) excitation of volume above the pure AdS value is possible if and
only there is a corresponding non-zero mass: so the uniqueness of the minimal mass spacetime ensures
the uniqueness of the minimal volume spacetime.

   For bulk dimensions other than four, the technical results are restricted to either (1) have
   spherical or planar symmetry,\footnote{Meaning that $(M, g)$ can
    be foliated by a family of codimension$-2$ surfaces left invariant by an algebra of isometries equal to that of the
$d-1$ dimensional sphere or plane. The codimension$-2$ submanifolds $\sigma$
invariant under these symmetries are isometric to the sphere or plane, or
quotients thereof.} or (2) consist of small but finite deformations of the vacuum. Our general expectation is nevertheless that the
results of four bulk dimensions continue to hold more generally. However, in the absence of a
general proof to this effect, we state the lower bounds on complexity of formation in general dimensions as
rigorously as possible, starting with spacetimes with spatial symmetry:\footnote{In fact, this theorem would also apply to hyperbolic symmetry if it could be shown analytically that static hyperbolic black holes have
volumes greater than the vacuum. This is easily checked to be true numerically, but given the
reliance on a numerical computation this result is not rigorously
established. }
\begin{thm}\label{thm:mainthm_dall}
Let $\Sigma$ be a static-anchored complete maximal volume slice of an asymptotically AdS$_{d+1\geq
3}$ spacetime $(M,g)$ with spherical or planar symmetry, and assume $(M, g)$ satisfies the WCC. 
If the conformal boundary of $(M, g)$ has $n$ connected components, then the vacuum-subtracted volume of 
$\Sigma$ is nonnegative:
\be
\vol[\Sigma]- n \vol[\Sigma_{\mathrm{AdS}_{d+1}}] \geq 0,
\ee 
with equality if and only if $(M,g)$ is identical to $n$ copies of pure AdS$_{d+1}$ on $\Sigma$. 
\end{thm}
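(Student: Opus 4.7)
The plan is to leverage the spherical/planar symmetry to reduce the higher-dimensional problem to a one-dimensional comparison, bypassing the Brendle--Chodosh technology required for the fully general AdS$_4$ proof. By uniqueness of maximal slices anchored on symmetric boundary data, $\Sigma$ inherits the spatial symmetry, so it can be parametrized by the area-radius $r$ with induced metric of the form $h = \alpha(r)^{-1}\,dr^2 + r^2\,d\Sigma_{d-1}^2$, where $d\Sigma_{d-1}^2$ is the round metric on the sphere or plane. The volume then reduces to a one-dimensional integral $\omega_{d-1}\int r^{d-1}/\sqrt{\alpha(r)}\,dr$, and the whole theorem becomes a statement about bounds on $\alpha$.

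The key tool is a quasi-local mass function $m(r)$ defined implicitly by
\begin{equation*}
\alpha(r) = 1 + \frac{r^2}{L^2} - \frac{16\pi G\, m(r)}{(d-1)\,\omega_{d-1}\, r^{d-2}}.
\end{equation*}
I would combine the Hamiltonian constraint with the maximality condition $\mathrm{tr}\, K=0$ and the WCC to derive the key monotonicity $m'(r)\geq 0$: contracting Gauss--Codazzi with the unit normal gives $^{(d)}R = K_{ij}K^{ij} + 2\mathcal{E} - d(d-1)/L^2 \geq -d(d-1)/L^2$ upon using WCC and discarding the non-negative $K_{ij}K^{ij}$. Specializing $^{(d)}R$ to the symmetric ansatz expresses this inequality exactly as $m'(r)\geq 0$. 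For the single-boundary case, regularity at the origin forces $m(0)=0$, hence $m(r)\geq 0$ globally, hence $\alpha(r)\leq 1 + r^2/L^2$ pointwise. The volume integrand therefore dominates the pure AdS integrand at every $r$, and integration establishes $\vol[\Sigma]\geq \vol[\Sigma_{\mathrm{AdS}_{d+1}}]$. Rigidity follows because equality requires $m \equiv 0$ and $K_{ij}K^{ij}\equiv 0$, i.e., pure AdS with vanishing extrinsic curvature.

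For multiple asymptotic boundaries, the outermost minimal surfaces guaranteed by Theorem~\ref{thm:mainthm_d3}'s existence argument become throats where $\alpha(r_\star)=0$, equivalently $m(r_\star)>0$. On each connected component of $\Sigma$ extending from a throat to a conformal boundary, monotonicity of $m$ gives $\alpha(r)\leq f_{\mathrm{Schw}}(r; r_\star)$ pointwise, where $f_{\mathrm{Schw}}$ is the AdS-Schwarzschild blackening factor with horizon at $r_\star$. This yields the per-component Penrose-type bound; summing and invoking the direct (and by now straightforward) computation that $\vol[\Sigma_{\mathrm{Schw}}]\geq 2\vol[\Sigma_{\mathrm{AdS}_{d+1}}]$ for any throat size finishes the two-boundary case, and the general $n$-boundary case follows by additivity over components. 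The main obstacle I foresee is \emph{Step 2}: the spacetime is not assumed static, so the induced geometry on $\Sigma$ carries non-trivial extrinsic-curvature data, and one must verify that the conversion from the Hamiltonian constraint into the simple statement $m'(r)\geq 0$ survives without the $K_{ij}K^{ij}$ term spoiling the inequality. Fortunately, since this term enters with the correct sign after imposing maximality, it can simply be dropped, but care is needed because dropping it tightens the rigidity argument: one must retain that $K_{ij}K^{ij}\equiv 0$ is forced at saturation in order to conclude the slice is a static slice of pure AdS rather than merely a maximal slice of some WCC-saturating spacetime.
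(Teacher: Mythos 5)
Your proposal is essentially the paper's own proof. Your quasi-local mass $m(r)$ is, up to normalization, the function $\omega(r)$ of Lemma~\ref{lem:flow}; the monotonicity $m'(r)\geq 0$ from Gauss--Codazzi plus maximality plus the WCC is exactly Eq.~\eqref{eq:omegamaster}; and the throat comparison to the AdS--Schwarzschild blackening factor is Theorem~\ref{thm:mainComparison}.

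Two points are thinner in your write-up than in the paper and do need to be supplied. First, in the single-boundary case your pointwise comparison of integrands presumes that the area radius $r$ is a global coordinate running from $0$ to $\infty$; this fails if $\Sigma$ contains stationary symmetric surfaces (bag-of-gold geometries), where $B(r)^{-1}=\alpha(r)$ vanishes and the chart breaks down. The paper handles this by showing there is always an \emph{outermost} stationary surface beyond which a single patch covers $\Sigma$, discarding the interior volume, and then running the same throat argument you use for multiple boundaries; you should do likewise rather than invoke $m(0)=0$. Second, the final ingredient --- that a totally geodesic slice of Stat$_k$ with any positive mass has renormalized volume at least $2\vol[\Sigma_{\mathrm{AdS}_{d+1}}]$, strictly so for $m>0$ --- is not ``direct and by now straightforward.'' For planar symmetry it rests on the explicit computation of~\cite{ChaMar16}; for spherical symmetry in general $d$ it is the content of Appendix~\ref{app:Ik}, which establishes $\partial_m I_1(m)>0$ by a nontrivial generalization of an argument of Brendle--Chodosh. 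The strictness of that inequality is also what closes the rigidity statement in the multi-boundary case (equality at a genuine throat would give Schwarzschild, whose volume strictly exceeds $2\vol[\Sigma_{\mathrm{AdS}_{d+1}}]$, so saturation forces the absence of throats and hence $n$ copies of pure AdS); your rigidity discussion as written only covers the single-boundary case.
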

\noindent The upshot of this theorem is that any AAdS spacetime with spherical or planar symmetry (but arbitrary time
dependence) that satisfies the WCC has a positive complexity of formation unless it is exactly
 pure AdS on the maximal volume slice, and thus also in its domain of dependence.\footnote{Note that this result holds even for AAdS spacetimes that do not have the
falloffs assumed throughout this paper, although in this case
$\vol[\Sigma]-n\vol[\Sigma_{\mathrm{AdS}_{d+1}}]$ is positive and divergent as the
regulator is removed. In this case the spacetime in some sense has infinite mass, since the
nonstandard falloffs result in a divergent Balasubramanian-Kraus \cite{BalKra99b} stress tensor.} 
Note that spherically symmetric spacetimes should be compared with global AdS, and planar
with Poincar\'e-AdS.

A known mathematics theorem immediately yields the desired result for perturbations around pure AdS in general dimensions:
\begin{thm}[\cite{HuJi16}]\label{thm:mainthm_dall_pert} 
    Let $\delta g$ be a metric perturbation 
    to global pure AdS$_{d+1}$, where the perturbation approaches zero sufficiently quickly to not
    affect the asymptotics~\cite{HuJi16}. A sufficiently small 
    such $\delta g$ satisfying the WCC cannot increase the volume of static-anchored maximal-volume slices. 
\end{thm}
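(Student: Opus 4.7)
The plan is to treat this as a second-order statement about the maximal-volume functional near pure AdS, exploiting that the canonical $t=\text{const}$ slice $\Sigma_{\mathrm{AdS}}$ is manifestly maximal by static symmetry. I would Taylor expand $\vol[\Sigma(\epsilon)]$ for a one-parameter family of perturbed metrics $g(\epsilon)=g_{\mathrm{AdS}}+\epsilon\, h+O(\epsilon^2)$ together with $\Sigma(\epsilon)$ the associated static-anchored maximal slice, and show that the first variation vanishes while the second variation has the sign dictated by the WCC. The perturbation $h$ is restricted to have fast enough fall-off to leave the asymptotics (and hence the regulation procedure) unchanged, so that the regulated difference $\vol[\Sigma(\epsilon)]-\vol[\Sigma_{\mathrm{AdS}}]$ is a smooth function of $\epsilon$.

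First I would expand at linear order and use extremality of $\Sigma_{\mathrm{AdS}}$ to kill the piece coming from the motion of the slice, leaving $\tfrac12\int_{\Sigma_{\mathrm{AdS}}} h^{ij}\delta h_{ij}\, dV$ where $\delta h_{ij}$ is the pullback of $h$ to $\Sigma_{\mathrm{AdS}}$. Using the linearized Hamiltonian and momentum constraints (linearizations of the background Einstein equations about pure AdS, which saturates $\mathcal{E}=0$ in~\eqref{eq:WCC}) together with the normalizable fall-off on $h$, this should reduce to a boundary integral whose regulated value vanishes as the cutoff is removed. This is the direct analogue of the vanishing of the first-order variation of the Balasubramanian--Kraus mass for normalizable perturbations.

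At second order, the variation of $\vol$ is a sum of three pieces: (i) the direct second variation of the volume form at fixed slice, (ii) a cross term between the first-order slice deformation and the first-order metric perturbation, and (iii) the contribution of the second-order slice deformation, which again vanishes by extremality. The first-order slice deformation is determined by a Jacobi-type elliptic equation — the linearization of the vanishing-mean-curvature condition — sourced by $h$; solving it and integrating by parts reduces the full second variation to a quadratic form in $h$ and its derivatives alone. On the other hand, expanding the WCC to second order on the background gives a pointwise non-negative quadratic in $h$ on $\Sigma_{\mathrm{AdS}}$ (since $\mathcal{E}=0$ at zeroth order and the first-order piece drops after imposing normalizability). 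Combining this pointwise positivity with the linearized constraints and the reduced quadratic form from the volume expansion, one should establish the definite sign claimed.

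The hard step will be the elliptic analysis in the previous paragraph: pure AdS has an infinite-dimensional group of asymptotic symmetries, producing zero modes of the Jacobi operator corresponding to pure-gauge deformations that must be quotiented out. One must carefully separate physical perturbations from residual diffeomorphisms preserving the asymptotic frame, show that the quadratic form obtained by eliminating the slice-deformation variables is positive-definite on the physical configuration space (not merely semi-definite), and verify that the spectrum of the Jacobi operator on the AdS static slice is compatible with the imposed normalizability condition on $h$. This spectral and boundary-term analysis on hyperbolic space — together with matching to the Fefferman--Graham expansion to ensure that the boundary integrals actually vanish — is where essentially all the technical effort lies, and is presumably the content of~\cite{HuJi16}.
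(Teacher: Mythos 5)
First, a point of comparison: the paper does not actually prove this statement --- Theorem~\ref{thm:mainthm_dall_pert} is imported wholesale from \cite{HuJi16}. The only proof content on the paper's side is the reduction that makes the citation applicable: maximality ($K=0$) plus the WCC~\eqref{eq:WCC} plus Gauss--Codazzi force the \emph{intrinsic} metric of the slice to satisfy $R[h]\geq -d(d-1)/L^2$, and asymptotic hyperbolicity (cf.\ Lemma~\ref{lem:ashyp}) puts the slice in the class covered by the purely Riemannian volume-comparison theorem of \cite{HuJi16} for small deformations of hyperbolic space with that scalar-curvature lower bound. Your spacetime second-variation scheme never performs this reduction and is a genuinely different route; that by itself would be acceptable if the scheme were sound.

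It is not, for a concrete reason: the first variation of the renormalized volume does \emph{not} vanish for general normalizable WCC-preserving perturbations. Pure AdS is not a critical point of the renormalized volume on the space of metrics; it is a minimum on the WCC cone because the linear term already carries a sign. You can see this from the paper's own machinery: in the symmetric case the maximal slice is parametrized by a mass function $\omega(r)$ obeying \eqref{eq:omegamaster}, $(d-1)\omega'/r^{d-1}=K^{\alpha\beta}K_{\alpha\beta}+2\mathcal{E}$. At linear order in $\delta g$ the $K^2$ term is negligible but $\delta\mathcal{E}$ is generically nonzero (e.g.\ a static star whose energy density is of the same order as $\delta g$), so $\omega_1(r)=\tfrac{2}{d-1}\int_0^r\rho^{d-1}\,\delta\mathcal{E}\,\dd\rho$ and the renormalized volume shifts at first order by $\Omega_k\int_0^\infty r\,\omega_1(r)\,[2(k+r^2/L^2)^{3/2}]^{-1}\dd r$, which is finite and strictly positive whenever $\delta\mathcal{E}\not\equiv 0$. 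Your proposed step ``use the linearized constraints to reduce the first variation to a vanishing boundary term'' fails precisely because the linearized Hamiltonian constraint carries the source $2\,\delta\mathcal{E}$. The correct perturbative logic is therefore ``the first-order term is nonnegative by the WCC, and when the first-order energy density vanishes the second-order contributions ($K_{ab}K^{ab}$ plus the second-order $\mathcal{E}$) are again nonnegative,'' not ``first variation vanishes, second variation is sign-definite''; your Jacobi-operator and zero-mode analysis is aimed at the wrong object. Two smaller issues: the theorem concerns finite (if small) $\delta g$, so even a corrected order-by-order argument requires uniform control of the Taylor remainder in the norm of \cite{HuJi16}; and you should fix which sign you are proving --- despite the literal wording ``cannot increase,'' the content (and the surrounding claim that pure AdS is a local minimum of $\mathcal{C}_F$) is that such a deformation cannot \emph{decrease} the renormalized volume.
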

\noindent In this result $\delta g$ need not be infinitesimal, as long as it is sufficiently small in an appropriate norm \cite{HuJi16}. 

\subsection{A Lower Bound on Wormhole Volume}\label{sec:TFDsimple}
In four bulk spacetime dimensions, we also prove the analogous result about wormhole complexity:
\begin{thm}\label{thm:tfd_d3}
    Let $(M,g)$ be a WCC-satisfying, connected, globally asymptotically AdS$_{4}$ spacetime with two conformal
    boundaries.
    Let $\Sigma$ be a static-anchored complete maximal volume slice in $M$, and let $\sigma$ be the globally
    minimal surface on $\Sigma$.\footnote{Such a surface must exist. See the proof of
    Theorem~\ref{thm:mainthm_d3}.} Then, if the outermost minimal surfaces in $\Sigma$ with respect to each component is connected,
\begin{equation}\label{eq:throatineq}
    \begin{aligned}
        \vol[\Sigma]\geq \vol[\Sigma_{\mathrm{Schw}}],
    \end{aligned}
    \end{equation}
    where $\Sigma_{\mathrm{Schw}}$ is the totally geodesic slice of Schwarzschild-AdS whose bifurcation surface 
    has area given by $\mathrm{Area}[\sigma]$.  
\end{thm}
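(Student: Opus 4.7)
\bigskip
\noindent\textbf{Proof plan for Theorem~\ref{thm:tfd_d3}.}

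The plan is to mimic the proof strategy of Theorem~\ref{thm:mainthm_d3}, but instead of applying the Brendle--Chodosh asymptotically hyperbolic volume comparison to $\Sigma$ as a whole with pure-AdS as the model, I would cut $\Sigma$ along the globally minimal surface $\sigma$ and apply the version of Brendle--Chodosh~\cite{BreCho14,Cho14} that accommodates a minimal inner boundary, using a half-slice of Schwarzschild--AdS$_{4}$ with bifurcation area $\mathrm{Area}[\sigma]$ as the model. First, I would justify that $\sigma$ exists and is well-defined: in a connected two-sided wormhole the outermost minimal surface with respect to each boundary is produced by the standard minimizing-hull/exterior-region construction, and by the assumed connectedness of each outermost minimal surface these two surfaces are both connected. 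A short argument (essentially a mean-curvature flow / comparison argument of the type reviewed in the proof of Theorem~\ref{thm:mainthm_d3}) then shows that they must coincide and equal the globally minimal surface $\sigma$; in particular $\sigma$ is connected and separates $\Sigma$ into two asymptotically hyperbolic pieces $\Sigma_{+}$ and $\Sigma_{-}$, each with $\sigma$ as a stable minimal boundary.

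Next I would install the curvature hypothesis needed to feed $\Sigma_{\pm}$ into Brendle--Chodosh. Because $\Sigma$ is static-anchored and maximal, the trace of its extrinsic curvature $K$ vanishes. Combining the Gauss--Codazzi relation $2\mathcal{E}=R_{\Sigma}-K_{ab}K^{ab}+K^{2}+d(d-1)/L^{2}$ with $K=0$ and the WCC $\mathcal{E}\geq 0$ gives the pointwise bound $R_{\Sigma}\geq -6/L^{2}$ on each piece. This is exactly the scalar-curvature assumption under which Brendle--Chodosh's asymptotically hyperbolic volume comparison holds in three dimensions; the asymptotics inherited from AAdS$_{4}$ falloffs supply the required asymptotic hyperbolicity, and $\sigma$ supplies the required minimal inner boundary. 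Applying the theorem to each $\Sigma_{\pm}$ with the comparison model taken to be the corresponding half of the totally geodesic slice of Schwarzschild--AdS$_{4}$ whose bifurcation surface has area $\mathrm{Area}[\sigma]$, one obtains
\begin{equation*}
\vol[\Sigma_{\pm}]\;\geq\;\tfrac{1}{2}\vol[\Sigma_{\mathrm{Schw}}],
\end{equation*}
(each piece renormalized against a static slice of pure AdS, as usual). Summing the two inequalities yields the desired bound $\vol[\Sigma]\geq\vol[\Sigma_{\mathrm{Schw}}]$.

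For the rigidity statement I would invoke the equality case of Brendle--Chodosh: $\vol[\Sigma]=\vol[\Sigma_{\mathrm{Schw}}]$ forces equality on both $\Sigma_{\pm}$, hence each is isometric (as a Riemannian manifold) to the corresponding half of the Schwarzschild--AdS$_{4}$ constant-$t$ slice. Gluing along $\sigma$ and checking that the second fundamental forms also agree (which is automatic from $K=0$ and the isometric identification of the two sides) upgrades this to an isometry of $\Sigma$ onto $\Sigma_{\mathrm{Schw}}$.

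The main obstacle I expect is the step that identifies the outermost minimal surfaces with respect to each boundary and shows they coincide with the globally minimal $\sigma$, as well as verifying the precise technical hypotheses (regularity, separating, homologically outermost) under which the Brendle--Chodosh bound with a minimal boundary applies to both halves simultaneously. Everything else -- the curvature bound from maximality plus WCC, and the summation over the two halves -- is essentially a direct transcription of the argument used in Theorem~\ref{thm:mainthm_d3}.
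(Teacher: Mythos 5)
Your overall strategy -- apply Brendle--Chodosh to two exterior pieces of $\Sigma$ with a half-slice of Schwarzschild--AdS as the comparison model, using the curvature bound $R[h]\geq -6/L^2$ from maximality plus the WCC -- is the right one, and matches the paper. But there is a genuine gap in the step where you claim that the outermost minimal surfaces with respect to the two boundaries ``must coincide and equal the globally minimal surface $\sigma$.'' This is false in general: a wormhole slice can carry several minimal surfaces (the paper's Fig.~\ref{fig:omega} depicts precisely such a geometry with three stationary surfaces), with the outermost one on side $1$, the outermost one on side $2$, and the globally minimal one all distinct. Because Theorem~\ref{thm:BreCho} requires the inner boundary $\partial\Sigma_{\rm out}$ to be an \emph{outermost} minimal surface, you cannot cut along the globally minimal $\sigma$ and feed the two resulting pieces into Brendle--Chodosh: $\sigma$ need not be outermost with respect to either side, and there may be further minimal surfaces between $\sigma$ and each conformal boundary.

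The paper's actual route repairs exactly this: apply Theorem~\ref{thm:BreCho} to $\Sigma_{{\rm out},i}$, the region outside the outermost minimal surface $\sigma_i$ for each boundary $\mathscr{I}_i$, obtaining $\vol[\Sigma_{{\rm out},i}]\geq \tfrac12 \vol[\Sigma_{\mathrm{Schw}[\sigma_i]}]$ with the model's bifurcation area equal to $\Area[\sigma_i]$; then use the monotonicity of the renormalized Schwarzschild--AdS slice volume with respect to the bifurcation-surface area (equivalently mass), together with $\Area[\sigma_*]\leq\Area[\sigma_i]$ for the globally minimal $\sigma_*$, to get $\tfrac12\vol[\Sigma_{\mathrm{Schw}[\sigma_i]}]\geq\tfrac12\vol[\Sigma_{\mathrm{Schw}[\sigma_*]}]$. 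The two exterior regions overlap at most on a measure-zero set by the ``weaving'' argument from the proof of Theorem~\ref{thm:mainthm_d3}, so summing gives \eqref{eq:throatineq}. This monotonicity step is the ingredient missing from your proposal; once you insert it, you no longer need the (incorrect) coincidence claim. Separately, your rigidity discussion asserts that agreement of the second fundamental forms is ``automatic from $K=0$,'' but $K=0$ only fixes the trace, not the full $K_{ab}$ needed to conclude the slice is totally geodesic; note also that Theorem~\ref{thm:tfd_d3} as stated does not include a rigidity clause.
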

This theorem resembles a Riemannian Penrose
inequality~\cite{Pen73} for volumes, or alternatively, for complexity: the area of a minimal surface on the maximal
volume slice $\Sigma$ puts a lower bound on the complexity. The observation that Penrose-like
inequalites exist for volume was first made in \cite{Cho14}, in the context of
a result similar to Theorem~\ref{thm:BreCho}, which is the crucial in our proof of Theorems
\ref{thm:mainthm_dall_pert} and \ref{thm:tfd_d3}.

Similar techniques facilitate a proof of the above result in other dimensions under assumption of spatial symmetry. 
We will use the abbreviation Stat$_k$ to refer to the static AdS black hole with metric 
\begin{equation}
\begin{aligned}
    \dd s^2 = - \left(k + \frac{ r^2 }{ L^2 } - \frac{ m }{ r^{d-2} }\right)\dd t^2 + \frac{ \dd r^2
    }{ k + \frac{ r^2 }{ L^2 } - \frac{ m }{ r^{d-2} } } + r^2\dd Q_k^2,
\end{aligned}
\end{equation}
where $\dd Q_k^2$ is the metric of the $d-1$ dimensional unit sphere, plane, or unit hyperbolic
space for $k=1, 0, -1$, respectively. The constant $m$ is proportional to the mass. We include the hyperbolic case, since interesting intermediate
results in the next section apply for this case as well. We will use the term \textit{maximal spatial symmetry} to
refer to spacetimes with either spherical, planar or hyperbolic symmetry.

Under the assumption of maximal spatial symmetry, we prove the following:
\begin{thm}\label{thm:tfd_dall}
    Let $(M,g)$ be a connected asymptotically AdS$_{d+1\geq 3}$ WCC-satisfying spacetime with
    spherical or planar symmetry, and with a conformal boundary $\mathscr{I}$ with two connected components. Let
    $\Sigma$ be a static-anchored complete maximal
    volume slice in $M$, and take $\sigma \in \Sigma$ to be the globally minimal symmetric surface on
    $\Sigma$.\footnote{For a connected two-sided complete slice $\Sigma$ with the given symmetries,
    such a surface clearly exists.} Then
\begin{equation}
    \begin{aligned}
        \vol[\Sigma]\geq \vol[\Sigma_{\mathrm{Stat}}],
    \end{aligned}
    \end{equation}
    where $\Sigma_{\mathrm{Stat}}$ is the totally geodesic slice of $\mathrm{Stat}_k$ (depending on which symmetry $(M, g)$
    has) whose bifurcation surface has area $\mathrm{Area}[\sigma]$. 
\end{thm}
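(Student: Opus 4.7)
The plan is to use the symmetry to reduce the problem to a one-dimensional ODE comparison between the induced metric on $\Sigma$ and that of $\Sigma_{\mathrm{Stat}}$. Under spherical or planar symmetry, the induced geometry on $\Sigma$ is itself symmetric, and in a neighborhood of each symmetry orbit takes the form
\[
ds^2_\Sigma = \frac{dr^2}{f(r)} + r^2\, dQ_k^2,
\]
where $r$ is the area radius of the symmetric orbit and $f > 0$ wherever $r$ is not critical. Since $\sigma$ is the globally minimal symmetric surface on a connected, two-sided $\Sigma$, the area radius attains its minimum $r_0$ at $\sigma$ with $f(r_0)=0$, and increases monotonically on each of the two halves of $\Sigma$ out to the two asymptotic boundaries. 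The reference is $\Sigma_{\mathrm{Stat}}$ with $f_{\mathrm{Stat}}(r) = k + r^2/L^2 - m/r^{d-2}$, where $m = r_0^{d-2}(k + r_0^2/L^2)$ is fixed by the matching throat condition $f_{\mathrm{Stat}}(r_0) = 0$.

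The core technical step is monotonicity of a Misner--Sharp-type quasi-local mass. Define $M(r)$ by $f(r) = k + r^2/L^2 - M(r)/r^{d-2}$. A direct curvature computation in the symmetric ansatz gives
\[
R^{(\Sigma)} = -\frac{d(d-1)}{L^2} + \frac{(d-1)\, M'(r)}{r^{d-1}}.
\]
On the other hand, the Hamiltonian constraint combined with $K=0$ and the WCC (which reduces, under Einstein's equations, to $T_{ab} n^a n^b \geq 0$) yields $R^{(\Sigma)} \geq K_{ab}K^{ab} - d(d-1)/L^2$. Equating the two expressions,
\[
(d-1)\,\frac{M'(r)}{r^{d-1}} \;\geq\; K_{ab} K^{ab} + 16\pi G\, T_{ab} n^a n^b \;\geq\; 0,
\]
so $M(r)$ is nondecreasing outward on each half of $\Sigma$, with equality at a point if and only if both the trace-free extrinsic curvature and the local energy density vanish there. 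Since $M(r_0) = m$ by the throat condition, monotonicity gives $M(r) \geq m$ and equivalently $f(r) \leq f_{\mathrm{Stat}}(r)$ pointwise for all $r \geq r_0$ on each side.

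The volume comparison now follows pointwise: the volume element
\[
\sqrt{g_\Sigma}\, dr \wedge d\mathrm{vol}_{Q_k} = \frac{r^{d-1}}{\sqrt{f(r)}}\, dr \wedge d\mathrm{vol}_{Q_k}
\]
is at each $r$ at least as large as its counterpart on $\Sigma_{\mathrm{Stat}}$. Integrating from $r_0$ to infinity and summing over the two halves gives $\vol[\Sigma] \geq \vol[\Sigma_{\mathrm{Stat}}]$; the near-boundary regulator can be removed because both integrands share the identical leading divergence set by the common $r^2/L^2$ in $f$. Rigidity drops out for free: equality forces $M'(r) \equiv 0$, hence $K_{ab} \equiv 0$ and $T_{ab}n^a n^b \equiv 0$ on $\Sigma$, so the intrinsic metric agrees with $\Sigma_{\mathrm{Stat}}$ and the slice is totally geodesic, matching the bifurcation-surface slice of $\mathrm{Stat}_k$.

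The main obstacle will be justifying the global radial parameterization: a priori, the area radius could exhibit additional turning points between the throat and infinity, which would invalidate the clean $dr^2/f$ form in the large. This is controlled by invoking the \emph{global} minimality of $\sigma$ among symmetric surfaces, which rules out any competing interior minimum of $r$ on either side of the throat, together with the asymptotic AAdS behavior forcing $r \to \infty$ at the boundary. A secondary concern is the regularity of $M(r)$ at $r_0$ where $f$ vanishes, which is handled by the standard power-series expansion of $f$ around an extremal symmetric surface; and the pointwise comparison at large $r$ must be combined with the uniform divergence subtraction, which is straightforward once $f \leq f_{\mathrm{Stat}}$ is established globally.
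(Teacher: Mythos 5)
Your overall strategy---the symmetric ansatz $ds^2 = f(r)^{-1}dr^2 + r^2\, dQ_k^2$, the Misner--Sharp-type mass $M(r)$ with $(d-1)M'(r)/r^{d-1} = K_{ab}K^{ab} + 2\mathcal{E} \geq 0$, and the pointwise comparison of volume elements---is exactly the paper's (Lemma~\ref{lem:flow} and Theorem~\ref{thm:mainComparison}). The gap is in the step you yourself flag as ``the main obstacle'': you assert that global minimality of $\sigma$ forces the area radius to increase monotonically from the throat out to each boundary. It does not. Global minimality only excludes interior stationary surfaces of area \emph{smaller} than $\sigma$'s; it is entirely consistent with the hypotheses (WCC, maximality, completeness) for $r$ to increase past the throat, reach a local maximum, fall back to a local minimum of larger radius, and only then run off to infinity---Fig.~\ref{fig:omega} depicts precisely such a slice with three stationary surfaces, and one can write down monotone $\omega$-profiles realizing it within the constraints you use. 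On such a slice your single global $dr^2/f$ chart does not exist, and the pointwise inequality $f \leq f_{\mathrm{Stat}}$ anchored at the throat is not defined in the intermediate patches.

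The repair is not free. The paper restricts the comparison to the region outside the \emph{outermost} stationary surface on each side (radius $r_i \geq r_0$), where a single patch is guaranteed and your argument goes through verbatim; but this bounds $\vol[\Sigma]$ below by half-volumes of $\mathrm{Stat}_k$ slices with masses $m_i = k r_i^{d-2} + r_i^d/L^2 \geq m$, integrated only from $r_i$ upward. To descend from these to the slice matched to the globally minimal surface one needs the separate, nontrivial fact that the renormalized volume of a totally geodesic $\mathrm{Stat}_k$ slice is monotonically increasing in the mass for $k \in \{0,1\}$: a larger mass gives a larger integrand but a strictly shorter integration range, so this is not a pointwise estimate. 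The paper proves it in Appendix~\ref{app:Ik} via an explicit hypergeometric computation (and cites \cite{ChaMar16} for $k=0$). Your proposal contains no trace of this ingredient, so as written it establishes the theorem only in the special case where the globally minimal surface is the unique stationary surface on $\Sigma$.
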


\subsection{A Riemannian Penrose Inequality with Spatial Symmetry}\label{sec:penrose}
Here we present a novel application of our results, which is somewhat tangential to the main subject of the paper. Uninterested readers may choose to skip ahead to the proofs of the theorems in Sec. \ref{sec:proofs}. The main upshot is that our results directly imply the Riemannian Penrose inequality for maximal spatial symmetry. Using $\Omega_{k}$ to
denote the volume of the unit sphere $(k=1)$, plane $(k=0)$ or unit hyperbolic plane $(k=-1)$, we have
\begin{thm}\label{thm:penrose}
    Let $(\Sigma, h_{ab}, K_{ab})$ be a complete asymptotically AdS$_{d+1\geq 3}$ maximal volume
    initial data set respecting the WCC and with maximal spatial symmetry. Let $\sigma$ be the outermost
    stationary symmetric surface (possibly empty, in which case we define $\Area[\sigma]=0$). Then
    \begin{equation}\label{eq:provenpenrose}
    \begin{aligned}
        \frac{ 16\pi G_N }{ (d-1)\Omega_{k} }M \geq k \left(\frac{ \Area[\sigma] }{ \Omega_{k} }\right)^{\frac{ d-2 }{ d-1 }} + \frac{ 1 }{ L^2
        }\left(\frac{ \Area[\sigma] }{ \Omega_{k} }\right)^{\frac{ d }{ d-1 }},
    \end{aligned}
    \end{equation}
    where $k=1, 0, -1$ for spherical, planar, or hyperbolic symmetry, respectively, and with
    equality if and only if $(\Sigma, h_{ab}, K_{ab})$ is a static slice of $\mathrm{Stat}_k$ or pure AdS.
\end{thm}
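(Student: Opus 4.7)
The plan is to reduce the Penrose inequality to a one-dimensional monotonicity statement for a generalized Misner--Sharp mass function on $\Sigma$, in direct analogy with the classical Misner--Sharp argument, adapted both to AdS asymptotics and to general maximal (rather than moment-of-time-symmetric) initial data.

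First, using the maximal spatial symmetry together with the outermost property of $\sigma$, I would parameterize the exterior region $\Sigma_{\mathrm{out}}$ between $\sigma$ and the conformal boundary by the area radius $r$, writing the induced metric as
\begin{equation*}
ds^2_\Sigma = \frac{dr^2}{F(r)} + r^2\,d\Omega_k^2,
\end{equation*}
where $d\Omega_k^2$ denotes the unit metric on the $(d-1)$-dimensional maximally symmetric space. The outermost property ensures that $F$ has no zeros for $r>r_h$ (else there would be another stationary symmetric surface), so $F>0$ on $\Sigma_{\mathrm{out}}$ and $r$ is a valid coordinate, with $F(r_h)=0$ at $\sigma$ and $r_h=(A/\Omega_k)^{1/(d-1)}$ where $A=\mathrm{Area}[\sigma]$.

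Next, I would introduce the mass function
\begin{equation*}
\mathcal{M}(r) = r^{d-2}\left(k + \frac{r^2}{L^2} - F(r)\right),
\end{equation*}
chosen so that $\mathcal{M}(r_h) = k r_h^{d-2}+r_h^{d}/L^2$ reproduces the RHS of \eqref{eq:provenpenrose} after rewriting in terms of $A$, while $\mathcal{M}(\infty)=16\pi G_N M/((d-1)\Omega_k)$ is fixed by the Fefferman--Graham falloffs. A direct calculation of the Ricci scalar of the induced metric yields the identity
\begin{equation*}
\mathcal{M}'(r) = \frac{r^{d-1}}{d-1}\left(R^{(d)} + \frac{d(d-1)}{L^2}\right).
\end{equation*}
The Hamiltonian constraint combined with $\mathrm{tr}_\Sigma K=0$ (maximality) then gives $R^{(d)}+d(d-1)/L^2 = 2\mathcal{E} + K^{ab}K_{ab}$, with $\mathcal{E}$ the WCC quantity in \eqref{eq:WCC}; both terms on the right are manifestly nonnegative, so $\mathcal{M}'(r)\geq 0$, and integrating from $r_h$ to $\infty$ produces \eqref{eq:provenpenrose}.

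Rigidity is obtained by tracking the equality case. $\mathcal{M}'\equiv 0$ forces both $\mathcal{E}\equiv 0$ and $K_{ab}\equiv 0$ on $\Sigma_{\mathrm{out}}$, and constancy of $\mathcal{M}$ then pins down $F(r)=k+r^2/L^2-\mathcal{M}(r_h)/r^{d-2}$, identifying the data with a static slice of $\mathrm{Stat}_k$, or with pure AdS when $\sigma$ is empty and $\mathcal{M}(r_h)=0$. The main subtlety I anticipate lies in the edge case of empty $\sigma$, where the inequality reduces to $M\geq 0$: for spherical symmetry, regularity of $\Sigma$ at the axis forces $F(0)=1$ and hence $\mathcal{M}(0)=0$, closing the argument via the same monotonicity; for planar or hyperbolic symmetry a complete slice with no stationary symmetric surface is structurally more restricted (indeed for $k=-1$ the vacuum itself has a minimal surface at $r=L$), and one must either rule out such slices or reduce them to the main argument via asymptotic analysis at a second end.
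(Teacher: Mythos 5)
Your proposal is correct and follows essentially the same route as the paper: your mass function $\mathcal{M}(r)$ is exactly the paper's $\omega(r)$, your monotonicity identity is the paper's Eq.~\eqref{eq:omegamaster} obtained from Gauss--Codazzi together with maximality and the WCC (Lemma~\ref{lem:flow}), and the inequality plus rigidity follow by evaluating at the outermost stationary surface and at infinity, just as in the paper. If anything, you are slightly more careful than the paper's one-line conclusion in flagging the empty-$\sigma$ edge cases for $k=0,-1$.
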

\noindent This result does not require a moment of time symmetry, but the surface $\sigma$ will
be (anti)trapped than marginally (anti)trapped unless $K_{ab}=0$ at $\sigma$. Also note that while $\Omega_k$ may be formally infinite
(unless we use isometries to compactify the symmetric surfaces), $\Area[\Sigma]/\Omega_{k}$ and $M/\Omega_{k}$ are well
defined.\footnote{A formulation of the non-compact cases ($k=0, -1$) of this inequality to spacetimes without symmetry would
presumably have to involve some background subtraction. For example, this could likely be done for spacetimes that, on a
spatial slice, approach a static planar or hyperbolic black hole outside a compact set in the conformal completion.}

\subsection{Proofs of Theorems}\label{sec:proofs}
\subsubsection*{Proof of Theorems \ref{thm:mainthm_d3} and \ref{thm:tfd_d3}}
To prove Theorem \ref{thm:mainthm_d3}, our main tool is an extant result about volumes of regions of conformally compact
Riemannian $3$-manifolds with certain asymptotics \cite{BreCho14, Cho14}. The proof of Theorem~\ref{thm:mainthm_d3} is constructed by establishing that (1)
 the requisite technical assumptions used in the mathematical literature hold for
static-anchored maximal volume slices $\Sigma$ of AAdS$_4$ spacetimes when the WCC holds, and (2) 
 if $\Sigma$ has $n$ conformal boundaries it also has $n$ disjoint asymptotic regions of the kind
appearing in the result of \cite{BreCho14, Cho14}.

\begin{figure}
\centering
\includegraphics[width=0.7\textwidth]{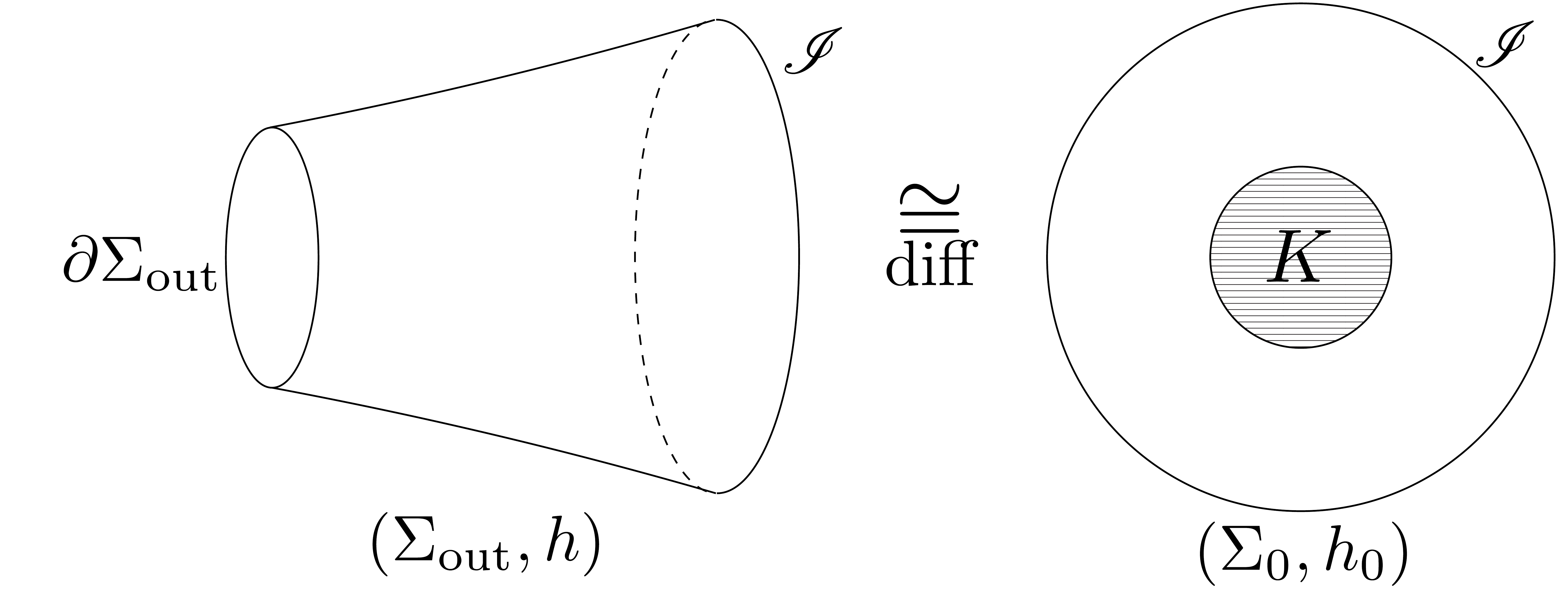}
\caption{Example of geometry appearing in Theorem~\ref{thm:BreCho}. On the left we see an
asymptotic ``arm'' $(\Sigma_{\rm out}, h)$ of an asymptotically hyperbolic manifold, where $\partial
\Sigma_{\rm out}$ is outermost minimal. The region $\Sigma_{\rm out}$ is
    diffeomorphic to a subset $\Sigma_{0}\setminus K$ of the hyperbolic plane $(\Sigma_0, h_0)$, and $h_0$ and $h$ are both metrics on $\Sigma_{\rm out}$ (via a judicious choice of diffeomorphism).}
\label{fig:volcomp}
\end{figure}
We can now state the result of \cite{BreCho14, Cho14}, which after a
rephrasing reads
\footnote{
    Note that while the theorem is stated with a strict inequality $m>0$ in Ref.~\cite{BreCho14}, Proposition 5.3 in
    Ref.~\cite{Cho14} includes the case $m=0$, where we should take $K$ empty so that $\partial \Sigma_0 = \varnothing$.}
\begin{thm}[\cite{BreCho14}, \cite{Cho14}]\label{thm:BreCho} 
    Let $(\Sigma_0, h_0)$ be hyperbolic 3-space and $(\Sigma_{m}, h_{m})$ the totally geodesic hypersurface 
    of Schwarzschild-AdS of mass $m$, restricted to one side of the bifurcation surface. 
    Consider a Riemannian metric $h$ on $\Sigma_{\rm out}= \Sigma_{0}\setminus K$ where $K$ is a compact set with smooth
    connected boundary. Assume that we can pick $K$ so that
    \begin{itemize}
        \item  $(\Sigma_{\rm out}, h)$ is asymptotically hyperbolic, meaning that 
            \begin{equation}
            \begin{aligned}
            |h-h_0|_{h_0} &= \mathcal{O}(r^{-2-\delta}), \quad \delta >0, \\
            \lim_{r \rightarrow \infty} |D_{0}(h-h_0)|_{h_0} &= 0,
            \end{aligned}
            \end{equation}
            where $D_0$ is the metric-compatible connection of $h_0$, and where $r$ is defined through
            the coordinates
            \begin{equation*}
            \begin{aligned}
            h_0 = \frac{ 1 }{ 1+r^2 }\dd r^2 + r^2 \dd \Omega^2.
            \end{aligned}
            \end{equation*}
        \item The Ricci scalar of $(\Sigma, h)$ satisfies $R[h] \geq -6/L^2$.
        \item $\partial \Sigma_{\rm out}$ is an outermost minimal surface with respect to $h$, and 
            \begin{equation*}
            \begin{aligned}
                \Area[\partial \Sigma_{\rm out}, h] \geq \Area[\partial \Sigma_m, h_m]
            \end{aligned}
            \end{equation*}
            for some $m \geq 0$.     \end{itemize}
            Then the renormalized volumes are finite and satisfy
            \begin{equation*}
            \begin{aligned}
                \vol_{\rm ren}(\Sigma_{\rm out}, h) \geq \vol_{\rm ren}(\Sigma_{m}, h_m),
            \end{aligned}
            \end{equation*}
            where \begin{equation}
      \vol_{\rm ren}(\Sigma_{\rm out}, h) = \lim_{i \rightarrow \infty}\left[\vol(\Sigma_{\rm out} \cap \Omega_i, h) -
      \vol(\Omega_i, h_0) \right],
\end{equation}
and $\{\Omega_i\}$ is an exhaustion of $\Sigma_0$ by compact sets, where asymptotic hyperbolicity 
ensures that the renormalized volume is well defined and finite. 
    Moreover, if and only if equality holds, then $h$ is isometric to $h_m$.
\end{thm}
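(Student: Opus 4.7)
The plan is to prove the theorem by running a (weak) inverse mean curvature flow (IMCF) from $\partial \Sigma_{\rm out}$ outward, and showing that an appropriately defined Geroch-type combination of enclosed volume and area is monotone nondecreasing. The weak IMCF of Huisken--Ilmanen is the natural tool here: it produces a foliation $\{\Sigma_t\}_{t \geq 0}$ with $\Sigma_0 = \partial \Sigma_{\rm out}$, satisfying $|\Sigma_t|_h = |\Sigma_0|_h\, e^t$ through each smooth piece, and only jumping when $\Sigma_t$ is not outer-minimizing. The hypothesis that $\partial \Sigma_{\rm out}$ is outermost minimal with connected topology is exactly what guarantees that the jumps, if any, do not decrease the monotone quantity, so one can work as if the flow were smooth, modulo a standard regularization argument.

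The next step is to identify the right monotone functional. On the exact Schwarzschild-AdS slice $(\Sigma_m, h_m)$, the totally umbilic coordinate spheres foliate $\Sigma_m$ and provide an explicit, strictly increasing relation between area $A$ and the renormalized volume $V_{\mathrm{Schw}}(A;m)$ enclosed outside the horizon. One defines
\begin{equation*}
Q(t) \;=\; V(t) \;-\; V_{\mathrm{Schw}}\!\bigl(A(t);\, m\bigr),
\end{equation*}
where $V(t)$ is the $h$-volume swept out by the flow up to parameter $t$, renormalized against hyperbolic space via the same exhaustion as in the theorem statement. The function $V_{\mathrm{Schw}}$ is tuned precisely so that $Q$ is constant along the Schwarzschild reference foliation, making $Q$ the sharp candidate.

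To prove $\dot Q \geq 0$, I would carry out the Geroch calculation: $\dot V(t) = \int_{\Sigma_t} H^{-1} dA$, while differentiating the Hawking-type quantity $\int_{\Sigma_t} H\, dA$ and applying Gauss--Codazzi gives a formula whose only sign-indefinite ingredients are the ambient scalar curvature $R[h]$ and the intrinsic scalar curvature of $\Sigma_t$. Here two three-dimensional features combine decisively: the scalar curvature bound $R[h] \geq -6/L^2$ handles the ambient term, and the Gauss--Bonnet theorem replaces $\int_{\Sigma_t} R_{\Sigma_t}\, dA$ by $8\pi \chi(\Sigma_t) \leq 8\pi$, the inequality using that $\Sigma_t$ remains connected (this is ultimately why the connected outermost-minimal hypothesis enters). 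Combining these with $\dot A = A$ and the explicit ODE satisfied by $V_{\mathrm{Schw}}$ yields, after algebraic manipulation, the desired pointwise-in-$t$ inequality $\dot Q(t) \geq 0$.

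Taking $t \to \infty$, the asymptotic hyperbolicity hypothesis (on $h - h_0$ and $D_0(h - h_0)$) ensures that $\Sigma_t$ exhausts $\Sigma_{\rm out}$ through essentially round large spheres, so that $Q(t) \to \vol_{\rm ren}(\Sigma_{\rm out}, h) - \vol_{\rm ren}(\Sigma_m, h_m) \cdot (\text{normalization})$, while the initial value is controlled by the area hypothesis $\Area[\partial \Sigma_{\rm out}] \geq \Area[\partial \Sigma_m]$ and the monotonicity of $V_{\mathrm{Schw}}(\cdot;m)$ in $A$. This chain of inequalities delivers the volume comparison. For rigidity, equality in the monotonicity forces: the Gauss--Bonnet term to saturate so each $\Sigma_t$ is a sphere; the scalar curvature bound to be saturated pointwise, so $R[h] \equiv -6/L^2$; and the traceless second fundamental form and $\nabla H$ to vanish, so the foliation is totally umbilic with $H$ constant on each leaf. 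These are exactly the conditions that force $(\Sigma_{\rm out}, h)$ to be isometric to $(\Sigma_m, h_m)$. The main obstacle I expect is the rigorous justification that weak IMCF preserves the monotonicity of $Q$ across jump times and that the asymptotic limit of $Q$ is controlled by the renormalized volume rather than by boundary counterterms; both hinge essentially on the connected outermost-minimal hypothesis and on the precise falloff conditions stated in the theorem.
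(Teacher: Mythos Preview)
The paper does not prove this theorem; it is quoted from Brendle--Chodosh \cite{BreCho14, Cho14} and used as a black box in the proof of Theorem~\ref{thm:mainthm_d3}. So there is no ``paper's own proof'' to compare against here.

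That said, your sketch is a faithful outline of how Brendle and Chodosh actually establish the result in their papers: run weak inverse mean curvature flow from the outermost minimal boundary, exploit the three-dimensional Gauss--Bonnet theorem together with the scalar curvature lower bound $R[h]\geq -6/L^2$ to obtain a Geroch-type monotonicity for a functional calibrated against the Schwarzschild--AdS reference, and extract the volume comparison in the asymptotic limit using the falloff hypotheses. The connectedness of $\partial\Sigma_{\rm out}$ enters exactly where you say, to keep $\chi(\Sigma_t)\leq 1$ and hence the topological term under control. The rigidity analysis you describe (saturation forcing umbilicity, constant $H$ on leaves, and $R[h]\equiv -6/L^2$) is also the correct mechanism. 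The parts you flag as obstacles---monotonicity across jumps in the weak flow and convergence of $Q(t)$ to the renormalized volume---are genuine technical points handled in \cite{BreCho14, Cho14}, but your identification of them is accurate.
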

See Fig.~\ref{fig:volcomp} for an illustration for quantities appearing in this theorem. 
Viewing $(\Sigma_{\rm out}, h)$ as a subset of a complete Riemannian manifold $(\Sigma, h)$, the theorem says that the volume of
$(\Sigma, h)$ contained outside the outermost minimal surface of each asymptotic region is greater than the volume of
a totally geodesic slice of Schwarzschild-AdS on one side of the bifurcation surface -- provided that the first two conditions
hold. Note that if we take $(\Sigma_{\rm out}, h)$ to be one half of a
totally geodesic slice of Schwarzschild-AdS, then we find that as expected the volume of the static slice of Schwarzschild-AdS$_{4}$ monotonically increases with mass. As a special case, of course, for any positive mass the volume is greater than twice the volume of a static slice of pure AdS.

Before applying the result to volumes of maximal volume slices in AAdS$_4$ spacetimes, we must understand how restrictive asymptotic hyperbolicity is. 
We relegate the details to Appendix~\ref{sec:asymhyp}, where we prove the following:
\begin{lem}\label{lem:ashyp}
    Let $(M, g)$ be an asymptotically globally AdS$_4$ spacetime. Then maximal volume static-anchored slices $\Sigma$ in $(M, g)$ are asymptotically hyperbolic.
\end{lem}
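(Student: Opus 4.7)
The plan is to analyze the maximal slice in Fefferman--Graham (FG) coordinates near each connected component of $\mathscr{I}$ and verify the two conditions of Theorem~\ref{thm:BreCho} by direct asymptotic expansion. The key simplification is that static anchoring forces the leading bulk metric data to be $\partial_\tau$-invariant, so the slice lies close to a static slice of the bulk; and the induced metric on any static slice of pure AdS$_{4}$ is exactly the reference hyperbolic metric $h_0$ (up to the coordinate identification implicit in Theorem~\ref{thm:BreCho}).

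In FG gauge, $ds^2 = (L^2/z^2)[dz^2 + \gamma_{\mu\nu}(z,x)\,dx^\mu dx^\nu]$ with expansion $\gamma_{\mu\nu} = \gamma^{(0)}_{\mu\nu} + z^2 \gamma^{(2)}_{\mu\nu} + z^3 \gamma^{(3)}_{\mu\nu} + \cdots$. Static anchoring fixes $\gamma^{(0)}_{\mu\nu}dx^\mu dx^\nu = -d\tau^2 + L^2 q_{ij}\,dx^i dx^j$, and by the assumed falloffs $\gamma^{(0)}$ and $\gamma^{(2)}$ are determined by the boundary geometry and hence $\partial_\tau$-invariant; non-static FG data first appears in $\gamma^{(3)}$. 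I would then parametrize $\Sigma$ near $\mathscr{I}$ as a graph $\tau = T(z, x^i)$ with Dirichlet condition $T|_{z=0} = 0$. Since $T \equiv 0$ is the exact maximal slice of the static piece $\gamma^{(0)} + z^2 \gamma^{(2)}$, the mean-curvature equation $\nabla_a n^a = 0$ for $T$ is sourced only by the non-static FG coefficients, i.e.\ by terms of order $z^3$. An indicial analysis of the linearized equation at $z = 0$ on $\mathbb{H}^3$ (leading symbol: the hyperbolic Laplacian), together with the Dirichlet condition, gives $T = O(z^p)$ with $p \geq 2$ and corresponding falloffs for its derivatives.

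Pulling the bulk metric back to the slice, the deviation $(h - h_0)$ splits into a direct piece $(L^2/z^2)[\gamma_{ij}(z,T,x) - \gamma_{ij}^{\mathrm{AdS}}(z)]$ which is $O(z)$ by the $\gamma^{(n\geq 3)}$ corrections, and graph pieces $(L^2/z^2)[\gamma_{\tau\tau}\,\partial T\,\partial T + \gamma_{\tau i}\,\partial T]$, which are subleading once the above $T$-falloff is inserted. Contracting with $h_0^{-1} = O(z^2)$ yields $|h - h_0|_{h_0} = O(z^3)$. Under the coordinate change $1/z \sim r/L$ matching FG to the radial coordinate of Theorem~\ref{thm:BreCho}, this becomes $|h - h_0|_{h_0} = O(r^{-3})$, verifying the first condition of asymptotic hyperbolicity with $\delta = 1$. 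The same bookkeeping for $D_0(h - h_0)$, using that the Christoffel symbols of $h_0$ contribute at most one extra inverse power of $z$, gives $|D_0(h - h_0)|_{h_0} = O(r^{-2}) \to 0$ at infinity, establishing the second condition.

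The main obstacle is showing that the maximal slice actually admits a graph parametrization near each boundary component with the claimed $T$-falloff: one must justify existence and regularity for the quasilinear elliptic mean-curvature equation on $\mathbb{H}^3$ with the prescribed Dirichlet data. This is well-trodden ground in the theory of AAdS maximal/CMC hypersurfaces --- the linearization is Fredholm on weighted Sobolev spaces and its indicial roots at infinity pin down the admissible asymptotic modes --- so the remainder of the argument is arithmetic in FG coordinates. Finally, one applies the same analysis to each of the $n$ connected boundary components, confirming that each asymptotic end of $\Sigma$ is asymptotically hyperbolic.
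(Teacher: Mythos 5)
Your overall strategy coincides with the paper's: work in Fefferman--Graham gauge, write $\Sigma$ as a graph $\tau=T(z,x)$ over a static anchoring slice, solve the maximal-surface condition order by order in $z$ to fix the falloff of $T$, pull back the metric, and verify the two conditions of asymptotic hyperbolicity by power counting (the paper's Appendix~\ref{sec:asymhyp} does exactly this, finding $|h-h_0|_{h_0}=\mathcal{O}(r^{-3})$ and $|D_0(h-h_0)|_{h_0}=\mathcal{O}(r^{-3})$). However, there is a genuine quantitative gap at the crucial step: you conclude only $T=\mathcal{O}(z^{p})$ with $p\geq 2$, and this is \emph{not} sufficient for your subsequent claim that the graph pieces are subleading. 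If the $z^2$ coefficient of $T$ is nonzero, then $\partial_z T=\mathcal{O}(z)$ and the graph contribution to $h_{zz}$ is $g_{\tau\tau}(\partial_z T)^2=\mathcal{O}(z^{-2})\cdot\mathcal{O}(z^{2})=\mathcal{O}(1)$, which is \emph{larger} than the $\mathcal{O}(z)$ direct piece; contracting with $h_0^{zz}=\mathcal{O}(z^2)$ then gives $|h-h_0|_{h_0}=\mathcal{O}(z^2)=\mathcal{O}(r^{-2})$, which fails the required $\mathcal{O}(r^{-2-\delta})$ with $\delta>0$. This borderline case is not a technicality: the paper explicitly notes that a nonzero $t_2$ (which arises whenever the anchoring time $t_0$ is non-constant) is precisely what destroys asymptotic hyperbolicity, so the $z^2$ mode is the one danger point and must be shown to vanish, not merely bounded.

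The fix is the computation the paper actually performs: expanding $K=0$ order by order with $t_0$ constant forces $t_1=t_2=t_3=0$, so $T-t_0=\mathcal{O}(z^4)$ with $t_4$ the free (deep-bulk) coefficient; then the graph pieces contribute at $\mathcal{O}(z^4)$ and below in $|h-h_0|_{h_0}$, and the $\mathcal{O}(r^{-3})$ falloff is controlled entirely by the $z^3\mathcal{T}_{ij}$ term in the FG expansion, exactly as in your ``direct piece.'' Your own observations --- that $T\equiv 0$ is exactly maximal for the static part of the metric, so the source first appears at order $z^3$, and that the Dirichlet condition kills the constant indicial mode --- already contain the ingredients to prove $t_2=0$ (indeed $p\geq 3$ would suffice for every component of $h-h_0$); you just need to carry that through rather than weakening the conclusion to $p\geq 2$. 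The remaining bookkeeping, including the derivative condition (where your $\mathcal{O}(r^{-2})$ bound is weaker than the paper's $\mathcal{O}(r^{-3})$ but still suffices since only the vanishing of the limit is needed), matches the paper.
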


We now prove Theorem~\ref{thm:mainthm_d3}.
\begin{proof}
By Lemma~\ref{lem:ashyp}, a maximal volume slice $\Sigma$ is asymptotically
hyperbolic. Next, by the Gauss-Codazzi equation, the WCC and maximality ($K=0$),  the Ricci scalar of the
induced metric $h$ on $\Sigma$ is bounded from below:
\begin{equation}
\begin{aligned}
    R[h] = K^{\alpha \beta}K_{ \alpha \beta} + 2n^a n^b\left(R_{ab}[g]-\frac{ 1 }{ 2
    }g_{ab}R[g] \right) = K^{\alpha \beta}K_{ \alpha \beta}-\frac{ 6 }{ L^2 } + 2
    \mathcal{E} \geq -\frac{ 6 }{ L^2 }, \\
\end{aligned}
\end{equation}
where $n^a$ is a unit normal to $\Sigma$.
Thus, subsets $\Sigma_{\rm out}$ of $\Sigma$ that lies outside a connected outermost minimal surface will
satisfy the criteria for Theorem~\ref{thm:BreCho}.  

Let $\sigma_i$ for $i=1,
    \ldots, n$ be the outermost minimal surface for a given connected
component $\mathscr{I}_{i}$ of $\mathscr{I}$. Minimal surfaces homologous to $\mathscr{I}_i$ are guaranteed to exist by the results of \cite{AndMet07, AndEic10} when
$n\geq 2$.\footnote{The result of \cite{AndMet07, AndEic10} guarantees the existence of a marginally outer
    trapped surface (MOTS) in an initial dataset with an outer trapped inner boundary and an outer untrapped
outer boundary. If we take our initial dataset $(\Sigma, h, K_{ab})$ and turn it into the new dataset $(\Sigma,
h, K_{ab}=0)$, then MOTS in this second dataset are exactly the minimal surfaces of $(\Sigma, h)$,
and so the theorems of \cite{AndMet07, AndEic10} guarantee the existence of minimal surfaces when we
have at least two asymptotic regions. Note that the results of \cite{AndMet07, AndEic10} do not rely
on constraint equations or energy conditions, so setting $K_{ab}=0$ without altering the intrinsic
    geometry causes no problems.} Assume each $\sigma_i$ is connected.
For $n=1$ we have that $\sigma_1$ can be empty. If $\sigma_1$ is not empty, then by two applications of
    Theorem~\ref{thm:BreCho} we have (since for any value of Area$[\sigma_{i}]$, there is an $m$ such that Area$[\sigma_{i}]\geq \mathrm{Area}[\partial \Sigma_{m}]$)
\begin{equation}\label{eq:volbndi}
    \vol[\Sigma_{\mathrm{out,}i}] \geq \frac{1}{2}\vol[\Sigma_{\mathrm{Schw}[\sigma_i]}] \geq
    \vol[\Sigma_{\mathrm{AdS}_4}],
    \end{equation}
    where $\Sigma_{\mathrm{Schw}[\sigma_i]}$ is the totally geodesic hypersurface of Schwarzschild-AdS whose
    bifurcation surface has the same area as $\sigma_i$. If $\sigma_1$ is empty, then
    a single application of Theorem~\ref{thm:BreCho} directly gives $\vol[\Sigma_{\mathrm{out,}1}]
    \geq \vol[\Sigma_{\mathrm{AdS}_4}]$. For $n=1$ this completes our proof, since $V[\Sigma]\geq
    V[\Sigma_{\mathrm{out}, 1}]$, so assume now
    $n > 1$.

    Take now $\Sigma_{\mathrm{out,}i}$ so that they are closed as subsets of $\Sigma$, that is, $\sigma_i$ is included in $\Sigma_{\mathrm{out,}i}$. 
    If the interiors of $\Sigma_{\mathrm{out,}i}$ and $\Sigma_{\mathrm{out,}j}$ intersect for some $i\neq j$, then either $\sigma_i$
    is fully contained in the interior of $\Sigma_{\mathrm{out,}j}$, or the two surfaces $\sigma_i$ and $\sigma_j$ ``weave'' through each other. 
    The former is not possible, by outermost minimality, so consider the second case.
    Then $\sigma_i$ has a subset $U_i$ lying in $\Sigma_{\mathrm{out,}j}$, and $\sigma_j$ a subset
    $U_j$ in $\Sigma_{\mathrm{out,}i}$ such that $\hat{\sigma}_i = (\sigma_{i}-U_i)
    \cup U_j$ and $\hat{\sigma}_j = (\sigma_{j}-U_j) \cup U_i$ lie in $\Sigma_i$ and $\Sigma_j$ respectively -- see
    Fig.~\ref{fig:minoverlap}. Outermost minimality of $\sigma_i$ implies that
    \begin{equation}
    \begin{aligned}
        A[\sigma_i] \leq A[\hat{\sigma}_i] \qquad \Rightarrow \qquad A[U_i] \leq A[U_j].
    \end{aligned}
    \end{equation}
    But this implies $A[\hat{\sigma}_j] \leq A[\sigma_j]$ contradicting the outermost minimality of $\sigma_j$, and so
    we conclude that $\Sigma_i$ and $\Sigma_j$ only can intersect on a measure $0$ set, giving finally that
    \begin{equation}
    \begin{aligned}
        V[\Sigma] \geq \sum_{i=1}^{n} V[\Sigma_{\mathrm{out,}i}] = n V[\Sigma_{\mathrm{AdS}_4}].
    \end{aligned}
    \end{equation}
    \begin{figure}
    \centering
    \includegraphics[width=0.3\textwidth]{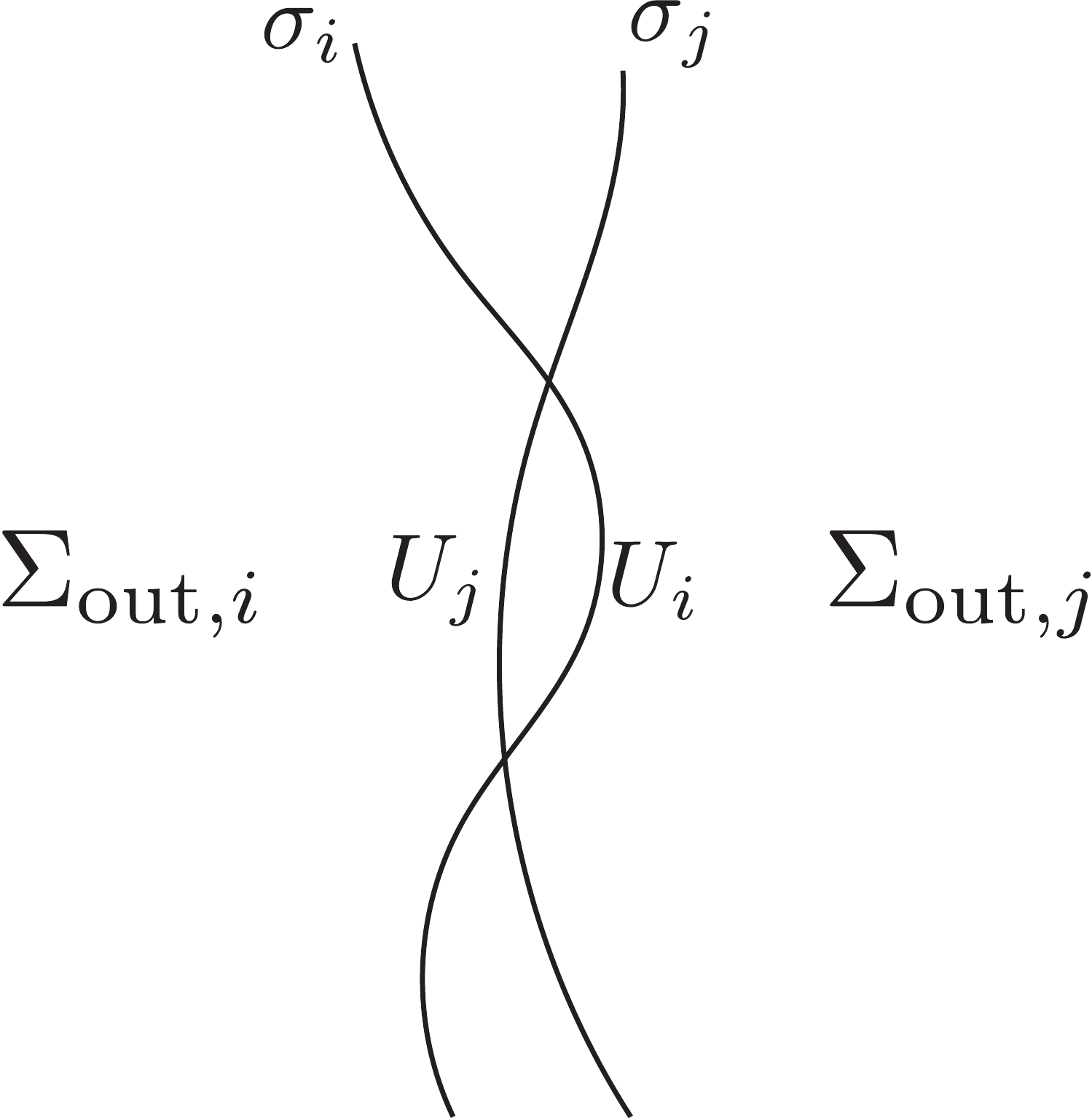}
    \caption{If two outermost minimal surface were to ``weave'' through each other, we would have a contradiction
        since either $(\sigma_{i}-U_i)
    \cup U_j$  would have area less than $\sigma_i$, or $(\sigma_{j}-U_j) \cup U_i$ would have area less than $\sigma_j$. }
    \label{fig:minoverlap}
    \end{figure}
\end{proof}
Could the above theorem be true also for higher dimensions?  A hint that this likely to be the case 
a comes from the following conjecture of Schoen \cite{Schoen}\footnote{
The conjecture is written in a different form in \cite{Schoen}. The version stated here can be found in~\cite{Bra97}.}
\begin{conj}[\cite{Schoen}]\label{conj:Schoen}
    Let $(\Sigma, h_0)$ be a closed hyperbolic Riemannian manifold with constant negative scalar curvature $R[h_0]$. Let $h$ be another metric on $\Sigma$ with scalar
    curvature $R[h] \geq R[h_0]$. Then $\vol(\Sigma, h) \geq \vol(\Sigma, h_0)$.
\end{conj}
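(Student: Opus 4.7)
The plan is to mirror, in the closed setting, the integral-comparison strategy behind Theorem~\ref{thm:BreCho}: convert the pointwise scalar-curvature bound $R[h]\geq R[h_0]$ into a volume inequality. Because the closed case lacks an asymptotic end, there is no inverse-mean-curvature-flow anchor, so the first natural substitute is a Yamabe-type conformal deformation combined with a maximum principle.

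Within the conformal class of $h_0$, I would write $h = u^{4/(n-2)}h_0$ on the $n$-dimensional $\Sigma$ with $u>0$. The conformal transformation law for scalar curvature gives the Yamabe identity
\begin{equation*}
-c_n \Delta_0 u + R[h_0]\,u \;=\; R[h]\,u^{(n+2)/(n-2)}, \qquad c_n=\frac{4(n-1)}{n-2}.
\end{equation*}
Evaluating at a point $p$ where $u$ attains its minimum, so $\Delta_0 u(p)\geq 0$, yields $R[h](p)\,u(p)^{(n+2)/(n-2)} \leq R[h_0]\,u(p)$; the right-hand side is strictly negative since $R[h_0]<0$, forcing $R[h](p)<0$, and dividing by the negative quantity $u(p)R[h](p)$ while using $R[h](p)\geq R[h_0]$ (both negative) produces $u(p)^{4/(n-2)}\geq R[h_0]/R[h](p)\geq 1$. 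Hence $u\geq 1$ pointwise on $\Sigma$, so $\vol(\Sigma,h) = \int_{\Sigma} u^{2n/(n-2)}\,dV_0 \geq \vol(\Sigma,h_0)$, with equality forcing $u\equiv 1$ by the strong maximum principle.

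The genuinely hard step, and where I expect the argument to stall, is extending the conformal-class case to arbitrary metrics $h$ on $\Sigma$. There is no canonical geometric flow that carries a general metric into the conformal class of a prescribed closed hyperbolic background while preserving $R\geq R[h_0]$: normalized Ricci flow, Yamabe flow, and harmonic-map-type deformations each either fail to preserve the scalar-curvature inequality or fail to enter the target conformal class. A spinorial attack in the style of Witten---successful for the asymptotically hyperbolic statements underlying Theorem~\ref{thm:BreCho}---is likewise obstructed, since no known spinor identity on a closed manifold extracts a volume bound in place of a mass bound. This is the precise gap that keeps the conjecture open in general dimension; my proposal therefore consists of the conformal-class case above together with the honest admission that the full statement requires a substantively new geometric input, most plausibly a tailored monotone flow or a novel spinor identity adapted to closed hyperbolic geometries.
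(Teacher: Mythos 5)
This statement is labeled \emph{Conjecture}~\ref{conj:Schoen} in the paper and is attributed to Schoen precisely because it is an open problem: the paper offers no proof and uses it only as heuristic support for extending Theorem~\ref{thm:mainthm_d3} beyond four bulk dimensions. So there is no proof of record to compare you against, and a complete "blind proof" should not exist. Your submission is honest on this point, and the partial result you do establish is correct: writing $h=u^{4/(n-2)}h_0$ with $u>0$ for $n\geq 3$, the Yamabe identity $-c_n\Delta_0 u + R[h_0]u = R[h]\,u^{(n+2)/(n-2)}$ evaluated at a minimum of $u$ (where $\Delta_0 u\geq 0$) does force $R[h](p)<0$ and then $u(p)^{4/(n-2)}\geq R[h_0]/R[h](p)\geq 1$, since both curvatures are negative and $R[h](p)\geq R[h_0]$. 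Hence $u\geq 1$ everywhere and $\vol(\Sigma,h)=\int u^{2n/(n-2)}\,dV_0\geq\vol(\Sigma,h_0)$; equality forces $u\equiv 1$ directly from the integral identity, without even invoking the maximum principle. This settles the conjecture within the conformal class of $h_0$ (the $n=2$ case being classical via Gauss--Bonnet).

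Your diagnosis of where the full statement resists proof is also accurate: the obstruction is exactly the passage from the conformal class of the hyperbolic metric to arbitrary metrics, and none of the standard tools (normalized Ricci flow, Yamabe flow, Witten-type spinor identities) is known to bridge it in general dimension while preserving the scalar-curvature lower bound. One factual refinement worth adding: in three dimensions the conjecture is in fact a theorem, as a consequence of Perelman's work on Ricci flow with surgery (via the $\bar\lambda$-invariant), so the genuinely open territory is $n\geq 4$. Since the paper asserts nothing beyond the conjecture itself, your submission neither contradicts nor falls short of anything in the text; it simply supplies a correct special case plus an accurate account of why the general case remains open.
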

\noindent As stated the conjecture applies to compact rather than conformally compact manifolds. Nevertheless, it
 adds plausibility to the possibility that Theorem~\ref{thm:mainthm_d3} holds in general dimensions. It also lends
 support to the idea that Theorem~\ref{thm:mainthm_d3} should hold without the assumption that the outermost minimal
 surface is connected, since the conformally compact generalization of Schoen's conjecture immediately implies
 Theorem~\ref{thm:mainthm_d3} with this assumption removed.
 In a moment we add further evidence for this by proving it for $d\geq 3$ spacetimes when we have maximal
 spatial symmetry and when $h_0$ is hyperbolic space.

 The proof of Theorem~\ref{thm:tfd_d3} follows that of Theorem~\ref{thm:mainthm_d3} \textit{mutatis mutandis} by replacing \eqref{eq:volbndi} with
\begin{equation}
\begin{aligned}
    \vol[\Sigma_{\mathrm{out,}i}] \geq \frac{1}{2}\vol[\Sigma_{\mathrm{Schw}[\sigma_i]}] \geq
    \frac{1}{2}\vol[\Sigma_{\mathrm{Schw}[\sigma^*]}],
\end{aligned}
\end{equation}
where now $\Sigma_{\mathrm{Schw}[\sigma_*]}$ is the totally
geodesic hypersurface of Schwarzschild-AdS with a bifurcation surface with area equal to the globally
minimal surface $\sigma_*$ on $\Sigma$. This second inequality holds due to monotonicity of the Schwarzschild-volume (and bifurcation surface area) with mass.

\subsubsection*{Proof of Theorems \ref{thm:mainthm_dall}, \ref{thm:tfd_dall} and \ref{thm:penrose}}\label{sec:withsym}
The proof of Theorem~\ref{thm:mainthm_dall} is threefold: first, we establish that the intrinsic metric of an
extremal hypersurface in an asymptotically AdS spacetime with maximal spatial symmetry can be written as
\begin{equation}
\begin{aligned}
    \dd s^2 &= \frac{ 1 }{ k + \frac{ r^2 }{L^2 } - \frac{ \omega(r) }{ r^{d-2} } }\dd r^2 + r^2 \dd
           Q_{k}^2,
\end{aligned}
\end{equation}
where $\omega(r)$ is a monotonously non-decreasing function whenever the WCC holds. 
Second, we show that at a stationary surface $r=r_*$, $\omega$ satisfies the equation
\begin{equation}\label{eq:omegstar}
\begin{aligned}
k + \frac{ r_*^2 }{L^2 } - \frac{ \omega(r_*) }{ r_*^{d-2} } = 0,
\end{aligned}
\end{equation}
and by the monotonicity of $\omega$ the volume contained outside $r_*$ is greater than the one we obtain by replacing $\omega(r)$ with the constant $\omega(r_*)$. Third, we note that constant $\omega$ corresponds to a constant$-t$ slice of
Stat$_k$, and taken together with \eqref{eq:omegstar} we see that the geometry outside $r_*$ after
the replacement $\omega(r)\rightarrow \omega(r_*)$ is exactly that of half a constant-$t$ slice of
Stat$_k$. Thus, the volume outside a stationary surface at $r_*$ is greater than that of half a constant-$t$ slice of
Stat$_k$ with mass parameter $\omega(r_*)$, and this is in turn greater than the volume of a static
AdS slice. 

We show in Appendix~\ref{sec:GerHawMass} that there is a geometric functional $\omega[\sigma,
\Sigma]$ that, in an
AAdS spacetime with maximal spatial symmetry, reduces to
$\omega(r)$ when $\sigma$ is a surface invariant under the spatial symmetry -- which we shall call a `symmetric' surface -- having area radius $r$
-- henceforth denoted $\sigma_r$. The quantity
\begin{equation}
    \begin{aligned}
        M_{\Sigma}[\sigma_r] & \equiv \omega[\sigma_r, \Sigma] \frac{(d-1) \Omega_{k}}{16 \pi G_{N}}
\end{aligned}
\end{equation}
asymptotes to the spacetime mass as $\sigma_r$ approaches the asymptotic
boundary \cite{HerHor04}, where we remind that $\Omega_k$
is the volume of the unit sphere, the plane or the unit hyperbolic plane.\footnote{
    For the
    falloffs and symmetries assumed in this paper this mass agrees with the CFT
    energy up to a fixed offset corresponding to the Casimir energy. For weaker falloffs we have
    that $M_{\Sigma}[\sigma]$ diverges towards the boundary. For $d=3$, $M_{\Sigma}[\sigma]$ is the
    so-called Geroch-Hawking mass \cite{Haw68, Ger73,JanWal77, ChrSim01, Wan01,BraHay06}.}  In Sec.~\ref{sec:lloyd} we will see that $\omega(r)\geq 0$ whenever the WCC
holds, $d\geq 3$ and the spatial symmetry is spherical or planar. As an aside, in the asymptotically flat context, this
quantity was used early partial proofs of the positive energy
theorem~\cite{Ger73,JanWal77}. 

\begin{figure}
\centering
\includegraphics[width=0.7\textwidth]{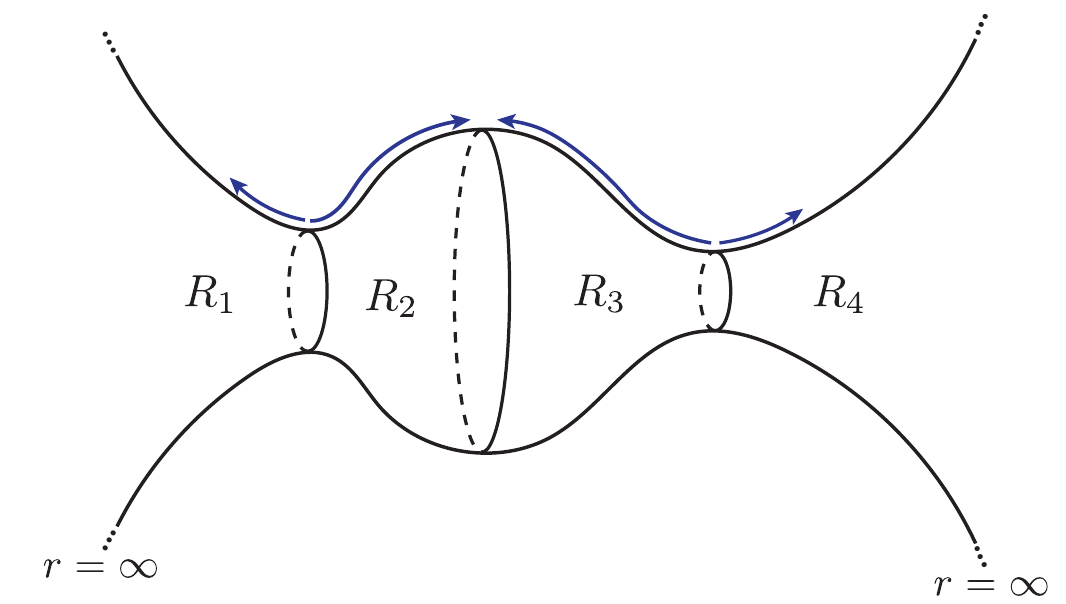}
\caption{Example of a two-dimensional spherically symmetric hypersurface $\Sigma$ with three stationary
surfaces. This manifold can be covered by four coordinate systems of the form
\eqref{eq:omegametric}, and $\omega(r)$ increases along the blue arrows. }
\label{fig:omega}
\end{figure}
We now prove the properties of $\omega(r)$ claimed in the beginning of this subsection (see
Fig.~\ref{fig:omega} for a illustration of the quantities appearing in this lemma).
\begin{lem}\label{lem:flow} 
    If $\Sigma$ is a complete maximal volume slice in an AAdS$_{d+1}$ spacetime with maximal spatial symmetry satisfying
    the WCC,  then 
    \begin{itemize}
    \item Any region $R$ that lies on $\Sigma$ between two stationary symmetric surfaces can be covered by the coordinates
        \begin{equation}\label{eq:omegametric}
        \begin{aligned}
            \dd s^2|_{\Sigma} = \frac{ 1 }{ k + \frac{ r^2 }{L^2 } - \frac{ \omega(r) }{ r^{d-2} } }\dd r^2 + r^2 \dd
            Q_{k}^2.
        \end{aligned}
        \end{equation}
        In particular, this is true when one of the two stationary surfaces is $\varnothing$ ($r=0$) or asymptotic infinity ($r=\infty$). 
    \item 
    \begin{equation}
    \begin{aligned}
        \omega'(r) \geq 0.
    \end{aligned}
    \end{equation}
    \item There is always an outermost symmetric stationary surface $\sigma_{\tilde{r}}$ such that \eqref{eq:omegametric}
        covers all $r > \tilde{r}$, where $\sigma_{\tilde{r}} = \varnothing$ and $\tilde{r}=0$ is allowed.
    \end{itemize}
\end{lem}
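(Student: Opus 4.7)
The plan is to exploit the maximal spatial symmetry to reduce the intrinsic geometry of $\Sigma$ to a one-dimensional problem in the area radius $r$, then invoke the twice-contracted Gauss equation on a maximal slice together with the WCC to control the single profile function $\omega(r)$.

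First, the maximal spatial symmetry forces the induced metric on $\Sigma$ to take the warped form
\begin{equation*}
ds^2|_\Sigma = g_{\rho\rho}(\rho)\, d\rho^2 + R(\rho)^2\, dQ_k^2,
\end{equation*}
with symmetric surfaces labeled by the area radius $r = R(\rho)$. A symmetric surface $\sigma_r$ has area $r^{d-1}\Omega_k$, so it is stationary under deformations along $\Sigma$ precisely at extrema of $R(\rho)$. Between two consecutive such extrema --- including the limiting cases $r=0$ (a smooth cap) and $r=\infty$ (the conformal boundary) --- the function $r$ is monotonic along $\Sigma$, hence a valid coordinate, and I may rewrite the metric as $g_{rr}(r)\, dr^2 + r^2\, dQ_k^2$. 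Defining $\omega(r)$ by $1/g_{rr}(r) = k + r^2/L^2 - \omega(r)/r^{d-2}$ is then just a change of variable, and by construction agrees with the geometric functional $\omega[\sigma_r,\Sigma]$ of Appendix~\ref{sec:GerHawMass}; this gives the first bullet.

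For the second bullet, a direct computation of the intrinsic scalar curvature in these coordinates, using the standard warped-product formula, yields (after all explicit powers of $r$ cancel) the clean identity
\begin{equation*}
R[h] = -\frac{d(d-1)}{L^2} + \frac{(d-1)\omega'(r)}{r^{d-1}}.
\end{equation*}
Comparing this with the maximal-slice Gauss-Codazzi equation $R[h] = K^{ab}K_{ab} + 2\mathcal{E} - d(d-1)/L^2$, derived exactly as in the proof of Theorem~\ref{thm:mainthm_d3} but in arbitrary $d$, and using the WCC $\mathcal{E}\geq 0$ gives
\begin{equation*}
\omega'(r) = \frac{r^{d-1}}{d-1}\bigl(K^{ab}K_{ab} + 2\mathcal{E}\bigr) \geq 0.
\end{equation*}

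For the third bullet I set $\tilde{r} = \sup\{r_* : \sigma_{r_*} \text{ is a stationary symmetric surface on } \Sigma\}$, with the convention $\tilde{r}=0$ when the set is empty. In the empty case, $r$ is monotonic along $\Sigma$, and geodesic completeness together with the asymptotic condition $r\to\infty$ at $\mathscr{I}$ forces $\Sigma$ to cap off smoothly at $r=0$, so $\sigma_{\tilde{r}} = \varnothing$ works. Otherwise, the supremum is attained because the stationarity condition $g^{rr}|_{\sigma_r}=0$ is preserved under limits by continuity of the embedding; and by definition of supremum no stationary symmetric surface lies strictly outside $\sigma_{\tilde{r}}$, so $r$ is monotonic there and the coordinates of the first bullet cover all of $\{r>\tilde{r}\}$. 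The main obstacle I anticipate is in this last bullet: ruling out pathological geodesically-complete ``asymptotic throat'' behavior in which $r$ approaches some $r_{\min}>0$ without ever reaching it or passing through a stationary surface, and confirming that an accumulation point of stationary radii is itself stationary rather than the locus of a singularity. The remainder of the argument reduces cleanly to the Gauss-Codazzi constraint and the warped-product curvature identity above.
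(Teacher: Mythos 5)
Your proof follows essentially the same route as the paper's: the same area-radius/warped-product coordinate construction for the first bullet, the same Gauss--Codazzi-plus-maximality-plus-WCC identity $(d-1)\omega'(r)/r^{d-1} = K^{ab}K_{ab} + 2\mathcal{E} \geq 0$ for the second (you merely make explicit the intermediate scalar-curvature formula that the paper leaves implicit), and the same outermost-patch argument for the third, where the paper resolves your anticipated obstacle by invoking finiteness of the mass so that $\omega(r)$ converges and $B(r)^{-1} = k + r^2/L^2 - \omega(r)/r^{d-2}$ is eventually positive, hence has a last zero $\tilde r$ or none at all. The one residual subtlety you flag --- an infinite asymptotic throat where $B^{-1}\to 0$ is approached but never attained --- is likewise not explicitly excluded in the paper's own argument, and in any case does not affect the downstream volume bounds.
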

\begin{proof}
Since by assumption $\Sigma$ is foliated by symmetric surfaces, locally on $\Sigma$ we can pick an ADM coordinate system
with respect to this foliation with vanishing shift and unit lapse:
\begin{equation*}
\begin{aligned}
    \dd s^2 = \dd \rho^2 + \gamma_{ij}(\rho, x) \dd x^i \dd x^{j},
\end{aligned}
\end{equation*}
where $x^i$ are coordinates on the symmetric surfaces. Maximal spatial symmetry implies that
$\gamma_{ij}(\rho, x) \dd x^i \dd x^{j}=f(\rho)\dd Q_k^2$ for some function $f(\rho)$ that is nonnegative by the
spacelike signature of $\Sigma$. Defining the new
coordinate $r$ through
\begin{equation}
\begin{aligned}
    r = \sqrt{f(\rho)},
\end{aligned}
\end{equation}
gives  a metric of the form
\begin{equation}\label{eq:cancoords}
\begin{aligned}
    \dd s^2 = B(r) \dd r^2 + r^2 \dd Q_k^2, \qquad B(r)
    = \frac{ 4 f }{ (\partial_\rho f)^2 }.
\end{aligned}
\end{equation}
These coordinates break down at $r=0$ and at radii where $\partial_\rho f(\rho)=0, f\neq 0
\Leftrightarrow B=\infty$, i.e.\ at
    symmetric surfaces that are stationary. The case of $B=0$ corresponds to a singularity (the Kretschmann scalar
    is divergent for $d\geq 3$, and there is a conical singularity for $d=2$), 
which is incompatible with completeness of $\Sigma$.\footnote{A discontinuous but nonzero $B(r)$ 
corresponds to a distributional energy shock on $\Sigma$, which does not lead to
breakdown of the coordinates.} To bring the metric to
the required form, we simple define $\omega(r)$ through
\begin{equation}
\begin{aligned}
    B(r) = \frac{ 1 }{ k + \frac{ r^2 }{ L^2  }  - \frac{ \omega(r) }{ r^{d-2} }}.
\end{aligned}
\end{equation}

For the second point, we use the Gauss-Codazzi equation applied to $\Sigma$, which yields
    \begin{equation}\label{eq:omegamaster}
\begin{aligned}
    (d-1) \frac{ \omega'(r) }{ r^{d-1} } = K^{\alpha\beta}K_{\alpha\beta} + 2 \mathcal{E}\geq 0,
\end{aligned}
\end{equation}
where we have used the WCC, extremality of $\Sigma$ and positivity of $K_{\alpha\beta}K^{\alpha\beta}$. 

    Finally, since the spacetime has finite mass, $\omega(r)$ converges to some finite number $m$
at large $r$. Thus, the quantity
\begin{equation}
\begin{aligned}
    B(r)^{-1} = k + \frac{ r^2 }{ L^2 } - \frac{ \omega(r) }{ r^{d-2} }
\end{aligned}
\end{equation}
must have a last zero $\tilde{r}$, unless it has none, in which case we set $\tilde{r}=0$.
Consequently, for $r > \tilde{r}$ there are no stationary surfaces, and so a single coordinate patch \eqref{eq:cancoords} covers this region.
\end{proof}

Next we prove our main volume comparison result:
\begin{thm}\label{thm:mainComparison}
    Let $(M, g)$ be a maximally spatially symmetric asymptotically AdS$_{d+1\geq 3}$ spacetime satisfying the
    WCC, and let $\Sigma$ be a static-anchored complete maximal volume slice of $M$. Let
    $\sigma_{r_i}$ be the outermost symmetric stationary surface with respect to the connected
    component $\mathscr{I}_i$ of $\mathscr{I}$. Then 
    \begin{equation}
        \begin{aligned}\label{eq:Vineq}
        V(\Sigma)  \geq \sum\limits_{i} V(\Sigma_{m_{i}}),
    \end{aligned}
    \end{equation}
    with
    \begin{equation}
    \begin{aligned}
        V(\Sigma_{m_i}) = \Omega_{k}\int_{r_i}^{\infty} \dd r \frac{ r^{d-1} }{ \sqrt{k + \frac{ r^2 }{ L^2 } - \frac{
            m_i }{
            r^{d-2} }}},
    \end{aligned}
    \end{equation}
    and with $m_i$ defined by
    \begin{equation}
    \begin{aligned}
        k + \frac{ r_i^2 }{ L^2 } - \frac{ m_i }{ r_i^{d-2} } = 0.
    \end{aligned}
    \end{equation}
    If $\mathscr{I}$ is connected, \eqref{eq:Vineq} is an equality if and only if $\Sigma$ 
    has the same extrinsic geometry as that of
    a static slice of pure AdS. If $\mathscr{I}$ has multiple connected components, \eqref{eq:Vineq} is an equality if and only if $\Sigma$ has
    the same extrinsic geometry as that of a totally geodesic slice of a static AdS black hole with spherical,
    planar, or hyperbolic horizon topology.
\end{thm}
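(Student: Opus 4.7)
The plan is to translate the volume bound into a direct monotonicity statement about the mass function $\omega(r)$ supplied by Lemma~\ref{lem:flow}. For each conformal boundary component $\mathscr{I}_i$, let $\Sigma_{\mathrm{out},i}$ denote the region of $\Sigma$ exterior to the outermost symmetric stationary surface $\sigma_{r_i}$ (with $r_i=0$ if none exists). The third bullet of Lemma~\ref{lem:flow} guarantees that the coordinates \eqref{eq:omegametric} cover all of $\Sigma_{\mathrm{out},i}$, so
\[
V(\Sigma_{\mathrm{out},i}) = \Omega_k\int_{r_i}^{\infty}\frac{r^{d-1}\,dr}{\sqrt{k+r^2/L^2-\omega(r)/r^{d-2}}}.
\]
The defining condition on $r_i$ gives $\omega(r_i)=m_i$, and the second bullet of Lemma~\ref{lem:flow} yields $\omega(r)\geq m_i$ for all $r\geq r_i$. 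Replacing $\omega(r)$ by the constant $m_i$ only enlarges the radicand, hence weakly decreases the integrand, and yields $V(\Sigma_{\mathrm{out},i})\geq V(\Sigma_{m_i})$.

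Next I would assemble these per-end bounds into a bound on $V(\Sigma)$. Maximal spatial symmetry orders the symmetric surfaces on $\Sigma$ by area radius, with stationary surfaces dividing consecutive coordinate patches; the outer regions $\Sigma_{\mathrm{out},i}$ associated with distinct ends can therefore meet only on a common throat, overlapping at most on a measure-zero set. This gives $V(\Sigma)\geq\sum_i V(\Sigma_{\mathrm{out},i})\geq\sum_i V(\Sigma_{m_i})$, which is \eqref{eq:Vineq}.

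For the rigidity statement, equality in the per-arm monotonicity step requires $\omega(r)\equiv m_i$ on each $\Sigma_{\mathrm{out},i}$; the Gauss-Codazzi identity \eqref{eq:omegamaster} then forces $K_{\alpha\beta}K^{\alpha\beta}+2\mathcal{E}=0$, and under the WCC both $K_{\alpha\beta}$ and $\mathcal{E}$ must vanish throughout each outer region. Equality in the summation step additionally demands $\Sigma\setminus\bigcup_i\Sigma_{\mathrm{out},i}$ to have zero volume. For connected $\mathscr{I}$ this rules out any interior region, forcing $r_1=0$ and $\omega\equiv 0$, and combined with $K_{\alpha\beta}=0$ this reproduces a static slice of pure AdS; for two boundary components it forces a single shared throat surface, making $\Sigma$ a totally geodesic slice of $\mathrm{Stat}_k$ at the corresponding mass.

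I expect the main obstacle to be the disjointness argument in the multi-end case: ruling out configurations in which $\sigma_{r_i}$ lies strictly inside $\Sigma_{\mathrm{out},j}$ for $i\neq j$. The crucial input is maximal spatial symmetry, which promotes the ordering of symmetric surfaces to a global foliation and makes the outermost symmetric stationary surface for each end unambiguous; in its absence (as in Theorem~\ref{thm:mainthm_d3}), a genuine \emph{weave}-type argument would be needed.
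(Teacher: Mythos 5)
Your proposal is correct and follows essentially the same route as the paper's proof: cover each exterior region with the coordinates of Lemma~\ref{lem:flow}, use monotonicity of $\omega$ to replace $\omega(r)$ by its value at the outermost stationary surface, sum over ends, and extract rigidity from \eqref{eq:omegamaster} forcing $K_{ab}=0$ when $\omega$ is constant. The only difference is that you spell out the disjointness of the exterior regions in the multi-end case, which the paper dispatches with ``the same argument applied twice''; your added care there is harmless and, if anything, a slight improvement in explicitness.
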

\begin{proof}
It is illustrative to begin with the one-sided case. Let $\sigma_{r_1}$ be the outermost symmetric stationary surface on $\Sigma$.
If there is none then $r_{1}=0$. Clearly the full volume of $\Sigma$ is greater or equal to the
    volume outside $\sigma_{r_1}$, and since by Lemma~\ref{lem:flow}, $\omega'(r)\geq 0$, and the
    region outside $\sigma_{r_1}$ is covered by one patch of the coordinates \eqref{eq:omegametric}, we get 
    \begin{equation}\label{eq:Ik}
\begin{aligned}
    V(\Sigma) &\geq \Omega_{k}\int_{r_{1}}^{\infty} \frac{ \dd r r^{d-1} }{ \sqrt{k + \frac{ r^2 }{ L^2  } - \frac{ \omega(r) }{
        r^{d-2} }} } \\
              &\geq \Omega_{k}\int_{r_{1}}^{\infty} \frac{ \dd r r^{d-1} }{ \sqrt{k + \frac{ r^2 }{ L^2  } - \frac{
                  \omega(r_1) }{ r^{d-2} }} } \\
              &= V(\Sigma_{\omega(r_1)}),
\end{aligned}
\end{equation}
    where $\omega(r_1)$ satisfies $0= k + \frac{ r_1^2 }{ L^2  } - \frac{\omega(r_1) }{ r_1^{d-2} }$ by stationarity of
    $\sigma_{r_1}$. The two-sided cases follows from the same argument applied twice.

    Finally,  the above inequalities can all be replaced by equalities if and only if $\omega(r)$ is
    constant, and with $n \in \{1, 2\}$
    coordinate patches fully covering $\Sigma$ when there are $n$ conformal boundaries.
    This implies the intrinsic geometry corresponds to a constant$-t$ slice of
    either pure AdS ($n=1$) or Stat$_k$ ($n=2$). From \eqref{eq:omegamaster},
    we see that $\omega$ is constant  only if $K_{ab}=0$, and so the full extrinsic geometry is
    that of a constant$-t$ slice embedded in pure AdS or Stat$_k$. 
\end{proof}

The final ingredient for the proof of Theorems \ref{thm:mainthm_dall} and \ref{thm:tfd_dall} is the 
behavior of the volume of totally geodesic slices of Stat$_k$ as a function of the mass. 
In the case $d=2$, i.e. for the BTZ black hole, we readily find that the volume of a 
totally geodesic hypersurface relative to that of static slices of pure AdS is a positive constant
independent of mass. Consider thus now $d\geq 3$, and define the vacuum-subtracted volumes: 
\begin{equation}
\begin{aligned}
    I_{k}(m) &= \Omega_{k} \lim_{r_c \rightarrow \infty}\left[\int_{r_h(m)}^{r_c}\dd r \frac{ r^{d-1} }{ \sqrt{k + \frac{ r^2
    }{ L^2 }  - \frac{ m }{ r^{d-2} }} } - \int_{0}^{r_c}\dd r \frac{ r^{d-1} }{ \sqrt{k + \frac{ r^2
    }{ L^2 }} } \right], \qquad k\in \{0, 1\} \\
    I_{-1}(m) &= \Omega_{-1} \lim_{ r_c \rightarrow \infty } \left[ \int_{r_h(m)}^{r_c}\dd r \frac{ r^{d-1} }{ \sqrt{-1 +
    \frac{ r^2 }{ L^2 } - \frac{ m }{ r^{d-2} }} } -  \int_{L}^{r_c}\dd r \frac{ r^{d-1} }{ \sqrt{-1 +
    \frac{ r^2 }{ L^2 }  } }\right].
\end{aligned}
\end{equation}
Here $r_h(m)$ satisfies $k + \frac{ r_h^2 }{ L^2 } - \frac{ m }{ r_h^{d-2} }=0$. 
For $k\in\{0, 1\}$, smoothness and completeness of $\Sigma$ is only compatible with $m \geq 0$. For $k=-1$, we can also have black
holes with $\hat{m}\leq m<0$, where \cite{Bir98, Emp99}
\begin{equation}
    \begin{aligned}
        \hat{m}= - \frac{ 2 L^{d-2} }{ d-2 }\left(\frac{ d-2 }{ d }\right)^{\frac{ d }{ 2 }}.
\end{aligned}
\end{equation}

The function $I_{0}(m)$ was computed explicitly in \cite{ChaMar16}
was be seen to be positive and monotonically increasing with $m$ \cite{ChaMar16}. For
$k=1$ and $d=3$ the same properties follow from
Theorem \ref{thm:BreCho}, as discussed previously. In the Appendix~\ref{app:Ik} we prove that this
extends to all $d\geq 3$ by generalizing a proof from \cite{BreCho14}:
\begin{lem}
    The volume of a totally geodesic hypersurface of Schwarzschild-AdS$_{d+1}$ is monotonically increasing
    with mass: 
    \begin{equation*}
    \begin{aligned}
    \partial_m I_{1}(m) > 0.
    \end{aligned}
    \end{equation*}
\end{lem}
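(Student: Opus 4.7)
The plan is to introduce a change of variables that removes the $m$-dependence of the lower integration limit and converts the horizon singularity into a universal mild one. Set $g(r)=r^{d-2}+r^{d}/L^{2}$, so that $r^{d-2}f(r,m) := r^{d-2}(1+r^{2}/L^{2}-m/r^{d-2}) = g(r)-m$; the horizon $r_h(m)$ is characterized by $g(r_h)=m$. For $d\geq 3$ one has $g'(r)=(d-2)r^{d-3}+dr^{d-1}/L^{2}>0$ on $(0,\infty)$, so $g:(0,\infty)\to(0,\infty)$ is a diffeomorphism, and we may substitute $u=g(r)-m$, $\dd u=g'(r)\,\dd r$. Under this substitution the lower limit $r=r_h(m)$ becomes $u=0$ regardless of $m$, and $\sqrt{f}=\sqrt{u}/r^{(d-2)/2}$.

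The integral over either branch (the black hole with parameter $m$ and the vacuum $m=0$) takes the unified form $\int_{0}^{\infty}F(u,m)\,\dd u/\sqrt{u}$ with
\begin{equation*}
F(u,m)=\frac{r(u,m)^{(3d-4)/2}}{g'(r(u,m))},\qquad r(u,m)=g^{-1}(u+m).
\end{equation*}
Hence
\begin{equation*}
I_{1}(m)/\Omega_{1}=\int_{0}^{\infty}\frac{F(u,m)-F(u,0)}{\sqrt{u}}\,\dd u.
\end{equation*}
Near $u=0$, $F(u,m)$ is continuous in $u$ and the $1/\sqrt{u}$ factor is integrable; near $u=\infty$, the subtraction kills the leading growth, leaving an integrand that decays like $u^{-(d+1)/d}$. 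Standard dominated-convergence arguments then license differentiation under the integral sign.

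Carrying this out with $\partial_{m}r=1/g'(r)$ gives, after routine simplification,
\begin{equation*}
\partial_{m}F(u,m)=\frac{r^{(3d-6)/2}}{g'(r)^{3}}\left[\frac{3d-4}{2}\,g'(r)-r\,g''(r)\right].
\end{equation*}
The bracket is a polynomial in $r$ which one checks to equal
\begin{equation*}
\frac{d-2}{2}\left[(d+2)\,r^{d-3}+d\,\frac{r^{d-1}}{L^{2}}\right],
\end{equation*}
and is strictly positive for $d\geq 3$ and $r>0$. Therefore $\partial_{m}F>0$ pointwise, which yields $\partial_{m}I_{1}(m)>0$ upon integration against the positive measure $\dd u/\sqrt{u}$.

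The main obstacle is conceptual rather than computational: differentiating $I_{1}(m)$ directly in $r$ produces a boundary term $\propto r_{h}^{d-1}/\sqrt{f(r_{h},m)}$ and a bulk term $\propto \int r\,f^{-3/2}$, both individually divergent at the horizon, and the cancellation is delicate. The substitution $u=g(r)-m$ is engineered precisely to (i) trivialize the lower limit and (ii) replace the $m$-dependent singularity by the universal $1/\sqrt{u}$, so that the remaining $m$-dependence sits in the smooth factor $F(u,m)$. Once this framework is set up, the positivity reduces to the elementary polynomial inequality above.
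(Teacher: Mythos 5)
Your proof is correct, and it takes a genuinely different route from the paper's. The appendix proof parametrizes by horizon radius, rescales $r=e^{t}r_h$, inserts an auxiliary regulator $\epsilon$ inside the square root, and differentiates the regulated volume difference with respect to the rescaling parameter; for $k\geq 0$ it then bounds the integrand below by discarding a manifestly nonpositive term, evaluates the surviving integral as a hypergeometric function, and verifies that once the $\epsilon^{-1/2}$ divergences cancel the finite remainder is positive because $\Gamma(-\tfrac{1}{2}+\tfrac{1}{d})<0$ for $d>2$. Your substitution $u=g(r)-m$ instead pins the horizon at $u=0$ for every $m$ and isolates the universal $u^{-1/2}$ singularity, after which monotonicity reduces to the pointwise positivity of $\tfrac{3d-4}{2}g'(r)-rg''(r)=\tfrac{d-2}{2}\left[(d+2)r^{d-3}+d\,r^{d-1}/L^{2}\right]$; I have checked this algebra and the stated asymptotics (the $u^{-(d+1)/d}$ decay of the subtracted integrand at large $u$, and the integrable $u^{-1/2}$ behavior at the horizon), and they are right. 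Your version is more elementary and arguably cleaner: no regulator, no special functions, and positivity holds pointwise in the integrand rather than only after integration. Two points you should make explicit: (i) the common cutoff $r=r_c$ maps to \emph{different} $u$-cutoffs, $g(r_c)-m$ for the black hole and $g(r_c)$ for the vacuum, so your single-integral formula for $I_{1}(m)/\Omega_{1}$ tacitly uses that $\int_{g(r_c)-m}^{g(r_c)}F(u,0)u^{-1/2}\,\dd u=O\bigl(m\,g(r_c)^{-1/d}\bigr)\to 0$, which does hold since $F(u,0)u^{-1/2}\sim u^{-1/d}$; (ii) what the paper's heavier route buys that yours does not is partial access to the hyperbolic case $k=-1$ at large mass, where $g(r)=-r^{d-2}+r^{d}/L^{2}$ fails to be monotonic near $r=0$ and your diffeomorphism step breaks down --- but that case lies outside the statement of this lemma, which concerns $I_{1}$ only.
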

For $k=-1$ numerical studies \cite{ChaMar16} indicate that
$I_{-1}(m)$ is non-negative, has a global minimum equal to zero at $m=0$, and increases
monotonically with $m$ for $m>0$; at $m<0$ it appears to monotonically decrease, and it diverges at $m = \hat{m}$.
However, at the moment there is no rigorous proof these results, so we only state the $k\in\{0,
1\}$ cases as theorems.

It follows from the above that that totally geodesic hypersurfaces of Stat$_k$ black holes for
$k\in\{0, 1\}$ have  volumes greater than $2 \vol[\Sigma_{\mathrm{AdS}_{d+1}}]$, 
which implies that the right-hand side of \eqref{eq:Vineq} is
greater than one or two times $\vol[\Sigma_{\mathrm{AdS}_{d+1}}]$. This proves
Theorem~\ref{thm:mainthm_dall}.  

To prove Theorem \ref{thm:tfd_dall}, note that  $\partial_m I_k(m) > 0$ for $k\in
\{0, 1\}$ implies $\partial_A I_k(m(A)) > 0$, with $A$ the area of the bifurcation surface,
since the area of the bifurcation surface increases with mass. In the case of two conformal boundaries, we
can now bound the right-hand side of
\eqref{eq:Vineq} from below by $\vol(\Sigma_*)$, where $\Sigma_*$ is a totally geodesic slice of
Stat$_k$ with bifurcation surface area equal to the globally minimal surface area.
 
Finally, we note that the stationarity condition $k+r_*^2 - \omega(r_*)/r_*^{d-2}=0$ together
with the monotonicity of $\omega$ (giving $\omega(\infty)\geq \omega(r_*)$) immediately proves Theorem~\ref{thm:penrose}.

\section{Upper Bounds on Complexity Growth}\label{sec:lloyd}
\subsection{Lloyd's bound}
Lloyd's bound~\cite{Llo00} is a conjectured constraint on general quantum systems given by
\begin{equation}
\begin{aligned}
    \frac{ 1 }{ \Delta \tau } \leq \frac{ 2E }{ \pi },
\end{aligned}
\end{equation}
where $E$ is the energy of the system and $\Delta \tau$ the minimal time that a quantum
state takes to transition from an initial state to a new orthogonal state. This relation constitutes a
bound on the maximal speed of computation. The existence of a bound on the growth of either bulk
volume or action in terms of CFT energy would provide evidence for the holographic complexity proposals, and  there is a large
literature devoted to investigations of holographic complexity growth \cite{StaSus14, SusZha14, RobSta14, BroRob15,
BroRob15b, CaiRua16, MomFai16, PanHua16, CaiSas17, WanYan17, GuoWei17, CarCha17, CotMon17, HosSey17, SwiWan17, Moo17b, AnPen18, AliFar18, CanHen18, ChaMar18, ChaMar18b, AnCai18, AuzBai18, Nag18, CouEcc18,
 Gho18, MahRoy18, TanVaz18, Jia18, Men18, FanGuo18, FenLiu18, Jia18b, AliBab19, Age19, Jia19, Cag19, FanLia19, AnCai19,
ChaChe19,Nag19, ZhoKua19, BhaDas20, PanLi20, Li20, Raz20, BalHen20, ZhoKua21}. In particular, the
points of interest is whether this growth is linear at late times,\footnote{But not so late as to
reach the quantum recurrence time \cite{Sus14a}, where complexity saturates and fluctuates. That circuit complexity indeed
grows linearly until the recurrence time was recently argued for random quantum circuits in~\cite{HafFai21}.} as hypothesized in~\cite{Sus14a}
, and whether there is
an upper bound on the growth of holographic complexity of the form 
\begin{equation}
    \begin{aligned}\label{eq:Lloyd}
\frac{ \dd \mathcal{C} }{ \dd \tau  } \leq c E,
\end{aligned}
\end{equation}
for the CFT energy $E$ and some constant $c$. In the holography literature \eqref{eq:Lloyd} is typically
referred to as Lloyd's bound. For the CV-proposal with the AdS radius as
$L$, it is natural to set $c=\frac{ 8\pi }{ d-1 }$, so
that~\eqref{eq:Lloyd} is an upper bound in terms of the late time volume growth in Schwarzschild-AdS as $M\rightarrow
\infty$ (with the identification of $E$ with
the Schwarzschild mass $M$ \cite{CarCha17}). For charged and rotating spacetimes it is expected that
\eqref{eq:Lloyd} can be strengthed via the inclusion of terms depending on the charge and angular
momentum on the right hand side of \eqref{eq:Lloyd}~\cite{BroRob15, BroRob15b, CaiRua16}.

While specific case studies are invaluable as tests of the complexity proposals, a robust test of whether
\eqref{eq:Lloyd} holds for holographic complexity via CV requires a broad investigation of the growth of
$\mathcal{C}_V$, which has heretofore been elusive.\footnote{Some partial results on Lloyd's bound at late times
in static spacetimes exist for the Complexity=Action proposal \cite{Yan20}.} Here we make strides towards closing that lacuna:

\begin{thm}\label{thm:Lloyd}
    A WCC-satisfying spherically symmetric AAdS$_{d+1\geq 4}$ spacetime with any number of
     minimally coupled $U(1)$ gauge
    fields and scalars, both charged and neutral satisfies
    \begin{equation}\label{eq:provenLloyd}
        \begin{aligned}
            \Big|\frac{\dd \CV}{\dd \tau}\Big| \leq \frac{ 8\pi M }{ d-1 }f(M),
    \end{aligned}
    \end{equation}
    where $M$ is the spacetime mass and 
    \begin{equation}\label{eq:fexplicit}
    \begin{aligned}
    f = \begin{cases}
        1 & M \leq \hat{M} \equiv \frac{ (d-1)\vol(S^{d-1})L^{d-2} }{ 16 \pi G_{N}}\left[\frac{ d }{  4(d-2) }\right]^{\frac{ d-2 }{ 2 }} \\
        1 + \left(\frac{ M }{ \hat{M} }\right)^{\frac{ 1 }{ d-2 }} & \mathrm{otherwise}.
    \end{cases}.
    \end{aligned}
    \end{equation}
   Here $\tau$ is the time parameter in the boundary conformal frame $-\dd \tau^2 + L^2 \dd \Omega^2$. 
\end{thm}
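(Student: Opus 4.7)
The plan is to reduce the volume growth rate to a one-dimensional optimization problem using the canonical coordinates from Lemma~\ref{lem:flow}, and then control the optimization using the monotonicity $\omega'(r)\geq 0$ furnished by the WCC.

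First, I would use spherical symmetry of the ambient spacetime to parametrize the maximal volume slice $\Sigma_\tau$ as a curve $(t(\lambda), r(\lambda))$ in the two-dimensional orbit space, with induced volume element given by $\Omega_{d-1} r^{d-1}$ times the proper length. Because the ambient metric is spherically symmetric (and the matter sector carries spherically symmetric stress-energy), the orbit-space metric admits an asymptotic timelike Killing vector, so the maximal-slice variational problem admits a conserved ``energy'' $E$ along the slice. Matching $E$ to the rate of change of the anchor time should produce a formula of the schematic form
$$ \Big| \frac{d\vol[\Sigma_\tau]}{d\tau} \Big| \;=\; \Omega_{d-1}\, r_*^{d-1} \sqrt{\frac{\omega(r_*)}{r_*^{d-2}} - 1 - \frac{r_*^2}{L^2}},$$
where $r_*$ is a turning-point radius on $\Sigma_\tau$ at which the radial tangent to the slice vanishes and the square-root argument is nonnegative. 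If the slice has no such interior turning point, a direct boundary-layer estimate should yield a strictly smaller bound.

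Second, I would invoke Lemma~\ref{lem:flow}: the WCC gives $\omega'(r)\geq 0$, hence $\omega(r_*)\leq \omega(\infty) = 16\pi G_N M/((d-1)\Omega_{d-1})$. Setting $m \equiv 16\pi G_N M/((d-1)\Omega_{d-1})$ and substituting into the formula above gives
$$ \Big| \frac{d\vol[\Sigma_\tau]}{d\tau}\Big| \;\leq\; \Omega_{d-1}\, \max_{r>0}\, r^{d-1}\sqrt{\frac{m}{r^{d-2}} - 1 - \frac{r^2}{L^2}}, $$
where the maximum runs over $r$ where the argument is nonnegative. This reduces the dynamical problem to an explicit one-dimensional optimization determined entirely by the spacetime mass. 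The stationarity condition on $g(r)^2 = mr^d - r^{2(d-1)} - r^{2d}/L^2$ is $dm = 2(d-1)r_*^{d-2} + 2d\, r_*^{d}/L^2$. Whether $g$ is real on any interval at all is controlled by whether the function $1+r^2/L^2 - m/r^{d-2}$ has a root, i.e.\ whether an AdS-Schwarzschild geometry of mass $M$ admits a horizon; this transition is at $M = \hat{M}$ in exactly the form quoted in the theorem. Solving the stationarity condition and substituting back gives, after standard algebra, the bound $(8\pi M/(d-1))\,f(M)$, with the linear term coming from the $r^d/L^2$ contribution and the subleading $(M/\hat{M})^{1/(d-2)}$ correction coming from the $r^{d-2}$ contribution.

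The main obstacle I anticipate is cleanly establishing the first step for dynamical (non-stationary) spacetimes with matter. In vacuum Schwarzschild-AdS the conserved-charge reduction and its identification with $d\vol/d\tau$ are standard, but here the ambient spacetime carries minimally coupled scalars and $U(1)$ gauge fields and need not be stationary. One must therefore verify that the maximal-slice equations on the two-dimensional orbit space still admit a conservation law (the asymptotic time translation suffices, but the interior conserved quantity must be identified carefully), and that the turning-point data are correctly captured by the geometric mass functional $\omega(r_*)$ rather than by some separate matter contribution. Once this is established, steps two and three reduce to the monotonicity of $\omega$ and an elementary though tedious optimization.
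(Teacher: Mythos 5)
There is a genuine gap, and it sits exactly where you flag it: the first step. Your entire reduction rests on a conserved ``energy'' $E$ along the maximal slice obtained from a timelike Killing vector of the orbit-space metric, and the identification $|\dd V/\dd\tau| = \Omega_{d-1} r_*^{d-1}\sqrt{\,\omega(r_*)/r_*^{d-2} - 1 - r_*^2/L^2\,}$ at a turning point. That reduction is valid only for static (or at least stationary) spacetimes, where the reduced volume functional on the orbit space has no explicit time dependence; an \emph{asymptotic} time translation does not produce a quantity conserved along the interior of the slice. For the dynamical spacetimes the theorem actually covers --- with infalling charged scalars and gauge fields --- there is no such first integral, and the turning-point data are not captured by any single mass function: the growth of $\CV$ is driven precisely by the radial momentum flux $T_{ab}n^a r^b$ (cf.\ the momentum--complexity formula \eqref{eq:trappedCdot}), which vanishes in the static case. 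Relatedly, note that if $\omega(r_*)$ is the induced-metric quantity of Lemma~\ref{lem:flow}, then the stationarity condition \eqref{eq:omegstar} makes your square root vanish identically; to get something nonzero you must mean the spacetime (Misner--Sharp-type) mass function, and controlling \emph{that} along a dynamical maximal slice is the unproven content. The paper avoids this entirely by staying on the initial-data slice: it shows $\dCV \propto \lim_{r\to\infty} r^{d}\mathcal{K}(r)$ as a boundary term, integrates the momentum constraint \eqref{eq:constraintEqsJ} from a critical radius $\hat{r}$ to infinity, and bounds the resulting integral of $r^{d}J_r$ using the Hamiltonian constraint together with a completed square $2|\mathcal{D}_t\phi \mp (r/L)\mathcal{D}_r\phi|^2$ specific to minimally coupled scalar/Maxwell matter (Lemma~\ref{lem:masterlem}). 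That last ingredient is essential and absent from your proposal: the WCC alone bounds the energy density but not the momentum density relative to it, so a proof that never uses the matter content cannot close.

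Two further inaccuracies in your endgame would need repair even if step one held. First, the threshold $\hat{M}$ is not the horizon-formation threshold --- a $k=1$ AdS--Schwarzschild solution with $m>0$ always has a horizon --- but arises in the paper from bounding $\mathcal{K}(\hat{r})$ at the last zero of $1-\omega(r)/r^{d-2}$ (note: not of $1+r^2/L^2-\omega/r^{d-2}$). Second, in your optimization $g(r)^2 = m r^{d} - r^{2(d-1)} - r^{2d}/L^2$ the middle term enters with a minus sign, so it can only \emph{decrease} the maximum below $m^2L^2/4$; your claim that it produces the positive $(M/\hat{M})^{1/(d-2)}$ correction is backwards. (Had your first step been valid, you would in fact have obtained the stronger bound with $f=1$, which the paper explicitly leaves open.)
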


\noindent Theorem~\ref{thm:Lloyd} is an immediate proof that $    \frac{ \CV(\tau) }{ \tau }$ is bounded
at late times (in spacetimes subject to our assumptions). Thus if $\CV$ is not oscillating at late times, then the late
time growth of $\CV$ is at most
linear. Finally, if in some multiboundary spacetime the different boundaries have different masses, then $M$ in \eqref{eq:provenLloyd} is the one
associated to the conformal boundary at which we perturb the time anchoring.

Provided that we can identify the spacetime mass $M$ with the CFT energy $E$, our bound is identical to
Lloyd's bound \eqref{eq:Lloyd} for small masses, while for large masses ($M\geq \hat{M}$) it is qualitatively
similar, although the energy dependence is now non-linear.
Thus, for spacetimes with ordinary falloffs,
Theorem~\ref{thm:Lloyd} constitutes a partial proof of a complexity growth bound, albeit with non-linear energy
dependence, and where it is the vacuum-subtracted CFT energy that appears in the bound.\footnote{Our assumptions about the falloffs are naturally violated in spacetimes where the ordinary gravitational mass $M$, as defined either as in 
\cite{BalKra99b} (with subtraction of Casimir energy) or as the Geroch-Hawking mass at infinity, diverges. In such case the growth of
extremal surface volumes can be infinite, as in e.g. \cite{Moo17b}. In such cases
the field theory energy after subtraction of the Casimir energy is no longer equal to $M$, and
additional counterterms depending on scalar fields contributes to give a finite CFT energy $E$ \cite{HenMar02, Hor03,
HenMar04, HerHor04, HerMae04, HenMar06}.
In this scenario \eqref{eq:provenLloyd} is true but vacuous since both sides of the inequality diverge.} 

Before turning to proving our result, let us finally note that the proof of \eqref{eq:provenLloyd} puts the
CV and CA proposals on quite different footings when it comes to complexity growth. 
It is known that Lloyd's bound cannot hold for the CA proposal \cite{CarCha17, BalHen20}; spacetimes with ordinary
AdS asymptotics and finite gravitational mass can have negative and divergent action growth \cite{CarCha17}.

\paragraph{An easy way to compute complexity change:}\label{sec:easycdot}
Under the assumption of maximal spatial symmetry, the extrinsic geometry of the initial maximal volume slice at some
time is sufficient to yield a simple expression for $\dCV$: given a one-parameter family of extremal slices $\Sigma_{\tau}$
($\Sigma_{0}\equiv \Sigma$) with boundary $\partial \Sigma_{\tau}$, extremality ensures that any volume change from a
perturbation must be a boundary term (see for example the appendices of \cite{FisWis16,BaoCao19}):
\begin{equation}
\begin{aligned}
    \frac{ \dd V[\Sigma_{\tau}] }{ \dd \tau } = \int_{\partial \Sigma_{\tau}} \eta^a N_a, 
\end{aligned}
\end{equation}
where $\eta^a = (\partial_{\tau})^a$ is the displacement field for $\partial \Sigma_{\tau}$ and $N_a$ the outwards unit normal
to $\partial\Sigma_{\tau}$ in $\Sigma_{\tau}$. We want to work with $\Sigma_{\tau}$ which technically do not have
boundaries. However, the above formula can be used by cutting off $\Sigma_{\tau}$ at a finite radius, applying the
formula, and then afterward taking the limit where $\partial \Sigma_{\tau}$ goes to the conformal boundary. 

Assume now that the gravitational mass is finite and remains so on evolution of the initial data on $\Sigma$. 
Expressing $\eta^a N_a$ in terms of quantities over which we have control, carried out in Appendix~\ref{app:Cdot}, we find for
spherical or planar symmetry that
\begin{equation}
\begin{aligned}
    \frac{ \dd \CV }{ \dd \tau } = \frac{ \Omega_{k} }{ G_N L (d-1) }\lim_{r \rightarrow
    \infty}r^{d}K_{ab}r^{a}r^{b},
\end{aligned}
\end{equation}
where $r^{a}$ is the outwards unit normal to $\sigma_r$ tangent to $\Sigma$. Thus, knowledge of
$K_{ab}$ near the
boundary is sufficient to compute $\dCV$.\footnote{Note however that a near boundary analysis is not enough
to obtain $K_{ab}$ in the first place. If we treat the determination of an extremal slice as a shooting
problem from the conformal boundary, a general initial condition at the boundary is incompatible with smoothness in the
bulk, so determining $\dCV$ still requires deep-bulk information.} 
In Appendix~\ref{app:Cdot} we show that we can rewrite this expression in Einstein+matter gravity as
\begin{equation}
    \begin{aligned}\label{eq:trappedCdot}
    \frac{ \dd \CV }{ \dd \tau }
    &=  \frac{ 1 }{ G_N L (d-1)  }\left[r_{\rm throat} \int_{\partial \Sigma_{\rm out}}\sqrt{2\theta_{k}\theta_{\ell}} + 8\pi G_N
    \int_{\Sigma_{\rm out}} r T_{ab}n^{a}r^{b} \right], \\
\end{aligned}
\end{equation}
where now $\partial \Sigma_{\rm out}$ is an outermost stationary surface on $\Sigma$ at radius $r=r_{\rm throat}$, and
$\Sigma_{\rm out}$ is the part of $\Sigma$ outside this surface. Here $n^a$ is the future normal to $\Sigma$, $k$ and $\ell$ the future null normals to
$\partial \Sigma_{\rm out}$ normalized so $k\cdot \ell = -1$, and $\theta_{k}$ and $\theta_{\ell}$ are
the null geodesic expansions.  It follows from \eqref{eq:Htheta} in the
appendix that $\theta_{k}\theta_{\ell}\geq 0$ by stationarity of $\partial \Sigma_{\rm out}$, so the square root is always real
This expression \eqref{eq:trappedCdot} is a 
special case of the more general momentum-complexity correspondence proven in \cite{BarMar20}, 
with the difference that we write our integral on only part of $\Sigma$ at the cost of a boundary term.
In spacetimes where $n^a r^b T_{ab}=0$, which includes AdS-Schwarzschild and AdS-Reissner-Nordstr{\"o}m, we see that the
complexity growth is proportional $\sqrt{\theta_{k}\theta_{\ell}}$ at the throat of $\Sigma$. This suggests a connection between trapped and anti-trapped regions and complexity growth, at least in these spacetimes. The more trapped the throat of
$\Sigma$ is, the faster the complexity growth.

With our convenient formulae for $\dCV$ in hand, we can turn to proving that the mass indeed bounds
$\dCV$ in Einstein-Maxwell-Scalar theory.

\subsection*{Proof of a Lloyd's bound in Einstein-Maxwell-Scalar theory (with spherical
symmetry)}\label{sec:lloydproof}
\subsubsection*{Explicit constraint solutions}
Consider Einstein-gravity minimally coupled to a $U(1)$ gauge field and a charged scalar:
\begin{equation}
    \begin{aligned}\label{eq:Scharged}
        S &= \frac{ 1 }{ 8\pi G_N }\int_{M}\dd^{d+1}x \sqrt{-g} \left[\frac{ 1 }{ 2 }R + \frac{
            d(d-1) }{ 2L^2 } - |\mathcal{D} \phi|^2  -
        V(\phi, \phi^{\dag}) -\frac{ 1 }{ 4 }F_{ab}F^{ab}\right],
\end{aligned}
\end{equation}
where $\mathcal{D}$ is covariant derivative associated with the $U(1)$ gauge field.
The stress tensor reads
\begin{equation}
\begin{aligned}
    8\pi G_N T_{ab} = \mathcal{D}_{a}\phi (\mathcal{D}_{b}\phi)^{\dag} +\mathcal{D}_{b}\phi (\mathcal{D}_{a}\phi)^{\dag}
    - g_{ab}|\mathcal{D}\phi|^2 - g_{ab}V(\phi,
    \phi^{\dag}) +
    F\indices{_{a}^{c}} F_{bc} - \frac{ 1 }{ 4 }g_{ab} F_{cd} F^{cd}.
\end{aligned}
\end{equation}

We now restrict to spherical, planar or hyperbolic symmetry and work in the coordinates \eqref{eq:cancoords}:
\begin{equation}
\begin{aligned}
    \dd s^2|_{\Sigma} = B(r) \dd r^2 + r^2 \dd \Omega^2_k = \frac{ 1 }{ k + \frac{ r^2 }{ L^2 } - \frac{ \omega(r) }{
        r^{d-2} } }\dd r^2 + r^2 \dd \Omega^2_k.
\end{aligned}
\end{equation}
We then get that the energy density and radial energy current on $\Sigma$ are
\begin{equation}
    \begin{aligned}\label{eq:JEexplicit}
    \mathcal{E} &= |\mathcal{D}_t \phi|^2 + \frac{ 1 }{ B(r) }|\mathcal{D}_{r}\phi|^2 + V(\phi, \phi^{\dag}) + \frac{ 1 }{
        2 B(r) }F_{tr}^2 + \frac{ 1 }{ 4 }F^{ij}F_{ij}\\
    J_{r} &= \mathcal{D}_t \phi (\mathcal{D}_{r}\phi)^{\dag} + \mathcal{D}_r \phi (\mathcal{D}_t\phi)^{\dag},
\end{aligned}
\end{equation}
where $\mathcal{D}_t \equiv n^a \mathcal{D}_a$ and $F_{tb}\equiv n^a F_{ab}$, with $n^a$ is the future unit normal to $\Sigma$.
The $i, j$-indices run over coordinates of the transverse space. Symmetry sets $F_{ri}=F_{ti}=0$. Certain nonzero
$F_{ij}$ are possible, but the constraints of Maxwell theory might set these to zero, depending on the topology of
$\Sigma$. The WCC, here equivalent to the WEC, holds when $V(\phi, \phi^{\dag})\geq 0$.

Symmetry and $K=0$ dictates that
\begin{equation}\label{eq:Ksym}
\begin{aligned}
    K_{\alpha\beta}\dd x^{\alpha}\dd x^{\beta}= K_{rr}(r) \left[ \dd r^2 - \frac{ r^2 }{ B(r) (d-1) }
    \dd\Omega^2_{k} \right].
\end{aligned}
\end{equation}
Defining the function $\mathcal{K}$ through $K_{rr} = B(r) \mathcal{K}(r)$, the constraint equations
simply read\footnote{Note that \eqref{eq:constraintEqsE} and \eqref{eq:constraintEqsJ} are true independent of the particular matter we are studying.}
\begin{align}
        (d-1) \frac{ \omega'(r) }{ r^{d-1} } &= 2 \mathcal{E}(r) + \frac{ d }{ d-1
        }\mathcal{K}(r)^{2}, \label{eq:constraintEqsE}\\
        \frac{ 1 }{ r^{d} }\frac{ \dd }{ \dd r }\left[ r^{d}\mathcal{K}(r)\right] &= J_r(r). \label{eq:constraintEqsJ}
\end{align}
These equations are linear ODEs and are readily solved to give
\begin{equation}
    \begin{aligned}\label{eq:scalarSols}
        \mathcal{K}(r) &= \frac{ 1 }{ r^{d} }\left\{ \mathcal{K}(r_0)r_0^{d} + \int_{r_0}^{r}\dd \rho \rho^{d}\left[ \mathcal{D}_t\phi
    (\mathcal{D}_{r}\phi)^{\dag},
    + \mathcal{D}_r \phi (\mathcal{D}_t \phi)^{\dag} \right] \right\} \\
        \omega(r) &= e^{-h(r)}\left[ \omega(r_0) + \frac{ 1 }{ d-1 }\int_{r_0}^{r}\dd \rho e^{h(\rho)} \rho^{d-1} 
        \chi(\rho) \right], \\ 
        h(r) &= \frac{ 1 }{ d-1 }\int_{r_0}^{r} \dd \rho \rho \left( 2 |\mathcal{D}_r \phi|^2 + F_{tr}^2 \right), \\ 
        \chi(r) &=  \frac{ d }{ d-1 }\mathcal{K}(\rho)^2 + 2\left(k +  \frac{ \rho^2 }{ L^2 } \right)|\mathcal{D}_r\phi|^2 +
        2|\mathcal{D}_t\phi|^2 + 2
        V\left(\phi\right) \\ 
        & \qquad + \left(k + \frac{ \rho^2 }{ L^2 }\right)F_{tr}^2 + \frac{ 1 }{ 2 }F^{ij}F_{ij}.  \\
\end{aligned}
\end{equation}
The solution for a canonically normalized real scalar is obtained by sending $\phi \rightarrow \frac{ 1 }{ \sqrt{2} }\phi$ and
replacing $\mathcal{D}_{a}$ with $\partial_{a}$. 

Noting now that $K_{ab}r^{a}r^{b}=\mathcal{K}$ in an outermost coordinate patch where $r^{\alpha}$
points to increasing $r$, we find that the complexity change for spherical or planar symmetry is
\begin{equation}
    \begin{aligned}\label{eq:cdotk}
    \frac{ \dd \CV }{ \dd \tau } &= \frac{ \Omega_{k} }{ G_N L (d-1) }\lim_{r \rightarrow \infty}
    r^{d}\mathcal{K}(r),
\end{aligned}
\end{equation}
while the mass reads
\begin{equation}\label{eq:omegaM}
\begin{aligned}
    M = \lim_{r \rightarrow \infty} M_{\Sigma}[\sigma_r] = \frac{ (d-1)\Omega_{k} }{ 16\pi G_N
    }\omega(\infty).
\end{aligned}
\end{equation}
Thus, proving Lloyd's bound now amounts to constraining the asymptotic values of the solution \eqref{eq:scalarSols}.

\subsubsection*{Some technical lemmas}
We begin by establishing positivity of $\omega$:
\begin{lem}\label{lem:minorlem}
	$\omega(r)$ is positive on any complete maximal volume slice in an AAdS$_{d+1\geq 4}$ spacetime satisfying the WCC with spherical or planar symmetry. 
\end{lem}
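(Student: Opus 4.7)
The plan is to use Lemma~\ref{lem:flow} to cover $\Sigma$ by coordinate patches of the form \eqref{eq:omegametric} bounded by symmetric stationary surfaces (or by $r=0$ or $r=\infty$), and then propagate a non-negative lower bound on $\omega$ outward through each patch via the monotonicity $\omega'(r)\geq 0$ that the same lemma provides. Whenever the inner boundary of a patch is a symmetric stationary surface $\sigma_{r_*}$ with $r_*>0$, the stationarity condition \eqref{eq:omegstar} fixes $\omega(r_*)=r_*^{d-2}(k+r_*^2/L^2)$, which is strictly positive for $k\in\{0,1\}$ and $r_*>0$. Monotonicity then immediately gives $\omega\geq\omega(r_*)>0$ throughout the patch, and any patch bounded on both sides by stationary surfaces is handled identically.

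The remaining possibility flagged by Lemma~\ref{lem:flow} is a patch extending all the way down to $r=0$ without meeting a stationary surface, i.e.\ $\tilde r=0$ with $\sigma_{\tilde r}=\varnothing$. For spherical symmetry ($k=1$), $r=0$ must then be a smooth origin of $\Sigma$; smoothness of the induced metric demands $B(r)\to 1$, equivalently $\omega(r)/r^{d-2}\to 0$, so $\omega(0^+)=0$, and monotonicity yields $\omega\geq 0$ on this patch.

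The main obstacle is the planar case ($k=0$) with a patch reaching $r=0$, where no regular origin is available to pin down $\omega(0^+)$. I would argue from the completeness of $\Sigma$. The Riemannian signature constraint $B(r)^{-1}>0$ becomes $\omega(r)<r^d/L^2$, so letting $r\to 0^+$ forces $\omega(0^+)\leq 0$. If instead $\omega(0^+)=\omega_0<0$, then near $r=0$ one has $B(r)\sim r^{d-2}/|\omega_0|$, which is integrable for $d\geq 3$; radial paths reach $r=0$ in finite proper distance
\begin{equation*}
\int_0^{\epsilon}\sqrt{B(r)}\,\dd r\sim\frac{1}{\sqrt{|\omega_0|}}\int_0^{\epsilon} r^{(d-2)/2}\,\dd r<\infty,
\end{equation*}
while a short computation of the intrinsic Ricci scalar shows it diverges like $|\omega_0|/r^d$ as $r\to 0$. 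This finite-distance curvature singularity contradicts the assumed geodesic completeness of $\Sigma$, so $\omega(0^+)=0$ and hence $\omega\geq 0$ on the patch. Assembling the results across all patches in the decomposition of $\Sigma$ yields $\omega(r)\geq 0$ globally.
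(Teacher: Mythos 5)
Your proof follows essentially the same route as the paper's: pin down a non-negative value of $\omega$ at the inner end of each coordinate patch (zero at a regular or complete $r=0$ end, $\omega(r_*)=kr_*^{d-2}+r_*^d/L^2\geq 0$ at a stationary surface) and propagate outward with $\omega'\geq 0$; the paper phrases this as a one-sided/two-sided split rather than a patch decomposition, and your treatment of the planar $r\to 0$ end is more explicit than the paper's appeal to finite curvature invariants. One step is misstated, though: the intrinsic Ricci scalar does \emph{not} diverge like $|\omega_0|/r^d$. For $h=B\,\dd r^2+r^2\dd Q_k^2$ with $B^{-1}=k+r^2/L^2-\omega/r^{d-2}$ one finds
\begin{equation*}
R[h]=-\frac{d(d-1)}{L^2}+\frac{(d-1)\,\omega'(r)}{r^{d-1}},
\end{equation*}
the would-be $\omega/r^d$ contributions cancelling exactly (consistently with the Hamiltonian constraint \eqref{eq:omegamaster}), so a constant negative $\omega_0$ leaves $R[h]$ finite. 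The singularity is instead detected by the transverse sectional curvature $(k-B^{-1})/r^2=\omega/r^d-1/L^2$, hence by the Kretschmann scalar (which is what the paper invokes). With that substitution your finite-distance-singularity argument, and the lemma, go through.
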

\begin{proof}
Consider first a one-sided spacetime. 
A finite Ricci and Kretschmann scalar as $r \rightarrow 0 $ on $\Sigma$ requires $\lim_{r \rightarrow 0} \omega(r)=0$,
which together with monotonicity from Lemma~\ref{lem:flow} implies
\begin{equation}
\begin{aligned}
    \omega(r) \geq 0.
\end{aligned}
\end{equation}
In the two-sided case, the existence of a globally minimal surface means there is some minimal radius $r_0$ where
\begin{equation}
\begin{aligned}
    \omega(r_0) = k r_0^{d-2} + \frac{ r_0^d }{ L^2 },
\end{aligned}
\end{equation}
    which for $k\in\{0, 1\}$ is positive, and so by monotonicity we have $\omega(r)\geq 0$.
\end{proof}

Next, it will be convenient to introduce
\begin{equation}\label{eq:deltapm}
\begin{aligned}
    \delta_{\pm} &\equiv 2\Big|\mathcal{D}_t \phi \mp \frac{ r }{ L }\mathcal{D}_r \phi \Big|^2 
    + 2\left( k - \frac{ \omega(r) }{ r^{d-2}
        }\right)|\mathcal{D}_r \phi|^{2}+ 2 V(\phi, \phi^{\dag})
        \\ 
        & \qquad +\frac{ d }{ d-1 }\mathcal{K}(r)^2 + \frac{ 1 }{ B(r) }F_{tr}^2 + \frac{ 1 }{ 2
        }F^{ij}F_{ij},
\end{aligned}
\end{equation}
so that the constraint \eqref{eq:constraintEqsE} can be written
\begin{equation}
    \begin{aligned}\label{eq:omegadelta}
    (d-1)\frac{ \omega' }{ r^{d-1} } = \pm \frac{ 2r }{ L }J_{r} + \delta_{\pm}(r). \\
\end{aligned}
\end{equation}
In terms of $\delta_{\pm}$ we can then state the following lemma:
\begin{lem}\label{lem:masterlem}
    Consider the solution \eqref{eq:scalarSols} with spherical symmetry, $d\geq 3$ 
    and $V(\phi, \phi^{\dag})\geq 0$, so that the WCC holds.
    Assume that 
    \begin{equation}
        1 - \frac{ \omega(r) }{ r^{d-2} } 
    \end{equation}
    vanishes somewhere. Then there exists a radius $\hat{r}$ such that
    \begin{align}
        1 - \frac{ \omega(\hat{r}) }{ \hat{r}^{d-2} } &=0,  \label{eq:lemeq1}\\
        2\mathcal{E}(\hat{r})+\frac{ d }{ d-1 }\mathcal{K}(\hat{r})^2 &\leq \frac{ (d-1)(d-2) }{ \hat{r}^2 },  \label{eq:lemeq2}\\
        \delta_{\pm}(r) &\geq 0  \qquad \forall r\geq \hat{r}, \label{eq:lemeq3}
    \end{align}
    and 
    \begin{equation}\label{eq:Jintlem}
    \begin{aligned}
        \pm \int_{\hat{r}}^{\infty}\dd r  r^{d}J_{r}(r) = \frac{ (d-1)L }{ 2 }\left[\omega(\infty)-\hat{r}^{d-2} - \frac{ 1
        }{ d-1 }\int_{\hat{r}}^{\infty} \dd r r^{d-1}\delta_{\pm}(r)\right].
    \end{aligned}
    \end{equation}
\end{lem}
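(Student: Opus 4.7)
The plan is to take $\hat{r}$ to be the outermost (largest) radius at which the hypothesized vanishing $1-\omega(r)/r^{d-2}=0$ occurs, and then to verify the four claims \eqref{eq:lemeq1}--\eqref{eq:Jintlem} in that order, using only the Hamiltonian constraint \eqref{eq:constraintEqsE}, its rewriting \eqref{eq:omegadelta}, the monotonicity of $\omega$ from Lemma~\ref{lem:flow}, and the WCC assumption $V\geq 0$.

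First, I would argue existence of $\hat{r}$. Since the spacetime mass is finite, $\omega(\infty)$ is finite by \eqref{eq:omegaM}, so $\omega(r)/r^{d-2}\to 0$ as $r\to\infty$ (using $d\geq 3$), and thus $1-\omega(r)/r^{d-2}\to 1$ at infinity. By continuity of $\omega$ and the hypothesis, the zero set of $1-\omega(r)/r^{d-2}$ is nonempty and bounded above, so it admits a maximum $\hat{r}$; this yields \eqref{eq:lemeq1} and guarantees $k-\omega(r)/r^{d-2}\geq 0$ for all $r\geq\hat{r}$ (using $k=1$).

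Second, to derive \eqref{eq:lemeq2} I would exploit this outermost property: the function $f(r)=1-\omega(r)/r^{d-2}$ has $f(\hat{r})=0$ and $f(r)>0$ immediately to the right, so its right-derivative at $\hat{r}$ is non-negative. Differentiating and using $\omega(\hat{r})=\hat{r}^{d-2}$ yields $\omega'(\hat{r})\leq (d-2)\hat{r}^{d-3}$. Substituting into \eqref{eq:constraintEqsE} gives $2\mathcal{E}(\hat{r})+\frac{d}{d-1}\mathcal{K}(\hat{r})^2\leq (d-1)(d-2)/\hat{r}^2$, which is \eqref{eq:lemeq2}. For \eqref{eq:lemeq3}, one inspects \eqref{eq:deltapm}: the squared modulus $2|\mathcal{D}_t\phi\mp(r/L)\mathcal{D}_r\phi|^2$, together with $2V$, $\frac{d}{d-1}\mathcal{K}^2$, $\frac{1}{B}F_{tr}^2$, and $\frac{1}{2}F^{ij}F_{ij}$, are each manifestly non-negative on $[\hat{r},\infty)$ (using $V\geq 0$ from the WCC and $1/B>0$, which holds there since $r^2/L^2+k-\omega/r^{d-2}>0$), while the remaining piece $2(k-\omega(r)/r^{d-2})|\mathcal{D}_r\phi|^2$ is non-negative by the first step.

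Finally, for \eqref{eq:Jintlem} I would integrate \eqref{eq:omegadelta} from $\hat{r}$ to $\infty$, use \eqref{eq:lemeq1} to replace $\omega(\hat{r})$ with $\hat{r}^{d-2}$, and rearrange. The main technical obstacle is really the right-derivative argument in the second step: for generic data, \eqref{eq:scalarSols} makes $\omega$ absolutely continuous so that a pointwise derivative exists and the inequality holds as stated, but in the presence of a distributional matter shock at $\hat{r}$ one must interpret $\omega'(\hat{r})$ as the appropriate one-sided (right) limit — which is precisely the quantity needed to establish \eqref{eq:lemeq2}.
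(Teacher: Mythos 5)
Your proposal is correct and follows essentially the same route as the paper's proof: identify $\hat{r}$ as the last zero of $P(r)=1-\omega(r)/r^{d-2}$ using $\omega(\infty)<\infty$, extract \eqref{eq:lemeq2} from the sign of the (one-sided) derivative of $P$ at $\hat{r}$ via the constraint \eqref{eq:constraintEqsE}, get \eqref{eq:lemeq3} from term-by-term positivity of $\delta_{\pm}$ once $P\geq 0$ on $[\hat{r},\infty)$, and obtain \eqref{eq:Jintlem} by integrating \eqref{eq:omegadelta}. Your handling of the case where $P$ touches zero without becoming negative (using only the non-negativity of the right-derivative) is, if anything, slightly more careful than the paper's phrasing.
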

\begin{proof}
    Denote for convenience $P(r) =1 - \frac{ \omega(r) }{ r^{d-2} }$.
Since $\omega(r)$ converges to a finite positive number at $r=\infty$ and $d>2$, $P$ is everywhere positive above some large radius.
Since $P$ is negative somewhere, we have by continuity that there must be a last zero of $P$ that is
    approached from negative $P$, so that $P'>0$ there. Let us denote this zero by $r=\hat{r}$.

    Writing \eqref{eq:constraintEqsE} in terms of $P$, we find
\begin{equation}
    \begin{aligned}\label{eq:Prime}
        P'(r) = \frac{ (d-2) }{ r }(1-P) - \frac{ r }{ d-1
    }\left[2 \mathcal{E} + \frac{ d }{ d-1 }\mathcal{K}(r)^2 \right],
\end{aligned}
\end{equation}
which through $P'(\hat{r})\geq0, P(\hat{r})=0$ implies
\begin{equation}
\begin{aligned}
    \frac{ d-2 }{ \hat{r} } - \frac{ \hat{r} }{ d-1 }\left[2\mathcal{E}(\hat{r}) + \frac{ d }{ d-1 }\mathcal{K}(\hat{r})^2
    \right] \geq 0,
\end{aligned}
\end{equation}
giving \eqref{eq:lemeq2}.

Next, note that a single coordinate system is valid for all $r\geq \hat{r}$ since $\frac{ 1 }{ B(r)
} = r^2 + P$ must be strictly positive there.  Then, integrating \eqref{eq:omegadelta} from
$r=\hat{r}$ up to $r=\infty$ and remembering that $\omega(\hat{r}) = \hat{r}^{d-2}$, we find
\begin{equation}
\begin{aligned}
   (d-1)\left[\omega(\infty)-\hat{r}^{d-2}\right] = \int_{\hat{r}}^{\infty} \dd r r^{d-1} \left[ \pm
   \frac{ 2r }{ L }J_{r}+\delta_{\pm}(r)\right],
\end{aligned}
\end{equation}
yielding \eqref{eq:Jintlem}.

Finally, note that the only potentially negative term in $\delta_{\pm}$ is $2P|\mathcal{D}_r \phi|^2$. However, since
$P$ is positive above $\hat{r}$, so is $\delta_{\pm}$, giving \eqref{eq:lemeq3}.
\end{proof}

\subsubsection*{Proving a Lloyd's bound in Einstein-Maxwell-Scalar theory}
We now finally prove our bound on complexity growth (Theorem \ref{thm:Lloyd}):
\begin{proof}
    Assume first that $P=1-\frac{ \omega(r) }{ r^{d-2} }$ is not everywhere positive. The two-sided case will always
be this category, since at a minimal surface, which must always be present on a complete slice, we
have $P=-r^2$. Then by Lemma~\ref{lem:masterlem} there exists an $r=\hat{r} > 0$ such that 
\begin{equation}\label{eq:masterineq}
\begin{aligned}
    \pm \int_{\hat{r}}^{\infty}\dd r  r^{d}J_{r}(r) \leq \frac{ (d-1)L }{ 2 }[\omega(\infty)-\hat{r}^{d-2}].
\end{aligned}
\end{equation}
    Adding $\pm \mathcal{K}(\hat{r})\hat{r}^{d}$ to both sides of the inequality, we see from the solution of the
    constraint \eqref{eq:scalarSols} that we get
    \begin{equation}
    \begin{aligned}
        \pm \lim_{r \rightarrow \infty}r^{d}\mathcal{K}(r) \leq \frac{ (d-1)L }{ 2 }\left[\omega(\infty) - \hat{r}^{d-2} \pm
        \frac{ 2 }{ (d-1)L }\mathcal{K}(\hat{r})\hat{r}^{d}\right] \\
    \end{aligned}
    \end{equation}
    Let us now without loss of generality assume that $\mathcal{K}(\hat{r})\geq 0$ -- the proof of the opposite sign is entirely
    analogous. Taking the lower sign inequality, multiplying by $-\frac{\Omega_{+1}}{G_NL(d-1)}$, neglecting
    the two
    positive terms proportional to $\mathcal{K}$ and $\hat{r}^{d-2}$, and using \eqref{eq:cdotk} and \eqref{eq:omegaM}, we
    get
    \begin{equation}
    \begin{aligned}
        \dCV \geq -\frac{ 8\pi M }{ d-1 }.
    \end{aligned}
    \end{equation}

    Next, \eqref{eq:lemeq2} together with the positivity of $\mathcal{E}$ from the WEC gives 
        $\mathcal{K}(\hat{r})^2 \leq \frac{ (d-1)^2(d-2) }{ d \hat{r}^2 }$. Using this, the
    upper sign inequality reads
    \begin{equation}
        \begin{aligned}\label{eq:plusineq}
        \lim_{r \rightarrow \infty}r^{d}\mathcal{K}(r) 
        &\leq \frac{ (d-1)L }{ 2 }\left[\omega(\infty)  +\hat{r}^{d-2}\left(\frac{ 2 }{ L } \sqrt{\frac{ d-2 }{ d
        }}\hat{r} - 1 \right)\right]. \\
    \end{aligned}
    \end{equation}
    By monotonicity of $\omega$ we know that 
    \begin{equation}
    \begin{aligned}
        \hat{r}^{d-2} = \omega(\hat{r}) \leq \omega(\infty).
    \end{aligned}
    \end{equation}
    Thus, if
    \begin{equation}
        \begin{aligned}\label{eq:omegainfbound}
        \frac{ 2 }{ L }\sqrt{\frac{ d-2 }{ d }}\omega(\infty)^{\frac{ 1 }{ d-2 }}-1 \leq 0,
    \end{aligned}
    \end{equation}
    then the second bracket of \eqref{eq:plusineq} is negative, and so we find $\dCV \leq \frac{ 8\pi M }{ d-1 }$ after
    multiplying by $\Omega_{+1}/G_N L(d-1)$. In terms of a mass, the bound \eqref{eq:omegainfbound} reads
    \begin{equation}
    \begin{aligned}
        M \leq \frac{ (d-1)\vol[S^{d-1}]L^{d-2} }{ 16 \pi G_N }\left[\frac{ d }{
            4(d-2) }\right]^{\frac{ d-2 }{ 2 }} \equiv  \hat{M},
    \end{aligned}
    \end{equation}
    where we use the notation $\Omega_{+1}=\vol[S^{d-1}]$.
    If $M$ does not satisfy this bound, we instead neglect the $-\hat{r}^{d-2}$ term in \eqref{eq:plusineq} and use
    $\hat{r}^{d-1} \leq \omega(\infty)^{\frac{ d-1 }{ d-2 }}$.
    Multiplying by the usual factor we get
    \begin{equation}
    \begin{aligned}
        \dCV \leq \frac{ 8\pi M }{ d-1 }\left[1 + \left(\frac{ M }{ \hat{M} }\right)^{\frac{
        1}{ d-2 }}\right].
    \end{aligned}
    \end{equation}
    Since we proved $\dCV \geq -\frac{ 8\pi M }{ d-1 }$ for any mass, the above bound also holds for the absolute value, and so our bound is proven in the case where $P$ vanishes
    somewhere.

    Assume now $P\geq 0$ everywhere, which is only possible in the one-sided case. Then we can cover the whole of
    $\Sigma$ with one coordinate system, and \eqref{eq:masterineq} holds with $\hat{r}=0$, which
    immediately gives $|\dCV| \leq \frac{ 8\pi }{ d-1 }M$ after multiplying an
    overall factor.

    Finally, we note that the proof for real scalars is entirely analogous, and the only change with 
    multiple fields is that $\delta_{\pm}$ contains a linear sum over the various fields.
    Lemma~\ref{lem:masterlem} remains
    true in this case also. Thus the above proof applies equally well to any number of gauge fields
    and scalars. 
\end{proof}

    \subsection{A simple formula for $\dCV$ for matter of compact support}\label{sec:compact}
\begin{thm}
    Consider an AAdS spacetime with spherical or planar symmetry, and let $r$ be the area radius. Let $\Sigma$ be a maximal volume slice and assume that the matter has support only for $r\leq \rho$ on $\Sigma$. Let $\sigma_{\rho}$ be the $r=\rho$ surface. 
    Then 
    \begin{equation}
        \begin{aligned}\label{eq:dCV2}
        \dCV^2 = \frac{ \Area[\sigma_{\rho}]}{ 4 G_N }\frac{ 64\pi \rho }{  (d-1) L^2   }(M -
        M_{\Sigma}[\sigma_{\rho}]),
    \end{aligned}
    \end{equation}
    If in addition the spacetime satisfies the WCC, then
    \begin{equation}\label{eq:dCV2ineq}
    \begin{aligned}
        \dCV^2 \leq \frac{ 16\pi \rho \Area[\sigma_{\rho}] }{(d-1) G_N L^2  }M.
    \end{aligned}
    \end{equation}
\end{thm}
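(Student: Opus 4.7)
The plan is to use the constraint equations together with compactness of the matter support to relate the asymptotic quantity controlling $\dCV$ to data evaluated at the surface $\sigma_\rho$. The key observation is that for $r>\rho$ we are in vacuum, so both $\mathcal{E}(r)$ and $J_r(r)$ vanish there, and the system of ODEs \eqref{eq:constraintEqsE}--\eqref{eq:constraintEqsJ} becomes very restrictive.

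First I would apply the momentum constraint \eqref{eq:constraintEqsJ}: since $J_r=0$ for $r>\rho$, the quantity $r^d\mathcal{K}(r)$ is constant on the vacuum region, so
\begin{equation*}
\lim_{r\to\infty} r^d \mathcal{K}(r) = \rho^d \mathcal{K}(\rho).
\end{equation*}
Next I would feed this into the Hamiltonian constraint \eqref{eq:constraintEqsE}, which in the vacuum region reduces to $(d-1)\omega'(r)/r^{d-1} = \frac{d}{d-1}\mathcal{K}(r)^2 = \frac{d}{d-1}\rho^{2d}\mathcal{K}(\rho)^2/r^{2d}$. Integrating from $\rho$ to $\infty$ gives
\begin{equation*}
\omega(\infty)-\omega(\rho) = \frac{\rho^d \mathcal{K}(\rho)^2}{(d-1)^2}.
\end{equation*}
Solving for $\mathcal{K}(\rho)^2$ and substituting into the formula \eqref{eq:cdotk} for $\dCV$ yields
\begin{equation*}
\dCV^2 = \frac{\Omega_k^2 \rho^{2d}\mathcal{K}(\rho)^2}{G_N^2 L^2 (d-1)^2} = \frac{\Omega_k^2 \rho^d}{G_N^2 L^2}\bigl[\omega(\infty)-\omega(\rho)\bigr].
\end{equation*}
Using the identification of $\omega$ with mass through \eqref{eq:omegaM} (applied both at infinity and at $\sigma_\rho$), and rewriting $\Omega_k \rho^d = \rho\,\Area[\sigma_\rho]$, the claimed identity \eqref{eq:dCV2} drops out after collecting factors.

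For the inequality \eqref{eq:dCV2ineq}, I would invoke Lemma \ref{lem:minorlem}: under the WCC, spherical or planar symmetry, and $d\geq 3$, the Geroch-Hawking-type mass functional $M_\Sigma[\sigma_\rho]\propto\omega(\rho)$ is nonnegative, so $M-M_\Sigma[\sigma_\rho]\leq M$ and \eqref{eq:dCV2ineq} follows immediately. The only place where anything nontrivial could go wrong is the vacuum integration step; I expect no serious obstacle, since the compact support hypothesis trivially kills the source terms and the Hamiltonian constraint in vacuum is essentially an exact derivative once $r^d\mathcal{K}(r)$ is known to be constant. The main conceptual content, rather than a computational difficulty, is simply the observation that when matter is confined inside $\sigma_\rho$, both the boundary-value quantity determining $\dCV$ and the total mass are controlled by the single piece of Cauchy data $\mathcal{K}(\rho)$ modulo $\omega(\rho)$, giving an exact algebraic relation rather than only a bound.
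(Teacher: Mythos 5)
Your proposal is correct and follows essentially the same route as the paper: solve the momentum and Hamiltonian constraints explicitly in the vacuum region $r>\rho$, express $\dCV$ through the conserved quantity $r^{d}\mathcal{K}(r)=\rho^{d}\mathcal{K}(\rho)$, convert $\omega(\infty)-\omega(\rho)$ into $M-M_{\Sigma}[\sigma_{\rho}]$ via \eqref{eq:omegaM}, and invoke Lemma~\ref{lem:minorlem} for the WCC inequality. All intermediate expressions check out against the paper's explicit vacuum solution.
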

\begin{proof}
    Since matter has compact support, we can explicitly solve \eqref{eq:constraintEqsE} and \eqref{eq:constraintEqsJ} outside the support of the
    matter:
    \begin{equation}
    \begin{aligned}
        \mathcal{K}(r) &= \frac{ \rho^{d} }{ r^{d} }\mathcal{K}(\rho) \\
        \omega(r) &= \omega(\rho) + \frac{ 1 }{ (d-1)^2 }\mathcal{K}(\rho)^2\left(\frac{ \rho }{ r
        }\right)^{d}(r^{d}-\rho^{d}),
    \end{aligned}
    \end{equation}
    Using that
    \begin{equation}
    \begin{aligned}
        \dCV = \frac{ \Omega_{k} }{ G_N L(d-1) }\rho^{d}\mathcal{K}(\rho), \\
        M_{\Sigma}(\rho) =  \frac{ (d-1)\Omega_{k} }{ 16\pi G_N }\omega(\rho),
    \end{aligned}
    \end{equation}
    we find
    \begin{equation}
    \begin{aligned}
        M = \frac{ (d-1) G_N L^2 }{ 16\pi \Omega_{k} }\frac{ 1 }{ \rho^{d} } \dot{\mathcal{C}}^{2} + M_{\Sigma}(\rho),
    \end{aligned}
    \end{equation}
    giving \eqref{eq:dCV2}. If the WCC holds, then we know that $M_{\Sigma}(\rho)\geq 0$ from
    Lemma~\ref{lem:minorlem}, giving \eqref{eq:dCV2ineq}.
\end{proof}
These formulas make reference to bulk quantities, and so they are useful only when working on the gravitational side.
Furthermore, the compact support restriction makes the result less relevant when gauge fields are present. But 
the bound can be useful for other kinds of matter, or when compact support is a good
approximation. Note also that we do not need compact support in spacetime -- only on a spatial slice. 

\section{Discussion}\label{sec:discussion}

The volumes of maximal slices are among the most natural diffeomorphism invariant gravitational observables in AAdS
spacetimes; these are sensitive to the black hole interior and more generally constitute a more fine-grained
gravitational observable than e.g. the areas of extremal surfaces. The Complexity=Volume relation stands to shed light
on significant aspects of the holographic correspondence if the details of the proposal can be made precise and the proposal can be rigorously established (see~\cite{IliMez21} for some steps in this direction).

Here we have investigated the consistency of CV from a bulk perspective: if the proposal is a fundamental entry
in the holographic dictionary, it dictates constraints on the behavior of maximal volume slices that should be provable
independently using just geometry, analogous to the geometric proof of strong
subadditivity of holographic entanglement entropy~\cite{HeaTak, Wal12}, entanglement wedge nesting~\cite{Wal12}, causal wedge inclusion~\cite{Wal12}, etc. Under an interpretation of $\mathcal{C}_F$ as complexity with the vacuum as reference state, vacuum-subtracted volumes must be strictly positive in all spacetimes not identical to pure AdS. We established this
result rigorously in broad generality in four bulk dimensions, assuming the weak
energy condition. We have also established a weaker statement in other dimensions, that the
vacuum-subtracted volume is positive in spacetimes with sufficient symmetry or in perturbations of the AdS vacuum \cite{HuJi16} (again
assuming the weak energy condition). The more general statement for arbitrary spacetimes satisfying the weak energy
condition would follow from a modification of a well-known mathematical conjecture
\cite{Schoen,Bra97} from
compact to conformally compact manifolds. 

Until now, broadly applicable results on maximal volume slices in holography have been
sparse in comparison with those on (quantum)
extremal surfaces.\footnote{Although not entirely absent: see~\cite{CouEcc18} for example.} This gap is at least partly a consequence of relatively few available techniques for maximal volumes;
the holographic entanglement entropy proposal benefited from a readily-available arsenal of geometric tools controlling
the behavior of codimension-two surfaces long predating holography~\cite{Pen64,HawPen70,HawEll}. Here we initiated the construction of a similar toolbox for maximal volumes,  adapting results from mathematics~\cite{Cho14, BreCho14, AndMet07, AndEic10} in four dimensions and developing new techniques in general $D$. The utility of our technology is immediate: beyond the positive complexity theorem in four dimensions, the new tools have given a derivation of a version of a Lloyd's bound for spatially symmetric maximal volume slices in a large class of matter, which is thus far the broadest proof of the bound on holographic
complexity growth. We have also shown that wormhole complexity is indeed bounded
from below by the thermofield double (with given energy) in general spacetimes
satisfying the weak energy condition. 

It would be interesting to see if our methods could be
refined to derive stronger bounds on complexity growth when charge is present, given that our proof amounted to writing a formula
$(d-1)|\dot{\mathcal{C}}| = 8\pi M - \Delta$ where there pure gauge field terms always contribute positively to
$\Delta$.\footnote{More precisely, we have $\pm(d-1)\dot{\mathcal{C}}_V = 8\pi M - \Delta_{\pm}$ where $\Delta_{\pm}$ is
proportional to the intergral of the quantity $r^{d-1}\delta_{\pm}$ given by \eqref{eq:deltapm}. } A strictly positive lower
bound on their contribution to $\Delta$ in terms of charge would give a
strengthening of our bound. 
It is also a possibility that our bound on complexity growth is not
    strict, and that the bound is true with $f(M)=1$. Yet another possible avenue for further exploration is in spacetimes of
different asymptotics, such as those of \cite{BalKou08,KacXia08,Tay08,Son08,DonHar12}.

The ubiquity of the weak energy condition as an assumption in our theorems raises a potential question: why must we exchange the null energy condition, typically used to prove consistency in the holographic entropy context, for the more restrictive weak energy condition? As we will show in a companion article, the weak energy condition is in fact necessary, and violations of it such as vacuum decay can indeed result in negative vacuum-subtracted complexity, even in wormhole geometries. A potential conclusion is that the gap between the weak energy condition and the
null energy condition is sourced by classical matter whose fine-grained properties in holography are qualitatively different
from weak energy condition respecting fields in ways that are less obvious in coarser observables such as entropy. In particular, $\mathcal{C}_F$ as currently defined cannot be reinterpreted as the complexity
itself for spacetimes with such matter (consisting of e.g. tachyonic scalars satisfying the
Breitenlohner-Freedman bound.)

The assumption of hyperbolicity in our theorems is a more innocuous one in AdS/CFT, although it is nevertheless possible
to violate it via the inclusion of 
compact dimensions. That is, an
 asymptotically AdS$_{d+1}\times K$
spacetime, $K$ can be picked to be a compact space that spoils the assumption of hyperbolicity even for static slices of
pure AdS (when the compact dimensions are included). In an upcoming paper \cite{EngFol21b} we show that when this happens (which can be the case in many supergravity theories of interest), then the
comparison theorems can be violated even when the weak energy condition holds, resulting in negative ${\cal C}_{F}$. This suggests that the effects of compact
dimensions on the CV proposal are nontrivial and deserve further study. 

Let us now briefly discuss a potential application of our results to the open question of holographic complexity as applied to subregions.

\paragraph{Mixed state holographic complexity:} Given a reduced density matrix $\rho_R$, what is the holographic dual of the least complex purification of $\rho_{R}$  (under the constraint of no unentangled
qubits in the purifier)? This question was initially asked by \cite{AgoHea18}, who dubbed the corresponding quantity the
purification complexity $C_{P}(\rho_{R})$.\footnote{It was proposed by~\cite{Ali15} that the relevant geometric quantity
is the volume of $\Sigma_{E_{R}}$, the maximal volume slice of the entanglement wedge of $R$. However, this proposal
falls short of satisfying the requisite qualitative properties predicted by tensor network models \cite{AgoHea18}.} The
natural bulk dual to this quantity would require a minimax procedure: consider the maximal volume slices of all possible
classical\footnote{It is of course possible that the optimal purification does not have a semiclassical dual. However,
at least in some cases -- e.g. for optimized ($n$-point) correlation measures -- it was established in~\cite{Che19} that
a minimization over semiclassical bulk spacetimes will in fact accomplish a global minimum over a holographic CFT's
Hilbert space.} bulks that complete the entanglement wedge of $R$ into an inextendible spacetime, and then minimize its
volume over the spacetime geometries. The relevant question then appears to be: given $\rho_{R}$, what is the spacetime of least complexity containing the entanglement wedge of $R$?

The volume of the maximal volume slice of
this spacetime is the obvious candidate to $\mathcal{C}_P(\rho_R)$.\footnote{
    This way of computing $\mathcal{C}_P(\rho_R)$ was proposed in
\cite{CacCou18}, although they also minimized the complexity over bulk cutoffs, allowing the bulk cutoff to move deep
into the bulk. Then then solution is that $\Sigma_{E_R}$ itself is the Cauchy slice of
the purified bulk spacetime, with the conformal boundary moved in sufficiently far that the HRT surface itself now corresponds to a
piece of the conformal boundary. This leads to a peculiar situation where the state proposed to
purify $\rho_R$ is a state on a QFT where the UV cutoff drastically differs in different regions
of space. This appears in tension with purity, since the theory on the HRT surface must purify the
state on the real conformal boundary, even as the UV cutoff is taken arbitrarily small and the
number of qubits there grows without bound. Avoiding the minimization over the cutoff altogether appears to be the more natural option.
}
The results from Sec.~\ref{sec:lowerbound} allow for
a concrete computation. Consider the case of a two-sided
spherically symmetric connected spacetime, and take $R$ to be a complete timeslice of one of the two conformal
boundaries. Assuming the weak energy condition, a computation carried out in Appendix~\ref{sec:diamondappendix} proves that the spherically symmetric spacetime of least
$\mathcal{C}_V$ complexity that contains $E_R$ is a one-sided spacetime with $\mathcal{C}_V$ complexity
\begin{equation}\label{eq:CPcomp}
\begin{aligned}
\frac{ V(\Sigma_{E_R}) }{ G_N L } +S_{\mathrm{vN}}(\rho_R) \frac{
            4 r  }{ d L } {}_2 F_{1}\left(\frac{ 1 }{ 2 }, \frac{ d }{ 2 }, \frac{ 2+d }{ 2 }, - \frac{ r^2 }{
            L^2 } \right),
\end{aligned}
\end{equation}
where $r$ is the area radius of the HRT surface.\footnote{We cannot rule out that there is a spacetime that breaks
spherical symmetry with even lower complexity.}  
The above is just the volume obtained after gluing a ball of the hyperbolic plane to the maximal volume slice of the
entanglement wedge, so that the resulting surface is a complete initial data set (with $K=0$). See
Fig.~\ref{fig:purification} for an illustration.
\begin{figure}
\centering
\includegraphics[width=1.0\textwidth]{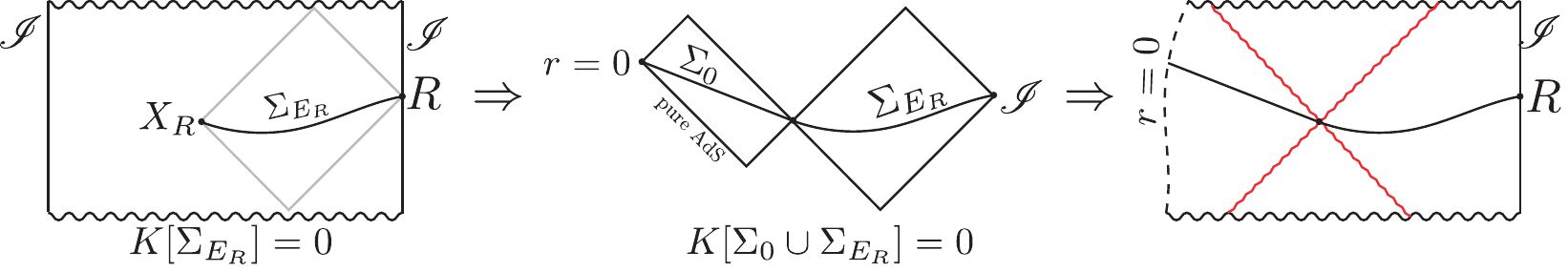}
    \caption{Example of the proceedure used to build the simplest WCC respecting spherically symmetric spacetime  (at time $R$) containing the entanglement wedge $E_R$ of
    $R$. The extremal Cauchy slice $\Sigma_{E_R}$ of an entanglement wedge $E_R$ (left) is glued to a ball
    $\Sigma_0$ of the hyperbolic plane with extrinsic curvature $K_{ab}=0$ (center). Then the maximal Cauchy evolution
    of $\Sigma_{E_R}\cup \Sigma_0$ gives the wanted spacetime (right). The red wiggly lines describe positive energy
    shocks.}
\label{fig:purification}
\end{figure}

Ref. \cite{AgoHea18} argued for several qualitative features of $\mathcal{C}_{P}$ that are satisfied by~\eqref{eq:CPcomp}. First, $\mathcal{C}_P(\rho_R)$ is expected to go as  $c_1 n + |c_2| S(\rho_R)$ for
unknown $c_i$, where $n$ is the number of qubits in the state. Since
the leading UV-divergence of the volume of $\Sigma_{E_R}$ is of the form $\vol[R]/\epsilon^{d-1}$,
which can be thought of as counting the number of lattice sites in the discretized theory, the two terms in
Eq.~\eqref{eq:CPcomp} have this form. Second, $\mathcal{C}_{P}$ was argued to satisfy subadditivity for the thermofield
double state: $2 \mathcal{C}_P(\rho_{\beta}) > \mathcal{C}(\ket{\mathrm{TFD}, \beta})$. Again, this is
realized by our result when applied to one side of a Schwarzschild black hole, 
unlike in the case where the second term of \eqref{eq:CPcomp} is not included.

We can make further progress on this front by making the assumption that the optimal purification is always given by
completing $\Sigma_{E_R}$  with a compact subset $H$ of the hyperbolic plane.\footnote{Of course we should prescribe $K_{ab}$ to have initial data. One can show that
gluing a subset of the hyperbolic plane to an extremal surface respects the WCC and the NEC
if we assign $K_{ab}=0$ to $H$. If the AdS WCC does not hold but there is still a lower bound on
energy density in our theory, we should then choose hyperbolic space with the smallest cosmological constant
consistent with our minimal energy density, so as to minimize its volume.} This assumption is motivated by the
conformally compact generalization of Schoen's conjecture and the following theorem (together with the fact that
hyperbolic space has constant sectional curvature sectional curvature $-\frac{ 1 }{ L^2 }$):
\begin{thm}[\cite{Yau75}]
Let $\Sigma$ be a compact domain in a $d$-dimensional complete simply connected Riemannian manifold with sectional curvature bounded from above by
$-\kappa^2$. Then
\begin{equation}
\begin{aligned}
    \vol[\Sigma] \leq \frac{ \Area[\partial \Sigma] }{ (d-1) |\kappa| }.
\end{aligned}
\end{equation}
\end{thm}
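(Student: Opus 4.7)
The statement is a classical result of Yau, and the cleanest route is through the Hessian comparison theorem combined with the divergence theorem. The plan is to construct an auxiliary vector field on $\Sigma$ of bounded norm whose divergence is bounded below by a positive constant determined by $\kappa$; integrating the divergence over $\Sigma$ and converting the result to a boundary integral immediately yields the inequality.

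First, because the ambient manifold $N$ is complete, simply connected, and has sectional curvature bounded above by $-\kappa^{2}\leq 0$, it is a Cartan--Hadamard manifold and the exponential map $\exp_{p}\colon T_{p}N\to N$ is a diffeomorphism for any choice of basepoint $p$. Consequently the distance function $r(x)=d(x,p)$ is smooth on $N\setminus\{p\}$ and satisfies $|\nabla r|=1$ there. I would pick $p$ to be any point (for example inside $\Sigma$, or if one prefers to avoid the singularity of $r$ at $p$, outside $\Sigma$; the final estimate does not depend on this choice because $p$ can be pushed off to infinity or the contribution from a small geodesic ball around $p$ is controlled).

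Second, I would invoke the Hessian comparison theorem: since the radial sectional curvatures of $N$ are $\leq -\kappa^{2}$, one has pointwise
\begin{equation*}
\nabla^{2}r \;\geq\; \kappa\coth(\kappa r)\bigl(g - dr\otimes dr\bigr),
\end{equation*}
and therefore $\Delta r\geq (d-1)\kappa\coth(\kappa r)\geq (d-1)|\kappa|$. The vector field $X\equiv\nabla r$ then satisfies $|X|\leq 1$ and $\mathrm{div}\,X=\Delta r\geq (d-1)|\kappa|$ on $\Sigma$ (minus a negligible set near $p$ if $p\in\Sigma$).

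Third, by the divergence theorem applied on $\Sigma$,
\begin{equation*}
(d-1)|\kappa|\,\vol[\Sigma]\;\leq\;\int_{\Sigma}\mathrm{div}\,X\,dV \;=\;\int_{\partial\Sigma}X\cdot\nu\,dA\;\leq\;\Area[\partial\Sigma],
\end{equation*}
since $|X\cdot\nu|\leq |X|\leq 1$. Rearranging gives the desired bound. The only technical nuisance is the case $p\in\Sigma$, where $r$ fails to be smooth at $p$; this I would handle by excising a small geodesic ball $B_{\varepsilon}(p)$, applying the divergence theorem on $\Sigma\setminus B_{\varepsilon}(p)$, and noting that the boundary contribution from $\partial B_{\varepsilon}(p)$ is $O(\varepsilon^{d-1})$ and vanishes as $\varepsilon\to 0$. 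This is the only real subtlety; the substantive content sits in the Hessian comparison step, which I would cite rather than reprove.
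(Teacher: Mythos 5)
The paper does not prove this statement; it is imported verbatim from Yau's 1975 work as an external citation, so there is no internal proof to compare against. Your argument is correct and is essentially Yau's original one: Cartan--Hadamard guarantees the distance function is smooth away from the basepoint, the Hessian comparison theorem under the upper sectional curvature bound $-\kappa^2$ gives $\Delta r \geq (d-1)\kappa\coth(\kappa r) \geq (d-1)|\kappa|$, and the divergence theorem applied to $\nabla r$ (with $|\nabla r|=1$) converts this into the volume--area inequality. Your handling of the singularity at $p$ is also fine --- and in fact the excised boundary sphere contributes with a negative sign (outward normal $-\nabla r$), so it only strengthens the inequality before vanishing; alternatively, taking $p\notin\Sigma$ avoids the issue entirely. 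No gaps.
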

This immediately yields the following lower and upper bounds on purification complexity:
\begin{equation}
\begin{aligned}
    \frac{ \vol[\Sigma_{E_R}] }{ G_N L } \leq \mathcal{C}_P(\rho_{R}) \leq \frac{ \vol[\Sigma_{E_R}] }{ G_N L } + \frac{ 4 }{
    d-1 }S_{\mathrm{vN}}(\rho_{R}).
\end{aligned}
\end{equation}
Note that if the WCC is violated but there is still a lower bound on local energy density, we could get a similar bound,
but with an altered prefactor in the entropy term.

\section*{Acknowledgments}
It is a pleasure to Chris Akers, Simon Brendle, Shira Chapman, Otis Chodosh,
Sebastian Fischetti, Daniel Harlow,
Daniel Roberts, Shreya Vardhan and Ying Zhao for discussions.  This work is supported in part by NSF grant no.
PHY-2011905 and the MIT department of physics. The work of NE was also supported in part by the U.S. Department
of Energy, Office of Science, Office of High Energy Physics of U.S. Department of Energy under grant Contract Number
DE-SC0012567 (High Energy Theory research) and by the U.S. Department of Energy Early Career Award DE-SC0021886. The work of \AA{}F is also
supported in part by an Aker Scholarship. 

\appendix
\section{Appendix}
\toclesslab\subsection{Asymptotic hyperbolicity}{sec:asymhyp}
Let now $(\tilde{\Sigma}, \tilde{h})$ be hyperbolic $3$-space, with metric
\begin{equation}
\begin{aligned}
\dd s^2 = \frac{ 1 }{ 1+r^2 }\dd r^2 + r^2 \dd \Omega^2.
\end{aligned}
\end{equation}
Consider a second Riemannian $3$-manifold $(\Sigma, h)$, potentially with boundary,
where $\Sigma = \tilde{\Sigma} \setminus K$ for a compact (possibly empy) set $K$. Then $(\Sigma, h)$ is in Ref.~\cite{BreCho14} defined to be
asymptotically hyperbolic if
\begin{equation}
\begin{aligned}
        |h-\tilde{h}|_{\tilde{h}} &= \mathcal{O}(r^{-2-\delta}), \quad \delta >0, \\
        \lim_{r \rightarrow \infty} |\tilde{D}(h-\tilde{h})|_{\tilde{h}} &= 0,
\end{aligned}
\end{equation}
where $|\cdot |_{\tilde{h}}$ is the pointwise tensor norm taken with respect to $\tilde{h}$.

Consider now an AAdS$_4$ spacetime $(M, g)$ and pick a boundary conformal frame so that 
\begin{equation}
\begin{aligned}
\dd s^2|_{\partial M} = - \dd t^2 + \dd \Omega^2.
\end{aligned}
\end{equation}
We proceed now to show that an extremal hypersurface $\Sigma$ in an AAdS$_4$ spacetime anchored at $t=\text{const}$ slice
on the boundary is asymptotically hyperbolic. 

First, note that the Fefferman-Graham coordinates of pure AdS$_4$ adapted to the Einstein static universe of unit spatial radius on the boundary reads
\begin{equation}
\begin{aligned}
    \dd s^2 = \frac{ 1 }{ z^2 }\left[\dd z^2 - \frac{ (4+z^2)^2 }{ 16 }\dd t^2 + \frac{ (4-z^2)^2 }{ 16 }\dd \Omega^2
    \right].
\end{aligned}
\end{equation}
Going to a general AAdS$_4$ spacetime with the falloffs we are considering, we have
\begin{equation}
\begin{aligned}
    \dd s^2 = \frac{ 1 }{ z^2 }\left[\dd z^2 - \left(1 + \frac{ z^2 }{ 2 } \right)\dd t^2 + \left(1 - \frac{ z^2 }{ 2 } \right)\dd \Omega^2
    + z^3 \mathcal{T}_{ij}(x) \dd x^{i}\dd x^{j} + \mathcal{O}(z^4)\right],
\end{aligned}
\end{equation}
where $i, j$ are boundary coordinate indices running over $t, \theta, \varphi$.

Consider now a spatial hypersurface $\Sigma$ in $(M, g)$ that is extremal, and with intrinsic coordinates $x^{\alpha}=(z, \theta,
\varphi)$ and embedding coordinates $X^{\mu}=(z, t(z, \theta, \varphi), \theta, \varphi)$. Generically, once the boundary
anchoring time is fixed, there is only one slice that will be
smooth and complete in the bulk, and so the integration constant which determines how $\Sigma$ leaves the boundary is fixed
once the boundary anchoring location of $\Sigma$ is fixed. However, from a near boundary analysis it is impossible to know
this integration constant, so we will have to allow it to be general.

Let us take the ansatz for the expansion of $t(z, \theta, \varphi)$ near the boundary to be given by
\begin{equation}
\begin{aligned}
    t(z, \theta, \varphi) = \sum_{n=0}^{\infty} t_n(\theta, \varphi) z^{n}.
\end{aligned}
\end{equation}
Since we consider a slice of constant $t$ on the boundary, we have that $t_0$ is a constant. 
With this expansion we can compute the mean curvature $K$ in a small$-z$ expansion and demand that it vanishes order by
order in $z$. The exact form of the equations are ugly and not needed. We only need the basic structure. We find that
\begin{equation}
\begin{aligned}
    t_1(\theta, \varphi) = t_2(\theta,\varphi) = t_3(\theta, \varphi) = 0.
\end{aligned}
\end{equation}
If the anchoring time $t_0$ was not constant, then $t_2$ would become nonzero and given by an algebraic expression of the derivatives of $t_0$. 
The function $t_4$ is the integration constant referred to above. It encodes deep bulk information
and is fixed by requiring that $\Sigma$ is a smooth slice. Higher $t_n$ are
fixed by $t_0$, $t_4$, $\mathcal{T}_{ij}$ and higher order terms in the metric (which in turn depends only on the boundary
conformal structure and $\mathcal{T}_{ij}$). 

The induced metric on $\Sigma$ now reads
\begin{equation}
\begin{aligned}
    h_{\alpha\beta} &= g_{\alpha\beta} +  g_{tt} \partial_{\alpha}t \partial_{\beta}t \\
    &= \frac{ 1 }{ z^2 } \begin{pmatrix}
    1 & 0 & 0 \\
    0 & (1 - \frac{ z^2 }{ 2 }) + z^3 \mathcal{T}_{\theta\theta} & \mathcal{T}_{\theta\phi}z^3 \\
    0 & \mathcal{T}_{\phi\theta}z^3 & (1-\frac{ 1 }{ 2 }z^2) \sin\theta^2 + \mathcal{T}_{\phi\phi}z^3 \\
\end{pmatrix} 
    + \mathcal{O}(z^2).
\end{aligned}
\end{equation}
In the above, higher order terms depend both on the spacetime geometry and $t_4$. Since the $t_4$
dependence is in the higher order term, setting $\mathcal{T}_{ij}=0$ gives the metric of
the hyperbolic plane, $\tilde{h}_{\alpha\beta}$, up to
$\mathcal{O}(z^2)$ corrections. Defining for notational convenience
$\mathcal{T}_{\alpha\beta}=\mathcal{T}_{ij}$ when $\alpha, \beta, i, j \in \{\theta,
\varphi \}$ and $\mathcal{T}_{z\alpha}=0$, we have
\begin{equation}
\begin{aligned}
    |h-\tilde{h}|^2_{\tilde{h}} &= \tilde{h}^{\alpha\gamma}\tilde{h}^{\beta\delta}(h-\tilde{h})_{\alpha\beta}(h-\tilde{h})_{\gamma\delta} \\
                            &= \frac{ 1 }{ z^4 }\tilde{h}^{\alpha\gamma}\tilde{h}^{\beta\delta} \left(z^3
                            \mathcal{T}_{\alpha\beta}+\mathcal{O}(z^4)\right) \left( z^3 \mathcal{T}_{\gamma\delta} + \mathcal{O}(z^4)
                            \right) \\
                            &=\mathcal{O}(z^{6}).\label{eq:hdiff}
\end{aligned}
\end{equation}
But to leading order, we have $z=\frac{ 1 }{ r } + \mathcal{O}(r^{-2})$, where $r$ is the coordinate used to define
asymptotic hyperbolicity. Thus we find
\begin{equation}
\begin{aligned}
    |h-\tilde{h}|_{\tilde{h}} = \mathcal{O}(r^{-3}),
\end{aligned}
\end{equation}
meaning that $\Sigma$ satisfies the first condition for asymptotic hyperbolicity. Next, note that
\begin{equation}
\begin{aligned}
    \tilde{D}_{\gamma}(h-\tilde{h})_{\alpha\beta} = \tilde{D}_{\gamma}\left[ z \mathcal{T}_{\alpha\beta}(x) + \mathcal{O}(z^2) \right] =
    \mathcal{O}(1).
\end{aligned}
\end{equation}
Since the three inverse metrics involved in calculating $|\tilde{D}(h-\tilde{h})|^2$ brings a total power of $z^6$, we find that 
\begin{equation}
\begin{aligned}
    |\tilde{D}(h-\tilde{h})|  = \mathcal{O}(z^3) = \mathcal{O}(r^{-3})
\end{aligned}
\end{equation}
and so the second condition,
\begin{equation}
\begin{aligned}
    \lim_{r \rightarrow \infty} |\tilde{D}(h-\tilde{h})| =0, 
\end{aligned}
\end{equation}
holds. Hence $\Sigma$ is asymptotically hyperbolic.

Finally we note that if $t_0$ was not constant, then $t_2$ would not vanish, and we would have $\mathcal{O}(1)$ corrections in $h_{zz}$ depending on
$t_2$. This factor would not be present for the metric of hyperbolic space, and so $(h-\tilde{h})_{\alpha\beta}$ would now
be $\mathcal{O}(z^2)$ rather than $\mathcal{O}(z^3)$, and so the falloff in Eq.~\eqref{eq:hdiff} would end up being $\mathcal{O}(z^4)$ instead, which would mean $\Sigma$ was not asymptotically
hyperbolic in general. 

\toclesslab\subsection{$d$-dimensional Geroch-Hawking-mass with a cosmological constant}{sec:GerHawMass}
Consider a maximal volume slice $\Sigma$ of a spacetime with maximal spatial symmetry.
The mean curvature of a constant$-r$ surface $\sigma$ in $\Sigma$ in the coordinate system~\eqref{eq:cancoords} reads
\begin{equation}
\begin{aligned}
    H = D_{\alpha}r^{\alpha} = \frac{ d-1 }{ r\sqrt{B(r)} },
\end{aligned}
\end{equation}
where $r^{\alpha}= \frac{ 1 }{ \sqrt{A} }(\partial_r)^{\alpha}$ is the unit normal pointing to increasing $r$. 
We find
\begin{equation}
\begin{aligned}
    H^2 &= \frac{ (d-1)^2 }{ r^2 }\left(k + \frac{ r^2 }{ L^2 } - \frac{ \omega(r) }{ r^{d-2}
    }\right).
\end{aligned}
\end{equation}
Noting that a constant$-r$ surface has intrinsic Ricci scalar 
\begin{equation*}
\begin{aligned}
    \mathcal{R} = k \frac{ (d-1)(d-2) }{ r^2 },
\end{aligned}
\end{equation*}
we can rewrite $\omega$ as follows:
\begin{equation*}
\begin{aligned}
    \omega &= r^{d}\left[\frac{ \mathcal{R} }{ (d-1)(d-2) } - \frac{ H^2 }{ (d-1)^2 } + \frac{ 1
    }{L^2 }\right] \\
     &= \left(\frac{ A[\sigma] }{ \Omega_{k} }\right)^{\frac{ d }{ d-1 }}\frac{ 1 }{ A[\sigma]
         }\int_{\sigma}\bm{\epsilon}\left[\frac{ \mathcal{R} }{ (d-1)(d-2) } - \frac{ H^2 }{ (d-1)^2 } + \frac{ 1
    }{L^2 }\right] \\
     &= \frac{ 1 }{ \Omega_{k} }\left(\frac{ A[\sigma] }{ \Omega_{k} }\right)^{\frac{ 1 }{ d-1
             }}\int_{\sigma}\bm{\epsilon}\left[\frac{ \mathcal{R} }{ (d-1)(d-2) } - \frac{ H^2 }{ (d-1)^2 } + \frac{ 1
    }{L^2 }\right]
\end{aligned}
\end{equation*}
where $\bm{\epsilon}$ is the volume form on the constant$-r$ surface $\sigma$.
This is a covariant functional $\omega[\sigma, \Sigma]$. In fact, if multiply by the overall
constant $\Omega_k^{\frac{ d }{ d-1 }}$, there is no reference to which symmetry we have. 

\toclesslab\subsection{Monotonicity of the volume of Stat$_k$ with respect to mass}{app:Ik}
Assume $d>2$ and consider the regularized volume of (half a) totally geodesic slice of Stat$_k$:
\begin{equation}
\begin{aligned}
    V(r_h) = \int_{r_h}^{r_c}\dd r r^{d-1}\left[ k + r^2 - \left(\frac{ r_h }{ r }\right)^{d-2} (k +
        r_h^2) \right]^{-1/2},
\end{aligned}
\end{equation}
where we pick units where $L=1$ and divide out the overall factor of $\Omega_k$. Changing now variables
\begin{equation}
\begin{aligned}
    r = e^{t}r_h \qquad \dd r = r \dd t, \quad \tau \equiv t(r_c) = \log\left(\frac{ r_c }{ r_h }
    \right),
\end{aligned}
\end{equation}
we get
\begin{equation}
\begin{aligned}
    V(r_h) &= r_h^{d}\int_{0}^{\tau}\dd t e^{d t}\left[k + e^{2t}r_h^2 - e^{-(d-2)t}(k + r_h^2)
    \right]^{-1/2} \\
           &= r_h^{d}\int_{0}^{\tau}\dd t e^{(d-1)t}\left[k\left(e^{-2t} - e^{-dt}\right) + r_h^2 \left(
                   1 - e^{-dt } \right) \right]^{-1/2}
\end{aligned}
\end{equation}
and
\begin{equation}
\begin{aligned}
    V(e^{\alpha}r_h) &= r_h^{d}\int_{0}^{\tau-\alpha}\dd t e^{d(t + \alpha)}\left[k\left(1 -
            e^{-(d-2)t}\right) + r_h^2 e^{2\alpha} \left(
                   e^{2t} - e^{-(d-2)t } \right) \right]^{-1/2} \\
            &= r_h^{d}\int_{\alpha}^{\tau}\dd t e^{d t}\left[k\left(1 -
                    e^{-(d-2)(t-\alpha)}\right) + r_h^2 e^{2\alpha} \left(
            e^{2(t-\alpha)} - e^{-(d-2)(t-\alpha) } \right) \right]^{-1/2} \\
            &= r_h^{d}\int_{\alpha}^{\tau}\dd t e^{(d-1) t}\left[k\left(e^{-2t} -
                    e^{(d-2)\alpha }e^{-dt}\right) + r_h^2 \left(
            1 - e^{d\alpha}e^{- dt} \right) \right]^{-1/2}. 
\end{aligned}
\end{equation}
Consider now the difference in volume of a slice with horizon $e^{\alpha}r_h$ and one with $r_h$. 
Furthermore, introduce a new regulator $\epsilon \ll 1$ by modifying the denominators above with the
replacement $[\ldots] \rightarrow [\epsilon +
\ldots]$. The (rescaled and $\epsilon$-regulated) volume difference reads
\begin{equation}
\begin{aligned}
    \Delta_{\epsilon}(\alpha, r_h) \equiv \lim_{\tau \rightarrow \infty }\frac{ V_{\epsilon}(e^{
    \alpha}r_h) - V_{\epsilon}(r_h) }{ r_h^{d} }.
\end{aligned}
\end{equation}
We have that $\partial_{r_h}V(r_h)$,  $\partial_m I_{k}(m)$ and 
$\partial_{\alpha}\Delta_{\epsilon}(\alpha, r_h)|_{\alpha=0, \epsilon=0}$ all have the same sign, so
let us focus on the latter.

We have
\begin{equation}
\begin{aligned}
    \Delta_{\epsilon}(\alpha, r_h) &= \int_{\alpha}^{\infty}\dd t e^{(d-1)t}\Big\{\left[\epsilon+ k\left(e^{-2t} -
                    e^{(d-2)\alpha }e^{-d t}\right) + r_h^2 \left(
            1 - e^{d\alpha}e^{- dt} \right) \right]^{-1/2}  \\
        & \qquad  -\left[\epsilon + k\left(e^{-2t} - e^{-d t}\right) + r_h^2 \left(
                   1 - e^{-d t } \right) \right]^{-1/2}
        \Big\} \\
        & - \int_{0}^{\alpha}\dd t e^{(d-1) t}\left[\epsilon + k\left(e^{-2t} - e^{-dt}\right) + r_h^2 \left(
                   1 - e^{-dt } \right) \right]^{-1/2},
\end{aligned}
\end{equation}
and so taking the $\alpha$-derivative we obtain
\begin{equation}
\begin{aligned}
    \partial_{\alpha}\Delta_{\epsilon}(\alpha) = \int_{\alpha}^{\infty}\dd t e^{(d-1)t} \partial_{\alpha}\{\ldots
    \} - e^{(d-1)\alpha} \epsilon^{-1/2}.
\end{aligned}
\end{equation}
Next, note that
\begin{equation}
\begin{aligned}
    \partial_{\alpha}[\ldots]^{-1/2} = \frac12 [\ldots]^{-3/2} d \left(k\frac{d-2}{d} e^{-2\alpha} +
    r_h^2 \right) e^{\alpha d} e^{-d t}.
\end{aligned}
\end{equation}
Defining now $\mu=k\frac{ d-2 }{d} + r_h^2$,  we find that
\begin{equation}\label{eq:deltaint}
\begin{aligned}
    \partial_{\alpha}\Delta_{\epsilon}|_{\alpha=0} & =-\epsilon^{-1/2} 
    + \frac{ d\mu}{ 2  } \int_{0}^{\infty} \dd t e^{-t} \left[\epsilon + \mu \left(
            1 - e^{- dt} \right) + k\left(e^{-2t} -
    e^{-d t}\right) -k\frac{ d-2 }{ d }(1-e^{-dt}) \right]^{-3/2} \\
        & =-\epsilon^{-1/2} 
    + \frac{ d\mu}{ 2  } \int_{0}^{\infty} \dd t e^{-t} \left[\epsilon + \mu \left(
    1 - e^{- dt} \right) + k f(t) \right]^{-3/2} \\
\end{aligned}
\end{equation}
where 
\begin{equation}
\begin{aligned}
    f(t) \equiv e^{-2t}-e^{-dt} - \frac{ d-2 }{ d }\left( 1 -e^{-dt}\right).
\end{aligned}
\end{equation}
Now we must consider various cases.
\subsubsection*{The case of $k\geq 0$:}
For all $t\geq0$ we have that
\begin{equation}\label{eq:band}
\begin{aligned}
    f(t) \leq 0.
\end{aligned}
\end{equation}
Furthermore, for $k\geq 0$ we have $\mu \geq 0$ for all $r_h>0$, and so replacing $f(t)\rightarrow 0$
we get the inequality
\begin{equation}
\begin{aligned}
    \partial_{\alpha}\Delta_{\epsilon}|_{\alpha=0} & \geq  -\epsilon^{-1/2} 
    + \frac{ d\mu}{ 2  } \int_{0}^{\infty} \dd t e^{-t} \left[\epsilon + \mu \left(
            1 - e^{- dt} \right) \right]^{-3/2}.
\end{aligned}
\end{equation}
This intergral is a hypergeometric function whose expansion about $\epsilon=0$ reads
\begin{equation}
\begin{aligned}
    \partial_{\alpha}\Delta_{\epsilon}|_{\alpha=0} & \geq -\epsilon^{-1/2} 
    + \frac{ d\mu}{ 2  }\left[ \frac{ 2 }{ d \mu \sqrt{\epsilon} }- \frac{ 2\sqrt{\pi}\Gamma\left(1+\frac{ 1 }{ d } \right) }{ 
        \mu^{3/2} \Gamma\left(-\frac{ 1 }{ 2 }+\frac{ 1 }{ d } \right) } +
        \mathcal{O}(\sqrt{\epsilon})\right],
\end{aligned}
\end{equation}
giving finally that
\begin{equation}
\begin{aligned}
\partial_{\alpha}\Delta_{\epsilon}|_{\alpha=0, \epsilon=0} \geq - \frac{ d \sqrt{\pi}\Gamma\left(1+\frac{ 1 }{ d } \right) }{ 
        \mu^{1/2} \Gamma\left(-\frac{ 1 }{ 2 }+\frac{ 1 }{ d } \right) } > 0,
\end{aligned}
\end{equation}
where we use that $d>2$ so that the Gamma-function in the denominator is negative. Thus, for
a spherical or planar static black hole we have that the complexity of formation is positive and monotonically
increasing with horizon radius, and thus also mass.

\subsubsection*{The case of $k=-1$:}
Below we only show monotonicity for sufficiently large masses.

Consider $\mu>0$. Then the integral in \eqref{eq:deltaint} can be expanded
\begin{equation*}
\begin{aligned}
    J  &\equiv  \frac{ d }{ 2\sqrt{\mu}}\int_{0}^{\infty}\dd t e^{-t} \left[\epsilon \mu^{-1} + \left(
    1 - e^{- dt} \right) - \mu^{-1} f(t) \right]^{-3/2} \\
       &=  \frac{ d }{ 2\sqrt{\mu}}\sum_{n=0}^{\infty}
       c_n\mu^{-n} \int_{0}^{\infty}\dd t e^{-t} \frac{f(t)^{n}}{\left[\epsilon \mu^{-1} + \left(
       1 - e^{- dt} \right)\right]^{3/2+n}}, 
\end{aligned}
\end{equation*}
where $c_n$ are the positive coefficients appearing in
\begin{equation*}
\begin{aligned}
    \frac{ 1 }{ (1-x)^{3/2} }=\sum_{n=0}^{\infty}c_n x^n.
\end{aligned}
\end{equation*}
Note that $f(t)\sim \mathcal{O}(t^2)$ at small $t$,
so in this limit the numerator behaves as $\sim t^{2n}$ while the denominator behaves as $\sim (\epsilon +
t)^{3/2 + n}$. After carrying out the integral and sending $\epsilon\rightarrow 0$ only the $n=0$ term diverges, so after carrying out the
integral each $n\geq 1$ term
scales as $\mathcal{O}(\epsilon^0)$. Thus up to $\mathcal{O}(\epsilon)$
corrections we have
\begin{equation*}
\begin{aligned}
    J&=  \frac{ d }{ 2\sqrt{\mu}}\int_{0}^{\infty}\dd t e^{-t} \frac{1}{\left[\epsilon \mu^{-1}
       + 1 - e^{- dt}\right]^{3/2}}+
    \sum_{n=1}^{\infty}
       \frac{ c_n d \mu^{-n-\frac{ 1 }{ 2 }}}{ 2}\int_{0}^{\infty}\dd t \frac{e^{-t} f(t)^{n}}{\left[
       1 - e^{- dt}\right]^{3/2+n}}.
\end{aligned}
\end{equation*}
The first term is exactly the integral computed above, so we find
\begin{equation*}
\begin{aligned}
\partial_{\alpha}\Delta_{\epsilon}|_{\alpha=0, \epsilon=0} =  - \frac{ d \sqrt{\pi}\Gamma\left(1+\frac{ 1 }{ d } \right) }{ 
        \mu^{1/2} \Gamma\left(-\frac{ 1 }{ 2 }+\frac{ 1 }{ d } \right) } + \sum_{n=1}^{\infty}
       \frac{ c_n d \mu^{-n-\frac{ 1 }{ 2 }}}{ 2}\int_{0}^{\infty}\dd t \frac{e^{-t} f(t)^{n}}{\left[
       1 - e^{- dt}\right]^{3/2+n}}.
\end{aligned}
\end{equation*}
We can readily check that, for $t\geq 0$, we have the lower bound
\begin{equation*}
\begin{aligned}
    f(t) \geq -\frac{ d-2 }{ d }(1-e^{-d t})^2.
\end{aligned}
\end{equation*}
giving
\begin{equation*}
\begin{aligned}
    f(t)^{n} \geq \begin{cases}
        0 & n \text{ even} \\
        -(\frac{ d-2 }{ d })^{n}(1-e^{-d t})^{2n} & n \text{ odd}.
    \end{cases}
\end{aligned}
\end{equation*}
Thus we find
\begin{equation*}
\begin{aligned}
    \partial_{\alpha}\Delta_{\epsilon}|_{\alpha=0, \epsilon=0} &\geq  - \frac{ d \sqrt{\pi}\Gamma\left(1+\frac{ 1 }{ d } \right) }{ 
    \mu^{1/2} \Gamma\left(-\frac{ 1 }{ 2 }+\frac{ 1 }{ d } \right) } - \sum_{n=1, \mathrm{odd}}^{\infty}
        \frac{ c_n d \mu^{-n-\frac{ 1 }{ 2 }}}{ 2}\left(\frac{ d-2 }{ d }\right)^{n } \int_{0}^{\infty}\dd t e^{-t}\left[
       1 - e^{- dt}\right]^{n-\frac{ 3 }{ 2 }} \\
                                                               &=  - \frac{ d \sqrt{\pi}\Gamma\left(1+\frac{ 1 }{ d } \right) }{ 
        \mu^{1/2} \Gamma\left(-\frac{ 1 }{ 2 }+\frac{ 1 }{ d } \right) } - \sum_{n=1, \mathrm{odd}}^{\infty}
        \frac{ c_n d \mu^{-n-\frac{ 1 }{ 2 } }}{ 2}\left(\frac{ d-2 }{ d }\right)^{n }
        \frac{ \Gamma\left(1+\frac{ 1 }{ d }\right)\Gamma\left(-\frac{ 1 }{ 2 }+n\right) }{
        \Gamma\left(-\frac{ 1 }{ 2 }+\frac{ 1 }{ d } + n \right)}.
    \end{aligned}
\end{equation*}
This last sum can be carried out exactly and evalutes to a hypergeometric function, but the exact expression is not particularly useful.
However, it does scale like $\mathcal{O}(\mu^{-3/2})$ at large $\mu$, so since the first term which
goes as $\mathcal{O}(\mu^{-1/2})$ has a positive coefficient, this means that there
exists a $\hat{\mu}>0$ such that $\partial_{\alpha}\Delta_{\epsilon}|_{\alpha=0, \epsilon=0}\geq 0$ for all
$\mu\geq \hat{\mu}$.

\toclesslab\subsection{Computing $\dot{\mathcal{C}}$}{app:Cdot}
We study only spacetimes that have finite gravitational mass, so that the
mass \eqref{eq:omegaM} coincides the CFT energy up to the Casimir energy. In the coordinates \eqref{eq:cancoords}
this means that the following integrals must be assumed finite: 
\begin{equation}
\begin{aligned}
    \int_{}^{\infty} \dd r r^{d-1} \mathcal{E}, \qquad \int_{}^{\infty} \dd r r^{d} J_r(\rho),
\end{aligned}
\end{equation}
where $J_r$ is given by
\begin{equation}
\begin{aligned}
    J_r = 8\pi G_N n^a (\partial_r)^b T_{ab},
\end{aligned}
\end{equation}
and where $8 \pi G_N T_{ab} = R_{ab}-\frac{ 1 }{ 2 }g_{ab}R - \frac{ d(d-1) }{ 2L^2 }g_{ab}$.

We now want to express $\eta^a N_a$, as defined in Sec.~\ref{sec:easycdot}, in terms of quantities over which we have control. To do this we temporarily introduce an ADM coordinate system $x^{\mu}$ on spacetime with vanishing shift, and lapse equal to $r$:
\begin{equation}
\begin{aligned}
    \dd s^2|_{M} = g_{\mu\nu}\dd x^{\mu}\dd x^{\nu} = -r^2 \dd t^2 + h_{\alpha\beta}(t, x^{\alpha}) \dd x^{\alpha}\dd x^{\beta},
\end{aligned}
\end{equation}
where
\begin{equation}
\begin{aligned}
    \dd s^2|_{\Sigma} &=  h_{\alpha\beta}(t=0, x^{\alpha}) \dd x^{\alpha}\dd x^{\beta} = B(r) \dd r^2 + r^2 \dd
    \Omega_{k}^2, \\
    \partial_{t}h_{\alpha\beta}(x)|_{t=0} &= 2 r K_{\alpha\beta}(x).
\end{aligned}
\end{equation}
We proceed to locate the conformal boundary in a small neighborhood around $\Sigma$. This will let us determine
$\eta^a$, which is tangent to $\partial M$.  

Let us for notational convenience now take the spherically symmetric case, and let us install coordinates 
$y^{i} = (\tau, \Omega)$ on $\partial M$ and take $\partial M \cap \Sigma = \partial \Sigma$ to be located
$(r=r_c, t=\tau =0)$, where we temporarily work with a finite cutoff at $r=r_c$. We want to find embedding coordinates $(r(\tau),
t(\tau))$ for $\partial M$ so that the induced metric reads
\begin{equation}
\begin{aligned}
    \dd s^2|_{\partial M} = \gamma_{ij}\dd y^{i}\dd y^{j} = \frac{r_c^2}{L^2} (-\dd \tau^2 + L^2 \dd \Omega^2).
\end{aligned}
\end{equation}
Thus, we must solve the equations
\begin{equation}\label{eq:inducedmetric}
\begin{aligned}
    \gamma_{\tau\tau} &= g_{rr}\dot{r}^{2} - r^2 \dot{t}^2 = - r_c^2/L^2, \\
    \gamma_{\theta\theta} &= g_{\theta\theta} = r_c^2,
\end{aligned}
\end{equation}
where dots are derivatives with respect to $\tau$.
In fact we will only need $\dot{r}(0)$ and $\dot{\tau}(0)$.
Taking a derivative of the second equation and then setting $\tau=0$ gives the pair of equations
\begin{equation}
\begin{aligned}
    \dot{r}(0)^2 g_{rr}(0, r_c) -r_c^2 \dot{t}(0)^2&= - r_c^2/L^2,  \\
     \partial_{t} g_{\theta\theta}(0, r_c) \dot{t}(0) + \partial_r g_{\theta\theta}(0, r_c) \dot{r}(0) &= 0,
\end{aligned}
\end{equation}
or
\begin{equation}
\begin{aligned}
    \dot{r}(0)^2 B(r_c) -r_c^2 \dot{t}(0)^2&= - r_c^2/L^2, \\
     2 r_c K_{\theta\theta}(r_c) \dot{t}(0) + 2 r_c \dot{r}(0) &= 0,
\end{aligned}
\end{equation}
which are easily solved to give 
\begin{equation}
\begin{aligned}
    \dot{t}(0) &= \frac{ 1 }{ \sqrt{1 - B K_{\theta\theta}^2 r^{-2}L^2}  }\Big|_{r=r_c}, \\
    \dot{r}(0) &= -\frac{ K_{\theta\theta} }{ \sqrt{1 - B K_{\theta\theta}^2 r^{-2}L^2}  }\Big|_{r=r_c}, \\
\end{aligned}
\end{equation}
where we choose the branch where $\dot{t}>0$. Remembering from Eq.~\eqref{eq:Ksym} that
\begin{equation}
    \begin{aligned}\label{eq:Ktheta}
    K_{\theta\theta} = - \frac{ K_{rr}r^2 }{ (d-1) B(r) },
\end{aligned}
\end{equation}
we find that the term $B K_{\theta\theta}^2 r^{-2}$ at large $r$ behaves like
\begin{equation}
\begin{aligned}
    BK_{\theta\theta}^2r^{-2} \sim \frac{K_{rr}^2 r^2}{B} \sim K_{rr}^2 r^4 L^{-2}
\end{aligned}
\end{equation}
The solution of \eqref{eq:constraintEqsJ} gives that $\mathcal{K}(r) \sim \mathcal{O}(r^{-d})$, so we have
\begin{equation}
\begin{aligned}
    K_{rr} = B(r) \mathcal{K}(r) \sim \frac{ 1 }{ r^{d+2} },
\end{aligned}
\end{equation}
implying that $BK_{\theta\theta}^2r^{-2}\sim \mathcal{O}\left(r^{-2d}\right)$.
We thus find
\begin{equation*}
\begin{aligned}
    \dot{t}(0) &= 1 + \mathcal{O}\left(r_c^{-2d} \right), \\
    \dot{r}(0) &= \frac{ K_{rr} r^2 }{(d-1)B(r)  }\left(1 + \mathcal{O}\left(r^{-2d} \right) \right)\Big|_{r=r_c}.
\end{aligned}
\end{equation*}
In our spacetime coordinates, $\eta^a$ and $N^a$ reads
\begin{equation*}
\begin{aligned}
    \eta^a &= (\partial_{\tau})^a = \dot{t}(0)(\partial_{t})^a + \dot{r}(0)(\partial_{r})^a, \\
    N^{a} &= \frac{ 1 }{ \sqrt{B(r_c)} } (\partial_{r})^a,
\end{aligned}
\end{equation*}
giving finally that
\begin{equation}
\begin{aligned}
    \eta^a N_a &= \sqrt{B(r_c)}\dot{r}(0) = \frac{ K_{rr}r^2 }{ (d-1)\sqrt{B(r)} }\left[1 + \mathcal{O}\left(r^{-2d}
    \right) \right]\Big|_{r=r_c}.
\end{aligned}
\end{equation}
This implies
\begin{equation}
\begin{aligned}
    \frac{ \dd V[\Sigma] }{ \dd \tau } &=  \frac{\Omega_{k }
    r^{d+1} K_{rr} }{ (d-1) \sqrt{B(r)} } \left[ 1 + \mathcal{O}(r^{-2d}) \right]|_{r=r_c} \\
    &= \frac{\Omega_{k}r^{d+2}K_{rr}}{d-1}\left[1 + \mathcal{O}(r^{-2}, \omega(r) r^{-d}) \right]|_{r=r_c}
\end{aligned}
\end{equation}
This is clearly finite when $\omega(\infty)$ is finite. Furthermore $r^{d+2} K_{rr}\sim
\mathcal{O}(1)$ and so we have
\begin{equation}
\begin{aligned}
    \frac{ \dd \CV }{ \dd \tau } = \frac{ \Omega_{k} }{ G_N L (d-1) } \lim_{r \rightarrow \infty} r^{d+2}K_{rr}
    = \frac{ \Omega_{k} }{ G_N L (d-1) }\lim_{r \rightarrow \infty}r^{d}\mathcal{K}(r).
\end{aligned}
\end{equation}
Note that the above derivations holds upon the replacement of $\dd\Omega^2$ with $\dd \bm{x}^2$, 
and $\theta$ with one of the Cartesian boundary directions. 
The hyperbolic case requires minor modifications, but is not needed for us.

If we plug in the solution of Eq.~\eqref{eq:constraintEqsJ} into the expression for $\dot{\mathcal{C}}$, we find
\begin{equation}
\begin{aligned}
    \frac{ \dd \CV}{ \dd \tau }&= \frac{ \Omega_{k} }{ G_N L (d-1)  }\left[\mathcal{K}(r_0)r_0^{d} +
    \int_{r_0}^{\infty}\dd r r^{d}J_r(r) \right] \\
    &=  \frac{ 1 }{ G_N L (d-1)  }\left[r_0 \int_{\partial \Sigma_{\rm out}(r_0)} r^{\alpha}r^{\beta}K_{\alpha\beta} + 8\pi G_N
    \int_{\Sigma_{\rm out}(r_0)} r T_{\alpha\beta}n^{\alpha}r^{\beta} \right], \\
\end{aligned}
\end{equation}
If we take $r_0$ to be at an outermost stationary surface, $r_0=r_{\rm throat}$, where $\mathcal{K}(r_0) =
\sqrt{2\theta_{k}\theta_{\ell}}$, we find the expression
\begin{equation}
    \begin{aligned}
    \frac{ \dd \CV }{ \dd \tau }
    &=  \frac{ 1 }{ G_N L (d-1)  }\left[r_{\rm throat} \int_{\partial \Sigma_{\rm out}(r_{\rm throat})}\sqrt{2\theta_{k}\theta_{\ell}} + 8\pi G_N
    \int_{\Sigma_{\rm out}(r_{\rm throat})} r T_{\alpha\beta}n^{\alpha}r^{\beta} \right]. \\
\end{aligned}
\end{equation}

\toclesslab\subsection{A simple purification}{sec:diamondappendix}
\subsubsection*{A spherically symmetric simplification}
Consider a spherically symmetric bulk causal diamond $D$ anchored at a bulk subregion $R$ that is just a
boundary sphere. $D$ could be an entanglement wedge (intersected with the Wheeler-de-Witt patch of
$R$ -- i.e. the domain of dependence is taken in the strict bulk sense), but need not be. 
Let $\Sigma$ be the maximal volume slice of $D$. Then the edge of $D$, which equals $\partial \Sigma$, has the intrinsic metric of a
sphere. We now want to find a surface $\Sigma_0$ and initial data $(\Sigma_0, h, K_{ab})$ such that $\Sigma\cup
\Sigma_0$ (1) is complete, (2) is an extremal hypersurface, (3) satisfies our energy conditions and (4) has minimal
volume consistent with (1)--(3). Let us only consider
$\Sigma_0$ to be spherically symmetric 
-- in principle there might be a $\Sigma_0$ with even less volume satisfying our above conditions which breaks spherical
symmetry. 

If it is compatible with energy conditions to have $\Sigma_0$ be compact, i.e.\ without a conformal
boundary, then this is always the correct choice, since adding a second conformal boundary
introduces new UV divergences. Let us thus assume $\Sigma_0$ is one-sided and later return to when
this is consistent. Our computations in Sec.~\ref{sec:withsym} showing that pure AdS was the least complex 
space did not rely on $r=\infty$ being the upper
bound of the volume integral. Let $\Gamma$ be some compact spherically symmetric extremal spacelike
manifold with a spherical boundary of radius $r$. Then
        \begin{equation}
        \begin{aligned}
            V[\Gamma] &= \vol[S^{d-1}]\int_{0}^{r} \frac{ \dd \rho \rho^{d-1} }{ \sqrt{1 + \rho^2/L^2 - \frac{ \omega(\rho)
            }{ \rho^{d-2} }} }\\
            &\geq \vol[S^{d-1}] \int_{0}^{r} \frac{ \dd \rho \rho^{d-1} }{ \sqrt{1 + \rho^2/L^2} } \\
            &= V[\Sigma_0] = \frac{
            \vol[S^{d-1}]r^{d} }{ d }{}_2 F_{1}\left(\frac{ 1 }{ 2 }, \frac{ d }{ 2 }, \frac{ 2+d }{ 2 }, - \frac{ r^2 }{
            L^2 } \right),
        \end{aligned}
        \end{equation}
        where $\Sigma_0$ is the ball in the hyperbolic plane with boundary of area radius $r$. 
        Here we used that $\omega(r)\geq 0$,
        which is shown in Lemma~\ref{lem:minorlem}.
        Thus, of all spherically symmetric manifolds with boundary of area radius $r$,
        a ball in hyperbolic space has the least volume -- assuming the WCC. Thus, we should just choose the completion $\Sigma_0$ of
        $\Sigma$ to be a ball in hyperbolic space, provided that the energy shell induced at the
        junction between $\Sigma$ and $\Sigma_0$ is compatible with the WCC
        and the NEC, which we return to in a moment. See Fig.~\ref{fig:simplified} for an
        illustration of the new spacetime.
\begin{figure}
\centering
\includegraphics[width=0.3\textwidth]{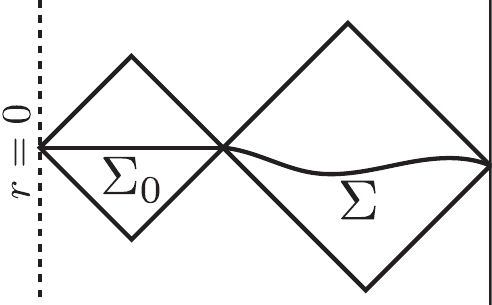}
    \caption{Construction of the simplified spacetime. The slice $\Sigma_0$ has intrinsic and extrinsic
    geometry of a static slice of AdS.}
\label{fig:simplified}
\end{figure}

Now let us prove that any other spacetime with spherical symmetry has higher complexity: 
Let $(\tilde{M}, \tilde{g})$ be any spherically symmetric spacetime containing $D$, and let
$\tilde{R}$ be a boundary Cauchy slice containing $R$.
If $R$ is a strict subset of $\tilde{R}$, then the maximal volume slice anchored at $\tilde{R}$ trivially has larger volume due to
additional UV-divergences. Thus, assume $R=\tilde{R}$, meaning that $(\tilde{M},
\tilde{g})$ is one-sided. Let $\Gamma$ be an extremal Cauchy slice of the inner wedge $I_W[\partial \Sigma]$  of $\partial
        \Sigma$ in $(\tilde{M}, \tilde{g})$ -- see Fig.~\ref{fig:proof_fig}. 
        \begin{figure}
        \centering
        \includegraphics[width=0.4\textwidth]{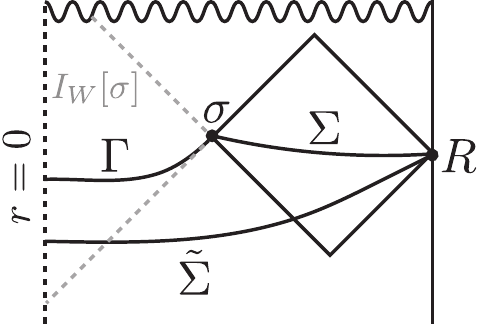}
            \caption{A spherically symmetric spacetime $(\tilde{M}, \tilde{g})$ containing $D[\Sigma]$. The maximal volume slice $\tilde{\Sigma}$
            must have volume greater or equal to \eqref{eq:CPcomp}.  }
        \label{fig:proof_fig}
        \end{figure}
        Let $\tilde{\Sigma}$ be the maximal volume slice of $(\tilde{M}, \tilde{g})$ anchored at
        $R$. Since $\Sigma \cup \Gamma$ is a complete but non-extremal slice due to the kink at the joining, we
        have that its volume is less than the volume of $\tilde{\Sigma}$:
        \begin{equation}
        \begin{aligned}
             V[\tilde{\Sigma}]  \geq V[\Sigma] + V[\Gamma].
        \end{aligned}
        \end{equation}
        But $\Gamma$ is a spherically symmetric compact manifold with boundary being a sphere, so it
        has greater or equal volume than the ball $\Sigma_0$ of the hyperbolic plane with the same
        area of its boundary. Thus
        \begin{equation}
        \begin{aligned}
            V[\tilde{\Sigma}]  \geq V[\Sigma] + V[\Gamma] \geq V[\Sigma] + V[\Sigma_0],
        \end{aligned}
        \end{equation}
        completing our proof.

\subsubsection*{Energy conditions at the junction}
By trying to glue $\Sigma$ to a subset of the hyperbolic plane $\Sigma_{0}$, the thing that can go
wrong is that we might be forced to violate the WCC or the NEC, since a
junction supports a distributional Ricci tensor that might or might not respect our energy
conditions. Below we show that the WCC (and thus the NEC) is preserved by the energy shell at the junction if $K_{\alpha\beta}=0$ on $\Sigma_0$ and if
\begin{equation}\label{eq:thetacondition}
    \sqrt{2} \theta_{k}[\partial \Sigma] \leq H_{\Sigma_0}[\partial \Sigma_0],
    \qquad \sqrt{2} \theta_{\ell}[\partial \Sigma] \geq - H_{\Sigma_0}[\partial \Sigma_0],
\end{equation}
where $k, \ell$ are the future directed normals to $\partial \Sigma$ with $k$ pointing towards $R$, and with
\begin{equation}\label{eq:canNorm}
\begin{aligned}
    k \cdot n = \ell \cdot n = - \frac{ 1 }{ \sqrt{2} },
\end{aligned}
\end{equation}
where $n^a$ is the future unit normal to $\Sigma$.
Here $H_{\Sigma_0}[\partial \Sigma_0]$ is the mean curvature in $\Sigma_0$ with respect to the normal of $\partial \Sigma_0$ 
pointing toward $\Sigma$.
For the case of spherical symmetry the conditions read
\begin{equation}
        \begin{aligned}\label{eq:simpsphere}
        \theta_{k}[\partial \Sigma] \leq \frac{ d-1 }{\sqrt{2} }\sqrt{\frac{ 1 }{ r^2 } + \frac{ 1 }{ L^2 }}, \qquad
        \theta_{\ell}[\partial \Sigma] \geq
        - \frac{ d-1 }{ \sqrt{2} }\sqrt{\frac{ 1 }{ r^2 } + \frac{ 1 }{ L^2 }},
\end{aligned}
\end{equation}
These are clearly satisfied for an HRT surface, since $\theta_k[\partial \Sigma] = \theta_{\ell}[\partial \Sigma]=0$.

Let us now derive these conditions. Consider a surface $\sigma$ contained in an extremal hypersurface $\Sigma$. Consider the null vectors
\begin{equation}
    k^a = \frac{ 1 }{ \sqrt{2} }(n^a + r^a), \quad \ell^a = \frac{ 1 }{ \sqrt{2} }(n^a - r^a),
\end{equation}
where we $n^a$ is the future timelike unit normal to $\Sigma$ and $r^a$ the spacelike outwards normal to
$\sigma$ contained in $\Sigma$. Letting $P^{ab}$ be the projector $\Sigma$ and $h^{ab}$ the projector on $\sigma$, we find
\begin{equation}\label{eq:thetaslice}
\begin{aligned}
    \sqrt{2} \theta_{k}
                &= h^{ab}\nabla_{a}(n_b + r_{b}) 
                = (P^{ab}-r^a r^b) (\nabla_{a}n_b  + \nabla_a r_{b}) \\
                &= K + P^{ab}\nabla_a r_b - r^a r^b \nabla_a n_b 
                & =K  + H_{\Sigma}[\sigma] - r^{\alpha}r^{\beta} K_{\alpha \beta}.
\end{aligned}
\end{equation}
A similar calculation shows that 
\begin{equation}\label{eq:thetaslice2}
\sqrt{2} \theta_{\ell} = K - H_{\Sigma}[\sigma]- r^{\alpha}r^{\beta} K_{\alpha \beta}.
\end{equation}
Thus,
\begin{equation}\label{eq:Htheta}
    H_{\Sigma}[\sigma] = \frac{ 1 }{ \sqrt{2} }\left[\theta_{k_{\Sigma}}- \theta_{\ell_{\Sigma}}\right].
\end{equation}
If $K_{\alpha\beta}=0$, which is the case for a constant-$t$ slice of AdS, then
\begin{equation}
\begin{aligned}
\theta_{k} = - \theta_{\ell},
\end{aligned}
\end{equation}
and 
\begin{equation}
    H_{\Sigma}[\sigma] = \sqrt{2}\theta_{k} = -\sqrt{2} \theta_{\ell}.
\end{equation}

Let us now consider the junction conditions. In \cite{Poisson}, it is shown that for a null junction between two spacetime regions $M^+$ and
$M^-$ generated by $k^a$, we have 
\begin{equation}
    R_{\ell \ell}|_{\rm singular} = \frac{ \delta(\tau) }{ -k\cdot n }\left(\theta_{\ell}^{-} - \theta_{\ell}^+ \right),
\end{equation}
where $n^a$ is tangent to a time-like congruence crossing the junction and $\tau$ the proper time from the
junction along the congruence generated by $n^a$. For this formula to be valid we must have
\begin{equation}
    [n\cdot k] \equiv n\cdot k|_{+} - n\cdot k|_{-} = 0.
\end{equation}
The region $M^+$ is the region to the future of the junction and $M^-$ to the past. 
Taking $n^a$ to be future directed, and taking $k^a$ and to be future-directed and to point towards
$R$, we have that the $M^-$-region contains $\Sigma$. Thus the NEC demands
\begin{equation}
    \theta_{\ell}[\partial \Sigma] - \theta_{\ell}[\partial \Sigma_0] \geq 0.
\end{equation}

Similarly, for a null-junction generated by $\ell$ we find
\begin{equation}
    R_{kk}|_{\rm sing} = \frac{ \delta(\tau) }{ -\ell \cdot n }\left(\theta_{k}^{-} - \theta_{k}^+ \right),
\end{equation}
where $[\ell \cdot n] =0$.
This time however, the $M^-$-region contains $\Sigma_{0}$, and so the NEC demands
\begin{equation}
    \theta_{k}[\partial \Sigma_0] - \theta_{k}[\partial \Sigma] \geq 0.
\end{equation}

Finally let us now consider what (non-null) energy shock is required. We do this by directly applying the spacelike junction conditions in the
Riemannian setting, forgetting about the ambient spacetime. Define 
\begin{equation}
    [A] = A|_{\rm out} - A|_{\rm in}.
\end{equation}
The Riemannian Einstein tensor is given by \cite{Poisson}
\begin{equation}
    G_{\alpha\beta}= \text{non-singular} + \delta(s) S_{\alpha \beta},
\end{equation}
where $s$ is the proper distance from $\sigma$ and 
\begin{equation}
    S_{\alpha\beta} = -[H_{\alpha\beta}] + [H] h_{\alpha\beta},
\end{equation}
where we remind that $h_{\alpha\beta}$ is the intrinsic metric on $\Sigma$ and $H_{\alpha\beta}$ the extrinsic curvature of $\sigma$
in $\Sigma$ with respect the outwards normal $r^\alpha$ (pointing towards $R$). This computation is purely geometric and does not assume Einstein's equations.
Taking the trace, we thus find
\begin{equation}
    R(h)\left(1 - \frac{ d }{ 2 } \right)  = \text{non-singular} + \delta(s) (d-2) [H].
\end{equation}
Intergrating $R$ on a small curve tangent $r^a$ accross $\sigma$, we find
\begin{equation}
    \int_{\delta}^{-\delta} \dd s R(h) = -2[H] + \mathcal{O}(\delta).
\end{equation}
Applying the Gauss-Codazzi equation, this gives
\begin{equation}
    \int_{-\delta}^{\delta} \dd \ell (2 \mathcal{E}+ K^{ab} K_{ab}) = -2[H] + \mathcal{O}(\delta).
\end{equation}
Thus, for an arbitrarily small $\delta>0$ the WCC dictates that
\begin{equation}
    \int_{-\delta}^{\delta} \dd \ell \mathcal{E}= H|_{\rm in}- H|_{\rm out} - \int \frac{ 1 }{ 2
    }K^{ab} K_{ab} \geq 0.
\end{equation}
It appears likely that $K_{ab}K^{ab}$ should not make any contribution to this integra in general, since $D_{\alpha}K^{\alpha\beta}=J^{\beta}$ should be solvable with a Green's function 
that smears the potential delta function in $J$, causing $K_{\alpha\beta}$ to be merely discontinous rather than distributional, and so 
\begin{equation}
H|_{\rm in} \geq H_{\rm out}.
\end{equation}
should be a sufficient condition for a positive energy shell. For spherical, planar and hyperbolic symmetry we can
explicitly check that this indeed is exactly what happens. By formula \eqref{eq:Htheta} we see that the constraint on $H$ reduces to
\begin{equation}
    \theta_{k}[\partial \Sigma_0] - \theta_{\ell}[\partial \Sigma_0] \geq \theta_{k}[\partial \Sigma] - \theta_{\ell}[\partial \Sigma],
\end{equation}
which is true when conditions on the NEC holds.

\bibliographystyle{jhep}
\bibliography{all}

\end{document}